\definecolor{mycolor}{rgb}
{0.796, 0.255, 0.329}
\tikzstyle{vertex}=[circle, draw, inner sep=2pt, fill=white]
\newcommand{\e}{{\varepsilon}}
\newcommand{\E}{{\mathbb E}}
\newcommand{\C}{{\mathbb C}}
\newcommand{\R}{{\mathbb R}}
\newcommand{\N}{{\mathbb N}}
\newcommand{\Y}{{\mathbb Y}}
\newcommand{\Z}{{\mathbb Z}}
\newcommand{\Gcal}{{\mathcal G}}
\newcommand{\Ical}{{\mathcal I}}
\newcommand{\Jcal}{{\mathcal J}}
\newcommand{\Lcal}{{\mathcal L}}
\newcommand{\ostar}{{\,\overline \star\,}}
\newcommand{\red}{\color{green!70!black}}
\newcommand{\W}{\mathbb W}
\newcommand{\fdot}{{\,\cdot\,}}
\newtheorem{theorem}{Theorem}
\newtheorem{condition}{Condition}
\newtheorem{corollary}[theorem]{Corollary}
\theoremstyle{definition}
\newtheorem{definition}[theorem]{Definition}
\newtheorem{remark}[theorem]{Remark}
\newtheorem{example}[theorem]{Example}
\newtheorem{lemma}[theorem]{Lemma}
\numberwithin{equation}{section}
\numberwithin{theorem}{section}
\begin{document}

\title{\textbf{Local signature-based expansions}\thanks{We are grateful to Christa Cuchiero (discussant) and, for their comments, to conference participants in the XXV Workshop on Quantitative Finance (Bologna, April 11-13 2024), the Barcelona Workshop in Financial Econometrics (Barcelona, May 30-31 2024), the $6^{th}$ \enquote{Quantitative Finance and Financial Econometrics} Conference (Marseille, June 4-7 2024), the $4^{th}$ \enquote{High Voltage Econometrics} Conference (Lecce, October 4-5 2024) and the \enquote{Financial Econometrics meets Machine Learning} Conference (Lugano, November 1-2 2024).}}
\author{Federico M. Bandi{\thanks{\noindent Johns Hopkins University, Carey Business School, 555 Pennsylvania Avenue, Washington, DC 20001, USA;  e-mail:\   fbandi1@jhu.edu.}} \and Roberto Ren\`o{\thanks{\noindent Essec Business School, 3 Av. Bernard Hirsch, 95000 Cergy, France;  e-mail:\ reno@essec.edu. ~}}\and Sara Svaluto-Ferro{\thanks{\noindent Department of Economics, University of Verona, Via Cantarane 24, 37129 Verona, Italy;  e-mail:\ sara.svalutoferro@univr.it. ~}}
}
\date{\today}

\maketitle

\begin{abstract}
We study the local (in time) expansion of a continuous-time process and its conditional moments, including the process' characteristic function. The expansions are conducted by using the properties of the (time-extended) \textit{It\^o signature}, a tractable basis composed of iterated integrals of the driving deterministic and stochastic signals: time, multiple correlated Brownian motions and multiple correlated compound Poisson processes. We show that these properties are conducive to \textit{automated} expansions to \textit{any} order with \textit{explicit} coefficients and, therefore, to stochastic representations in which asymptotics can be conducted for a shrinking time ($t \rightarrow 0$), as in the extant continuous-time econometrics literature, but, also, for a fixed time (such that $t<1$) with a diverging expansion order. The latter design opens up novel opportunities for identifying deep characteristics of the assumed process. 

\vspace{0.5cm}

\noindent\textbf{Keywords:} Brownian motion, compound Poisson process, It\^o signature. \\
\noindent \textbf{JEL Classifications:} C18, C22.

\end{abstract}

\pagebreak


\section{Introduction}

We discuss the local (in time) expansion of a stochastic process and its conditional moments. We study a \textit{triplet} of expansions. First, we expand the process itself. Second, we expand \enquote{regular} moments of the process, i.e., conditional expected functions of the process with well-behaved derivatives at time 0. Finally, we turn to \enquote{irregular} moments, i.e., conditional expected functions with unbounded derivatives at time 0. A key example of the latter is the characteristic function of the \textit{standardized} process.

We do so by exploiting the properties of the (time-extended) \textit{It\^o signature} of the process' driving shocks, a tractable basis composed of iterated integrals of the driving deterministic and random signals (namely, time, correlated Brownian motions and correlated Poisson processes) whose inherent algebraic structure will be documented to lead to automatic expansions to any order.

Assume, for simplicity, an It\^o semimartingale $(X_t)_{t\in[0,T]}$ driven by a single Brownian motion $(W_t)_{t\in [0,T]},$ but general specifications with multiple Brownian shocks and compound Poisson shocks will be allowed in what follows. A process $(X_t)_{t\in[0,T]}$ is $(W,t)$-differentiable if it admits the representation
\begin{equation}
X_t=X_0+\int_0^t c_0(s) ds +\int_0^t c_1(s) dW_s,
\end{equation}
for square integrable processes $(c_0(t))_{t\in[0,T]}$ and $(c_1(t))_{t\in[0,T]},$ the first-layer process' \enquote{characteristics}. Should these two (stochastic) characteristics feature the same representation as $(X_t)_{t\in[0,T]}$ with square-integrable processes $c_{00}(t), c_{10}(t)$ and $c_{01}(t), c_{11}(t)$ (the second-layer process' \enquote{characteristics}), respectively, then $(X_t)_{t\in [0,T]}$ would be viewed as being twice $(W,t)$-differentiable, and so on for higher orders.

Given $n+1$ $(W,t)$-differentiability, by iterating stochastic integrals, we show that $(X_t)_{t\in[0,T]}$ can, equivalently, be written as
\begin{equation}\label{first1}
X_t = \langle c, \widehat \W_t\rangle + \e_n(t),
\end{equation} 
where the first term on the right-hand side of Eq.~(\ref{first1}) is the inner product between a vector $c$ containing time-0 values of all characteristics (up to a generic depth $n$), i.e.,
\begin{equation}
c = (S_0, c_0(0), c_1(0), c_{00}(0),  c_{10}(0), c_{01}(0),  c_{11}(0), ...)
\end{equation}
and a vector of progressively increasing (up to depth $n$) iterated integrals
\begin{equation}
\widehat \W_t = \left(1, \int_0^t ds, \int_0^t dW_s, \int_0^t s ds,   \int_0^t W_s ds, \int_0^t s dW_s, \int_0^t W_s dW_s, ... \right),
\end{equation}
i.e., the truncated (to the order $n$) time-extended \textit{It\^o signature} of $W$. The second term on the right-hand side of Eq.~(\ref{first1}) is an approximation error whose statistical order will be made explicit.

The elements of the time-extended \textit{It\^o signature} satisfy a key property: their products can always be expressed as linear maps of other elements of the signature. We formalize this property and use it extensively in our proofs. We employ it, in particular, to transition from an expansion of the process (as in Eq.~(\ref{first1})) to expansions of the local moments of the process, both for the case in which nonlinear transformations of the process have bounded derivatives at zero (the \enquote{regular} case) and for the case when these transformations have unbounded derivatives at zero (the \enquote{irregular} case). As we emphasize below, the irregular case - which includes the characteristic function of the \textit{standardized} process - addresses the recent continuous-time econometrics literature directly.  

The notion of \textit{It\^o signature} traces back to \cite{C:57,C:77} and plays an important role in the context of rough path theory, initiated by \cite{L:98}. The usefulness of the concept is easily justified by existing universal approximation theorems. It is known that continuous (relative to specific variation distances) functionals of continuous as well as c\`adl\`ag paths can be approximated on compact sets of paths by linear functionals of the time-extended \textit{It\^o signature} (c.f. \citealp{CPS:22}).

We contribute to three largely separate (to date) lines of work: the growing recent literature in continuous-time econometrics on the use of short-term expansions for inference and pricing, the literature on signature-based methods in mathematical finance and the literature on short-term expansions in probability and process theory.

Regarding the first literature, local expansions of the process' characteristic function have been used to identify the characteristics (e.g., the volatility, $c_{1}(0),$ and the volatility of volatility, $c_{11}(0)$) of a process of interest on its first two layers (c.f., \citealp{jacod2014efficient}, \citealp{BR:17}, \citealp{todorov2019nonparametric}, \citealp{To:21}, and \citealp{chong2024volatility}) as well as to price structured financial products with short expirations (\citealp{BFR}). Expanding locally conditional moments, like the process' characteristic function, is known to be a cumbersome procedure, even when the expansion is low (first or second) order. As a result, existing expansions in the literature are low order. Low-order expansions, in turn, are empirically informative - for specific problems - but may be insufficient more generally, i.e., when time is not overly short and/or when econometric interest is in deeper characteristics that low-order expansion would not reveal. In our case, the expansions are automated and may be of arbitrary accuracy ($t^n$ with $t<1$ and any value of $n$), thereby opening up new possibilities, from inference on the deep characteristics of the c\`adl\`ag process of interest (as $n \rightarrow \infty$) to essentially exact (closed-form) solutions of short-term pricing problems. 

We turn to the second literature. Signature-based methods have experienced recent attention in mathematical finance (e.g., \citealp{PSS:20}) with applications largely focused on derivative pricing (e.g., \citealp{LNP:20}, \citealp{BHRS:21}, \citealp{CGS:23}, \citealp{AG:24}, and \citealp{CGMS:23}) and trading/portfolio allocation (e.g. \citealp{KLP:20}, \citealp{AGTZ:23}, \citealp{CM:22}, and \citealp{NCL:23}). Some studies have also exploited the potential of signature-based methods for financial machine learning (e.g., \citealp{buehler2020generating}, and \citealp{LBW:23}). We introduce the notion of \textit{It\^o signature} to the continuous-time econometrics literature, a literature that has seen considerable growth over the last two decades thanks to the increased availability of both (high-frequency) data and computational power. Specifically, we exploit the approximation properties of the \textit{It\^o signature} to derive expansions of flexible continuous-time models with c\`adl\`ag paths and their moments. Consistent with classical econometric modeling of high-frequency data, we expand over a short-time horizon, an observation which leads to our positioning within the third literature. 

Local - or short-term - expansions of moments and related quantities have been the subject of robust investigations. As \citet{bentata2012short} emphasize, their computation is central to \enquote{stochastic control problems, statistics of processes and mathematical finance.} The literature includes, among others, \citet{bismut1981mechanique}, \citet{azencott2006formule}, \citet{platen1982generalized}, \citet{leandre1987densite}, \citet{arous1989flots}, \citet{ishikawa1994asymptotic}, \citet{picard1996existence}, \citet{picard1997density},   \citet{ishikawa2001density}, \citet{ruschendorf2002expansion}, \citet{lyons2004cubature}, \citet{jacod2007asymptotic},\\ \citet{barndorff2008probability}, \citet{friz2008euler}, \citet{figueroa2009small}, \citet{marchal2009small}, \citet{figueroa2014small} and \citet{figueroa2018small}. While, in some cases (e.g., \citealp{arous1989flots}), this line of work uses - like we do - suitable properties of the algebra of iterated integrals, ($i$) we work with flexible continuous or discontinuous It\^o semimartingales rather than with continuous stochastic differential equations (SDEs) - the typical data generating process in this revealing, early work - and ($ii$) we derive linear expansions (for regular and irregular) moments with \textit{explicit} and \textit{interpretable} coefficients. Our interest is in providing methods which facilitate zooming into deep layers of the process of interest in order to extract interpretable information about its dynamics. In essence, relative to this literature, our different data generating process and objectives result in different methods of proof and representations. 
      
We proceed as follows. We begin with the continuous case, in the context of which we detail notation and methods. Section~\ref{process}, Section~\ref{regular} and Section~\ref{irregular} are devoted to the expansion of a general Brownian semimartingale and its \enquote{regular} and \enquote{irregular} moments, respectively. As an example of the proposed methods, we provide - in Section~\ref{any} - a new \textit{third-order} expansion of the characteristic function of the standardized process, a result which complements the second-order expansions in recent work and is illustrative of the automation yielded by the proposed approach. We then add compound Poisson shocks, something which necessitates some burdening of the notation on which we provide clarity. In the discontinuous case, we extend all results presented in previous sections (c.f. Section~\ref{Levy}). Section~\ref{Conclusions} offers concluding remarks. All proofs are in Appendix \ref{Appendix1} and \ref{Appendix2}.

\color{black}

\section{Expanding the process} \label{process}
In this section, we will be working with an It\^o process $(X_t)_{t\in[0,T]}$ driven by a $d$-dimensional Brownian motion $W:=(W^1,\ldots,W^d)$. Before introducing the necessary concepts and providing formalizations, we discuss the logic of our most basic expansion - that of the process itself - in a simple example with one Brownian motion $W$.

\subsection{The logic of the process expansion}\label{logic}

A ($W,t$)-differentiable process is an It\^o process $(X_t)_{t\in[0,T]}$ given by
\begin{equation}
X_t=X_0+\int_0^t c_0(s) ds +\int_0^t c_1(s) dW_s, \label{first}
\end{equation}
for some Brownian motion $(W_t)_{t\in[0,T]}$ and stochastic (drift and diffusion) processes $(c_0(t))_{t\in[0,T]}$ and $(c_1(t))_{t\in[0,T]}$. Suppose that $(c_0(t))_{t\in[0,T]}$ and $(c_1(t))_{t\in[0,T]}$ are also ($W,t$)-differentiable and, therefore, represented by
\begin{align}
c_0(s)&=c_0(0)+\int_0^s c_{00}(r) dr +\int_0^s c_{10}(r) dW_r, \label{second} \\
c_1(s)&=c_1(0)+\int_0^s c_{01}(r) dr +\int_0^s c_{11}(r) dW_r, \label{third}
\end{align}
given one extra layer of stochastic processes $(c_{ij}(t))_{t\in[0,T]}$. The original process $(X_t)_{t\in[0,T]}$ is, now, twice ($W,t$)-differentiable. Plugging Eq.~(\ref{second}) and Eq.~(\ref{third}) into Eq.~(\ref{first}), we obtain
\begin{align*}
X_t&=X_0+c_0(0)\int_0^t1ds+c_1(0)\int_0^t1dW_s\\
&+\int_0^t\int_0^s c_{00}(r) drds +\int_0^t\int_0^s c_{10}(r) dW_rds +\int_0^t\int_0^s c_{01}(r) dr dW_s +\int_0^t\int_0^s c_{11}(r) dW_rdW_s.
\end{align*}
Folding the last four double integrals into an error term $\e_1(t)$, we therefore have
\begin{equation*}
X_t=X_0+c_0(0)\int_0^t1ds+c_1(0)\int_0^t1dW_s + \e_1(t),
\end{equation*}
where $X_0+c_0(0)\int_0^t1ds+c_1(0)\int_0^t1dW_s$ is a linear combination of terms
\begin{equation}\label{eqn1}
1\qquad \int_0^t 1ds\qquad \int_0^t 1 dW_s,
\end{equation}
with weights given by the values (at time 0) of the process $(X_t)_{t\in[0,T]}$ and its drift and diffusion ($c_0$ and $c_1$). We will later show that, under suitable assumptions, the error term $\e_1(t)$ converges to 0 in probability faster than $t^{1/2},$ as $t\rightarrow 0,$ in the sense that $$\E_0[|\e_1(t)|]=o(t^{1/2}).$$
This discussion implies that the study of $(X_t)_{t\in[0,T]}$ for small $t$ can be performed by using the properties of the vector in Eq.~(\ref{eqn1}). In what follows, such vector will be characterized as containing elements of the time-extended \textit{It\^o signature} of $W$.

Using this same idea, assuming that the process $(X_t)_{t\in[0,T]}$ is ($W,t$)-differentiable three times, which is equivalent to assuming that
$$c_{ij}(s)=c_{ij}(0)+\int_0^s c_{0ij}(r) dr +\int_0^s c_{1ij}(r) dW_r,$$
we may write
\begin{eqnarray*}
X_t&=&X_0+c_0(0)\int_0^t1ds+c_1(0)\int_0^t1dW_s\\
&+&c_{00}(0) \int_0^t\int_0^s 1drds +c_{10}(0) \int_0^t\int_0^s 1dW_rds +c_{01}(0) \int_0^t\int_0^s 1 dr dW_s +c_{11}(0) \int_0^t\int_0^s 1 dW_rdW_s\\
&+&\e_2(t).
\end{eqnarray*}
The process $(X_t)_{t\in[0,T]}$ is now a linear combination of a longer vector of iterated integrals with respect to time and Brownian motion, i.e., 
\begin{equation}\label{eqn2}
1,
\int_0^t 1 ds,
\int_0^t 1 dW_s,
 \int_0^t\int_0^s 1drds, \int_0^t\int_0^s 1dW_rds, \int_0^t\int_0^s 1 dr dW_s, \int_0^t\int_0^s 1 dW_rdW_s,
\end{equation}
whose coefficients are, once more, the time-zero values of the processes at all three layers of the expansion in addition to an error term $\e_2(t)$ (a linear combination of eight triple integrals). Under conditions laid out below, the error term satisfies
 $$\E_0[|\e_2(t)|]=o(t).$$
Eq.~(\ref{eqn2}) is, again, a vector of elements of the time-extended \textit{It\^o signature} of $W$. 

It is easily seen that this same procedure may be repeated given suitable differentiability properties of the original process. To this extent, assume the process $(X_t)_{t\in[0,T]}$ is ($W,t$)-differentiable $n+1$ times. Define, for convenience, $\widehat W_t:=(t,W_t),$ so that $\widehat W_t^0=t$ and $\widehat W_t^1=W_t.$ We obtain
 \begin{eqnarray*}
 X_t&=&X_0+\sum_{k=1}^{n}\sum_{(i_1,\ldots,i_k)\in\{0,1\}^k}c_{i_1,\ldots,i_k}(0)\int_0^t\int_0^{t_{k}}\cdots \int_0^{t_{2}}1d\widehat W_{t_1}^{i_1}\cdots d\widehat W_{t_k}^{i_k} +\e_n(t),
 \end{eqnarray*}
where
$$\E_0[|\e_n(t)|]=o(t^{n/2})$$
and 
$$\int_0^t\int_0^{t_{k}}\cdots \int_0^{t_{2}}1d\widehat W_{t_1}^{i_1}\cdots d\widehat W_{t_k}^{i_k}$$
is a generic element of the time-extended \textit{It\^o signature} of $W$, a concept to which we now turn formally. 

Before doing so, we emphasize that the assumed process is rather flexible. It is more flexible than the continuous SDEs for which early stochastic Taylor expansions are established in the fundamental work of, e.g., \citet{azencott2006formule} and \citet{arous1989flots}. It is also more flexible than ubiquitous processes in the finance literature in which the characteristics are functions of unobservable state variables, like spot variance (as, e.g., in the affine tradition reviewed by \citealp{duffie2003affine}, \textit{inter alia}). In Section \ref{Levy}, we enhance flexibility further by allowing for discontinuities.

Because we do not specify a parametric structure for $c_0(t)$ and $c_1(t)$ (or any other generic element $c_{i_1,\ldots,i_k}(t)$), the process may be viewed as \textit{nonparametric}. As is the case for deterministic functions, the differentiability of every generic element $c_{i_1,\ldots,i_k}(t)$ in each layer, not surprisingly, matters. The more differentiable the overall process, the deeper one can zoom into its layers and, of course, the more higher-order expansions have the potential to provide granular information about the process dynamics through the dynamics of deep characteristics. Importantly, this information is offered by time-0 (stochastic) coefficients ($c_{i_1,\ldots,i_k}(0)$) which are, differently from those in the early stochastic Taylor expansions, \textit{interpretable}. As an example, $c_{1,1,1}(0)$ defines the volatility of the volatility of volatility, a \textit{third}-layer process which will be shown to enter the \textit{second}-order expansion of the characteristic function of the standardized process (c.f. Theorem \ref{prop2} and Corollary \ref{BR}).
\color{black}
\subsection{It\^o signature}\label{sig}
Let $(Y_t)_{t\in[0,T]}$ be a generic $d$-dimensional semimartingale whose elements are given by $Y^1,\ldots,Y^d$.

\begin{definition}
The process $(\Y_t)_{t\in[0,T]}$ expressed as
$$\Y_t:=\bigg(1,\int_0^t 1 dY^1_{t_1},\ldots,\int_0^t 1 dY^d_{t_1}, \int_0^t \int_0^{t_2}1 dY^1_{t_1}dY^1_{t_2},\int_0^t \int_0^{t_2}1 dY^1_{t_1}dY^2_{t_2},\ldots\bigg)$$
is called \textit{It\^o signature} of  $(Y_t)_{t\in[0,T]}$. We use the notation $\langle \emptyset,\Y_t\rangle=1$ and 
$$\langle I, \Y_t\rangle :=\int_0^t\int_0^{t_{n}}\cdots \int_0^{t_{2}}1dY_{t_1}^{i_1}\cdots dY_{t_n}^{i_n},$$
where $I=(i_1,\ldots,i_n)\in \{1,\ldots,d\}^n,$ in order to define its components. In light of the inner product notation $\langle .,. \rangle$, the quantities $\emptyset$ and $I$ should be interpreted as \textit{selector} vectors (i.e., vectors with a 1 in correspondence with the element to be selected and 0 everywhere else). They extract the \enquote{empty set} component of $\Y_t$ (which is 1 in this formalization) and the generic element $I$, respectively. 
\end{definition} 

To simplify the exposition, we introduce notation.
\begin{definition}\label{def2}
Given $I=(i_1,\ldots,i_n)\in \{1,\ldots,d\}^n$ and $J=(j_1,\ldots,j_m)\in \{1,\ldots,d\}^m,$ we set 
$$|I|:=n,\qquad I':=(i_1,\ldots,i_{n-1})\qquad\text{and}\qquad IJ=(i_1,\ldots,i_n,j_1,\ldots,j_m).$$
For each $k\in\{1,\ldots,d\},$ we  denote by $I(k)$ the number of $k$s in $I$.
We define $|\emptyset|:=0$, $\emptyset':=0$, and $I\emptyset:=\emptyset I:=I$. We also define
$\Ical_n:=\{I\colon |I|\leq n\},$
for each $n\geq0$ and, for $c_I\in \C,$ use the notation
$$\langle  c, \Y_t\rangle:=\sum_{I\in \Ical_n}c_I\langle I, \Y_t\rangle,$$
to signify the inner product between a vector $c$ and the corresponding elements of the signature, i.e., a linear map which associates weights $c_I$ to each element $\langle I, \Y_t\rangle.$ In what follows, vectors of indices will be denoted by a capital letter (i.e., $I,J,H$) and their components by the corresponding lower case letter (i.e., $i_1,j_1,h_1$).
\end{definition}
\begin{remark}\label{aaa}
Observe that, for each $I\neq \emptyset,$ it holds 
$$\langle I, \Y_t\rangle=\int_0^t\langle I',\Y_s\rangle dY^{i_{|I|}}_s.$$
The remark is obvious given that the elements of $\Y_t$ are progressively enlarging iterated integrals. 
\end{remark}

\subsection{Time-extended It\^o signature}\label{exsig}

The \textit{time-extension} of $\Y_t$ includes time as a signal. Rather than work with a generic $\Y_t$, it is now convenient to time-extend the \textit{It\^o signature} of a $d$-dimensional Brownian motion $W:=(W^1,\ldots,W^d)$. In this case, $I=(i_1,\ldots,i_n)\in \{0,\ldots,d\}^n$ and Definition \ref{def2} continues to apply.

Set, as earlier, $\widehat W_t:=(t,W_t)$. We denote the components of $\widehat W_t$ by
$$\widehat W_t^0:=t\qquad\text{and}\qquad\widehat W_t^i:=W_t^i,$$
and the time-extended \textit{It\^o signature} of $W$ by $\widehat\W$. As discussed above, the elements of $\widehat\W$ are the objects that will play a central role in the expansions (before discontinuities are introduced in Section~\ref{Levy}).

\begin{remark}\label{bbbrem}
Given Remark \ref{aaa}, we now have that, for each $I\neq \emptyset,$ it holds that
$$\langle I, \widehat{\W}_t\rangle = \int_0^t\langle {I'},\widehat\W_s\rangle d\widehat{W}^{i_{|I|}}_s = \int_0^t\langle {I'},\widehat\W_s\rangle1_{\{i_{|I|}=0\}} ds+\int_0^t\langle {I'},\widehat\W_s\rangle1_{\{i_{|I|} \neq 0\}} d W^{i_{|I|}}_s.$$
In words, each element of $\widehat\W_t$ has a representation in terms of an It\^o diffusion. The presence of time yields a finite variation (drift) component which is active when the local martingale component is not, and vice-versa.
\end{remark}

Next, we introduce the $\star$ operator. In essence, the $\star$ operator is defined as multiplying elements of $\widehat\W.$ 

\begin{definition}[\textbf{The $\star$ operator}]\label{lem5} For each $I$ and $J,$ define
$$\langle I\star J,\widehat\W_t\rangle : = \langle I,\widehat\W_t\rangle\langle J,\widehat\W_t\rangle.$$
\end{definition}

\noindent The following lemma discusses the fundamental property of the operator, which provides a (recursively defined) map from $I$ and $J$ to $I\star J$. 

\begin{lemma}\label{prop}
Set $\rho_{ij}:=1_{i=j>0},$ then 
$$I\star J=(I'\star J)i_{|I|}+(I\star J')j_{|J|}+\rho_{i_{|I|}j_{|J|}}(I'\star J')(0),$$
and $\emptyset\star I=I\star\emptyset=I$.
\end{lemma}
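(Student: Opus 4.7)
The plan is to prove the recursion by induction on the combined length $|I|+|J|$, using It\^o's integration-by-parts formula together with Remark \ref{bbbrem}, under the understanding that the $\star$ operator and index-concatenation extend $\C$-bilinearly and linearly to formal sums of multi-indices, so that $\langle \cdot,\widehat{\W}_t\rangle$ is linear in its first argument. The base case is immediate: when $I=\emptyset$ one has $\langle \emptyset,\widehat{\W}_t\rangle=1$, whence $\langle \emptyset,\widehat{\W}_t\rangle\langle J,\widehat{\W}_t\rangle=\langle J,\widehat{\W}_t\rangle$, in agreement with $\emptyset\star J=J$; symmetrically for $J=\emptyset$.

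For the inductive step, assume $|I|,|J|\geq 1$ and that the identity $\langle K\star L,\widehat{\W}_t\rangle = \langle K,\widehat{\W}_t\rangle\langle L,\widehat{\W}_t\rangle$ holds for all pairs $(K,L)$ with $|K|+|L|<|I|+|J|$. By Remark \ref{bbbrem}, both $\langle I,\widehat{\W}_t\rangle$ and $\langle J,\widehat{\W}_t\rangle$ are It\^o semimartingales with differentials
\[
d\langle I,\widehat{\W}_t\rangle=\langle I',\widehat{\W}_t\rangle\,d\widehat{W}^{i_{|I|}}_t,\qquad d\langle J,\widehat{\W}_t\rangle=\langle J',\widehat{\W}_t\rangle\,d\widehat{W}^{j_{|J|}}_t.
\]
Applying It\^o's product rule yields
\begin{align*}
d\bigl(\langle I,\widehat{\W}_t\rangle\langle J,\widehat{\W}_t\rangle\bigr)
&=\langle I,\widehat{\W}_t\rangle\langle J',\widehat{\W}_t\rangle\,d\widehat{W}^{j_{|J|}}_t
+\langle I',\widehat{\W}_t\rangle\langle J,\widehat{\W}_t\rangle\,d\widehat{W}^{i_{|I|}}_t\\
&\quad +\langle I',\widehat{\W}_t\rangle\langle J',\widehat{\W}_t\rangle\,d[\widehat{W}^{i_{|I|}},\widehat{W}^{j_{|J|}}]_t.
\end{align*}
Since $\widehat{W}^0_t=t$ has zero quadratic covariation with every other component, one has $d[\widehat{W}^{i_{|I|}},\widehat{W}^{j_{|J|}}]_t=\rho_{i_{|I|}j_{|J|}}\,dt=\rho_{i_{|I|}j_{|J|}}\,d\widehat{W}^0_t$, with $\rho$ exactly as in the statement.

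Integrating from $0$ to $t$, each pairwise product appearing as the integrand has strictly smaller combined multi-index length than $|I|+|J|$, so the induction hypothesis replaces them by $\langle I\star J',\widehat{\W}_s\rangle$, $\langle I'\star J,\widehat{\W}_s\rangle$, and $\langle I'\star J',\widehat{\W}_s\rangle$, respectively. Invoking Remark \ref{bbbrem} in the reverse direction then identifies each such stochastic integral with the signature element whose index is the concatenation of the inner multi-index with $j_{|J|}$, $i_{|I|}$, or $0$. Using linearity of $\langle \cdot,\widehat{\W}_t\rangle$ in its first slot, the three contributions combine to
\[
\langle I,\widehat{\W}_t\rangle\langle J,\widehat{\W}_t\rangle=\bigl\langle (I'\star J)i_{|I|}+(I\star J')j_{|J|}+\rho_{i_{|I|}j_{|J|}}(I'\star J')(0),\,\widehat{\W}_t\bigr\rangle,
\]
which, by Definition \ref{lem5}, is precisely the claimed recursive decomposition of $I\star J$.

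The main subtlety is notational rather than analytical: one must be careful that $I\star J$ is properly viewed as an element of the free $\C$-module generated by multi-indices, so that right-concatenation by a single index, scalar multiplication by $\rho_{i_{|I|}j_{|J|}}$, and the pairing $\langle\cdot,\widehat{\W}_t\rangle$ all extend bilinearly/linearly. Once that bookkeeping is in place, the analytic content is just the standard It\^o product formula plus the elementary covariation structure of $\widehat{W}$.
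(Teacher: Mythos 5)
Your proof is correct and follows essentially the same route as the paper's: induction on $|I|+|J|$, It\^o's product formula applied to the stochastic-integral representations from Remark \ref{bbbrem}, and the covariation identity $[\widehat W^{i},\widehat W^{j}]_t=\rho_{ij}\,\widehat W^0_t$. Your added remarks on extending $\star$ and concatenation bilinearly to formal sums of multi-indices are careful bookkeeping the paper leaves implicit, but the argument is the same.
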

\noindent \textbf{Proof.} See Appendix \ref{Appendix1}.

\vspace{0.5cm}

It is easily seen that the expression is due to It\^o's Lemma. Because of the centrality of the result in Lemma \ref{prop}, we provide three examples, starting with a rather simple case in which we square the time component of $\widehat\W_t$, i.e., the element $\langle 0,\widehat\W_t \rangle = t.$ 

\begin{example}[\textbf{The case $0\star0$}]
Given Lemma \ref{prop}, we have $0\star0=2(00)$. Thus, 
$$\langle 0,\widehat\W_t\rangle\langle 0,\widehat\W_t\rangle
=\langle 0 \star 0, \W_t\rangle 
=2\langle 00 ,\widehat\W_t\rangle,$$
where, again, the first equality is only definitional. Since $\langle 0,\widehat\W_t\rangle\langle 0,\widehat\W_t\rangle = t^2$ and $2\langle 00 ,\widehat\W_t\rangle = 2 \int_0^tsds,$ we have
$$t^2 = 2 \int_0^tsds,$$ which is obviously verified based on simple calculus (or It\^o's Lemma). 
\end{example}
\begin{example}[\textbf{The case $1\star1$}]\label{uno}
Given Lemma \ref{prop}, we have $1\star 1 = 2(11)+(0).$ Hence,
$$\langle 1,\widehat\W_t\rangle\langle 1,\widehat\W_t\rangle
=\langle 1 \star 1, \W_t\rangle 
=2\langle 11 ,\widehat\W_t\rangle + \langle 0 ,\widehat\W_t\rangle.$$
Since $\langle 1,\widehat\W_t\rangle\langle 1,\widehat\W_t\rangle = W_t^2$, $2\langle 11 ,\widehat\W_t\rangle = 2 \int_0^tW_sdW_s$ and $\langle 0,\widehat\W_t\rangle = t,$ we have
$$W^2_t =2\int_0^t W_sdW_s+t,$$
\end{example}\label{ex4}
\noindent which is just an application of It\^o's Lemma.

\begin{example}[\textbf{The case $11\star1$}]\label{due}
This is a case in which the $\star$ operator multiplies sets of different cardinality. The result is a right-hand side variable which is also expressed in terms of the $\star$ operator itself and, therefore, requires a simple recursion. Given Lemma \ref{prop}, we have $11\star 1=(1\star1)1 + 111 + 10$. This is, of course, the same as $11\star 1=3(111) +01+10$ in light of the previous example. Thus,
 
$$\langle 11,\widehat\W_t\rangle\langle 1,\widehat\W_t\rangle
=\langle 11 \star 1, \W_t\rangle 
=3\langle 111 ,\widehat\W_t\rangle + \langle 01 ,\widehat\W_t\rangle+\langle 10 ,\widehat\W_t\rangle,$$
where, again, the first equality is only definitional. Since $\langle 11,\widehat\W_t\rangle\langle 1,\widehat\W_t\rangle = W_t\int_0^tW_sdW_s$, $\langle 111 ,\widehat\W_t\rangle = \int_0^t \left(\int_0^s W_u dW_u \right)dW_s,$ $\langle 01 ,\widehat\W_t\rangle = \int^t_0sdW_s$ and $\langle 10 ,\widehat\W_t\rangle = \int^t_0W_sds,$ we have
$$W_t\int_0^tW_sdW_s = 3 \int_0^t \left(\int_0^s W_u dW_u \right)dW_s + \int^t_0sdW_s + \int^t_0W_sds,$$ which is, once more, verified based on It\^o's Lemma. 
\end{example}

\begin{remark}
Lemma \ref{prop} is analogous, in our context, to the \textit{shuffle product} routinely used in the signature-based literature in mathematical finance (c.f., e.g., \citealp{CGS:23}). The difference between the star product in Lemma \ref{prop} and the shuffle product (in, e.g., Definition 2.2 of \citealp{CGS:23}) results from our reliance on It\^o integrals, as typically done in continuous-time econometrics, rather than on Stratonovich integrals, which are common in the signature-based literature. 
\end{remark}

\color{black}

In essence, as made clear by Lemma \ref{prop} and the three examples above, the time-extended \textit{It\^o signature} is such that every polynomial in the signature elements may be expressed as a linear function of other elements of the signature. This feature will be central to the nonparametric expansion of moments to any order in the following sections. Before turning to moments, however, we present the expansion of the process for $W:=(W^0,\ldots,W^d)$, the results in Subsection~\ref{logic} being for the case of a scalar $W$ only. In the process, we formalize the properties of the error term $\e_n(t).$  

\subsection{Back to the process expansion}\label{proexp}

Consider a $n+1$ ($W,t$)-differentiable It\^o process $(X_t)_{t\in[0,T]}$ driven by a $d$-dimensional Brownian motion $W:=(W^1,\ldots,W^d).$ Using the logic in Subsection~\ref{logic} and the notational conventions in Subsection~\ref{sig} and Subsection~\ref{exsig}, we have
\begin{equation}\label{exp1}
X_t=\sum_{I\in \Ical^n}c_{I}\langle I,\widehat\W_t\rangle+\e_n(t) =\langle c,\widehat\W_t\rangle+\e_n(t),
\end{equation}
where $(\e_n(t))_{t\in[0,T]}$ is an error term given by
\begin{eqnarray}\label{error}
\e_n(t)=\sum_{|I|=n+1}
 \int_0^t\int_0^{t_{n+1}}\cdots\int_0^{t_{2}} c_{I}(t_1) d\widehat W_{t_{1}}^{i_{1}}\cdots d\widehat W_{t_{n+1}}^{i_{n+1}},
 \end{eqnarray}
for some stochastic process $t\mapsto c_{I}(t)$. 

In order to fully characterize the expansion of the process $X_t$ we now simply have to bound the error term $\e_n(t).$ To this extent, we introduce the following, rather weak, assumption.

\begin{condition}\label{eqn11} For each $I,$ with $|I|=n+1,$ we require the map
\begin{equation}
t\mapsto \E_0[c_{I}(t) ^{2N}]
\end{equation}
to be bounded on $[0,\delta],$ for some $\delta>0$ and $N\in \N$. 
\end{condition}

Next, we provide a probability bound for $\e_n(t).$ We begin with two lemmas. The second lemma will immediately deliver the required result as a corollary.

Recall that, for each $I_1,\ldots,I_n,$ Definition~\ref{lem5} states that
$$\E_0[\langle I_1,\widehat\W_t\rangle\cdots \langle I_n,\widehat\W_t\rangle]
=\E_0[\langle I_1\star\cdots\star I_n,\widehat\W_t\rangle].
$$
Because polynomials in the signature elements can always be written as linear combinations of elements of the signature (from Lemma \ref{prop}), any moment of $\widehat\W_t$ may be expressed as a linear combination of terms of the form $\E_0[\langle I,\widehat\W_t\rangle]$. Given the Gaussian properties of Brownian motion, these moments are particularly easy to express as in the following lemma.

\begin{lemma}\label{lem_moments}
For each $I$, it holds that
$$\E_0[\langle I,\widehat\W_t\rangle]
=
\begin{cases}
\frac{t^n}{n! } & \text{if }I= (0\cdots0), |I|=n\\
0 &\text{else}.
\end{cases}$$
\end{lemma}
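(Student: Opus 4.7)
The plan is to proceed by induction on the length $n=|I|$, using the integral recursion in Remark \ref{bbbrem}. The base case $n=0$ is $I=\emptyset$, where $\langle \emptyset,\widehat\W_t\rangle=1=t^0/0!$, matching the claim. For the inductive step, I would split according to whether the last index $i_{|I|}$ equals $0$ (the time coordinate) or is strictly positive (a Brownian coordinate).

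If $i_{|I|}>0$, then by Remark \ref{bbbrem} we have
\[
\langle I,\widehat\W_t\rangle=\int_0^t \langle I',\widehat\W_s\rangle\,dW_s^{i_{|I|}}.
\]
Iterated Wiener integrals of this form are square-integrable martingales (a standard induction on the length, using the It\^o isometry, shows that $\E_0[\langle I',\widehat\W_s\rangle^2]$ is bounded by a polynomial in $s$ on $[0,T]$), so the expectation of the stochastic integral is $0$. This disposes of every $I$ whose last entry is nonzero, consistent with the claim since such an $I$ cannot equal $(0\cdots 0)$.

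If $i_{|I|}=0$, Remark \ref{bbbrem} gives $\langle I,\widehat\W_t\rangle=\int_0^t \langle I',\widehat\W_s\rangle\,ds$, and Fubini yields
\[
\E_0[\langle I,\widehat\W_t\rangle]=\int_0^t \E_0[\langle I',\widehat\W_s\rangle]\,ds.
\]
Now I would apply the inductive hypothesis to $I'$, which has length $n-1$. If $I'$ still contains some nonzero index, the inner expectation vanishes, and so does the integral, again consistent with the claim. If instead $I'=(0\cdots 0)$ with $n-1$ zeros (which forces $I=(0\cdots 0)$ with $n$ zeros, the only remaining case), the inductive hypothesis gives $\E_0[\langle I',\widehat\W_s\rangle]=s^{n-1}/(n-1)!$, and integration from $0$ to $t$ produces $t^n/n!$, closing the induction.

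\textbf{Main obstacle.} The only genuine subtlety is legitimizing the step \enquote{martingale $\Rightarrow$ mean zero}, i.e., ensuring that the stochastic integral in the $i_{|I|}>0$ case is a true martingale rather than merely a local one. I would address this inside the induction by simultaneously proving that $s\mapsto \E_0[\langle J,\widehat\W_s\rangle^2]$ is polynomially bounded on $[0,T]$ for every $J$ with $|J|\le n$: the It\^o isometry (respectively, Cauchy--Schwarz for the Lebesgue piece) propagates this bound from length $n-1$ to length $n$, ensuring at each stage that the integrand sits in $L^2$ and the resulting stochastic integral is a genuine martingale. All other calculations are routine Fubini and arithmetic.
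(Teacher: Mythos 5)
Your proof is correct and is essentially the fleshed-out version of what the paper's one-line proof (``follows from an application of Fubini's theorem'') intends: induction on $|I|$ via the recursion in Remark \ref{bbbrem}, with Fubini handling the $ds$-integrals and the (true) martingale property killing the $dW$-integrals. Your extra care in propagating the $L^2$ bound to justify that the stochastic integrals are genuine martingales is the right touch and is consistent with the bound the paper itself establishes in Lemma \ref{lemm9}.
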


\noindent \textbf{Proof.} The claim is easily seen to follow from an application of Fubini's theorem.

\begin{lemma}\label{lemm9}
Fix $K\in \N$, a vector $I:=(i_1,\ldots, i_{|I|})$ and a process $(H_t)_{t\in[0,T]}$ such that $t\mapsto \E_0[H_t^{2K}]$ is bounded on $[0,\delta]$. Then, there exists a constant $C_{2K}>0$ such that 
$$\E_0\bigg[\bigg(\int_0^t\int_0^{t_{|I|}}\cdots\int_0^{t_{2}} H_{t_{1}} d\widehat W_{t_{1}}^{i_{1}}\cdots d\widehat W_{t_{|I|}}^{i_{|I|}}\bigg)^{2K}\bigg]
\leq \frac{C_{2K}^{|I|-I(0)}t^{K(|I|+I(0))}}{|I|!}\sup_{t\in[0,\delta]}\E_0[H_{t}^{2K}],$$
for each $t\in[0,\delta]$.
\end{lemma}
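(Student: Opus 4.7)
I would prove Lemma~\ref{lemm9} by induction on $|I|$. The two analytic tools are the Burkholder--Davis--Gundy (BDG) inequality (to control the $2K$-th moment of a martingale integral by a $K$-th moment of its quadratic variation) and Jensen's (equivalently, Hölder's) inequality applied to the normalized Lebesgue measure on $[0,t]$: for any nonnegative measurable $f$ and $p\geq 1$,
$$\Bigl(\int_0^t f_s\,ds\Bigr)^{p}\;\leq\; t^{\,p-1}\int_0^t f_s^{p}\,ds.$$
The inductive device is Remark~\ref{bbbrem}: writing $M_s$ for the iterated integral over $I'=(i_1,\ldots,i_{|I|-1})$, the iterated integral against $I$ equals $\int_0^t M_s\,d\widehat W^{i_{|I|}}_s$, which is $\int_0^t M_s\,ds$ if $i_{|I|}=0$ and an It\^o integral otherwise. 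This isolates one integration at a time.

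\textbf{Base case.} For $|I|=1$ and $i_1=0$, Jensen with $p=2K$ gives $\E_0[(\int_0^t H_s\,ds)^{2K}]\leq t^{2K-1}\int_0^t\E_0[H_s^{2K}]\,ds\leq t^{2K}\sup_{[0,\delta]}\E_0[H_s^{2K}]$, matching the target since $|I|+I(0)=2$, $|I|-I(0)=0$ and $|I|!=1$. For $i_1\neq 0$, BDG followed by Jensen yields $\E_0[(\int_0^t H_s\,dW^{i_1}_s)^{2K}]\leq C_{2K}\,t^{K-1}\int_0^t\E_0[H_s^{2K}]\,ds\leq C_{2K}\,t^{K}\sup\E_0[H_s^{2K}]$, which matches since $|I|+I(0)=1$ and $|I|-I(0)=1$.

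\textbf{Inductive step.} Assume the bound for all index vectors of length $n$, and take $|I|=n+1$. If $i_{|I|}=0$, Jensen gives
$$\E_0\Bigl[\Bigl(\int_0^t M_s\,ds\Bigr)^{2K}\Bigr]\leq t^{2K-1}\int_0^t \E_0[M_s^{2K}]\,ds,$$
and if $i_{|I|}\neq 0$, BDG followed by Jensen gives
$$\E_0\Bigl[\Bigl(\int_0^t M_s\,dW^{i_{|I|}}_s\Bigr)^{2K}\Bigr]\leq C_{2K}\,t^{K-1}\int_0^t \E_0[M_s^{2K}]\,ds.$$
In either case I plug in the inductive bound for $\E_0[M_s^{2K}]$, integrate the resulting power $s^{K(|I'|+I'(0))}$ from $0$ to $t$, and read off the exponent of $t$. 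Using $|I'|=|I|-1$ together with $I'(0)=I(0)-1$ in the drift case and $I'(0)=I(0)$ in the martingale case, the total $t$-exponent collapses to $K(|I|+I(0))$ in both cases, and $C_{2K}^{|I'|-I'(0)}$ becomes $C_{2K}^{|I|-I(0)}$ (gaining a factor of $C_{2K}$ precisely when an It\^o integration is added).

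\textbf{The one bookkeeping point.} The only nontrivial check is the factorial denominator: the integration produces a factor $1/(K(|I'|+I'(0))+1)$, which must combine with the inductive $1/|I'|!=1/n!$ to give at most $1/(n+1)!=1/|I|!$. This reduces to $K(|I'|+I'(0))+1\geq n+1$, i.e.\ $K(|I'|+I'(0))\geq n$. In the drift case $|I'|+I'(0)=|I|+I(0)-2\geq |I|-1=n$ (since $I(0)\geq 1$), and in the martingale case $|I'|+I'(0)=|I|-1+I(0)\geq n$; in both cases the inequality holds for every $K\geq 1$. This is the only place where the structure of the exponents has to be tracked carefully, and it is the likely source of any bookkeeping error; everything else is a direct application of BDG and Jensen inside the induction.
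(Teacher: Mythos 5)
Your proof is correct and follows essentially the same route as the paper's: peel off one integration at a time, using Jensen for the $ds$-integrations and BDG followed by Jensen for the $dW$-integrations, and track the accumulated powers of $t$. The only cosmetic difference is in how the $1/|I|!$ arises — the paper defers all the time-integrations to the end and evaluates the nested Lebesgue integral as $t^{|I|}/|I|!$, whereas you integrate the power $s^{K(|I'|+I'(0))}$ at each step and check that the resulting denominators dominate $|I|!$; your bookkeeping check is valid.
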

\noindent \textbf{Proof.} See Appendix \ref{Appendix1}.

\begin{corollary}\label{corr1}
For each $t\in [0,\delta]$,  $I$ and $m\geq0,$ it holds
\begin{equation}\label{eqn10}
\E_0[\langle I,\widehat\W_t\rangle^{m}]\leq \frac{C_{2m}^{(|I|-I(0))/2}t^{m(|I|+I(0))/2}}{\sqrt{|I|!}},
\end{equation}
for some constant $C_{2m}>0$. Similarly, fix $N\in \N$ such that Condition \ref{eqn11} is satisfied. Then, for each
$m\leq 2N$ and $t\in [0,\delta],$ it also holds
\begin{equation}\label{eqn9}
\E_0[|\e_n(t)|^{m}]\leq \frac{C_{2N}^{m(n+1)/2N}t^{m(n+1)/2}}{((n+1)!)^{m/2N}},
\end{equation}
for some constant $C_{2N}>0$. 
\end{corollary}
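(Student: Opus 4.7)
Both inequalities follow readily from Lemma \ref{lemm9} combined with Jensen's (Lyapunov's) inequality, which allows one to pass from an $L^{2K}$ bound to an $L^{m}$ bound for any $m\le 2K$.

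For \eqref{eqn10}, I apply Lemma \ref{lemm9} to the trivial integrand $H\equiv 1$ with exponent parameter $K=m$. Since $\langle I,\widehat\W_t\rangle$ is precisely the iterated integral of $1$ against the components of $\widehat W$ prescribed by $I$, the lemma delivers
$$\E_0\bigl[\langle I,\widehat\W_t\rangle^{2m}\bigr]\le \frac{C_{2m}^{|I|-I(0)}\,t^{m(|I|+I(0))}}{|I|!}.$$
Taking the square root (equivalently, applying Jensen's inequality to $x\mapsto x^{2}$) then yields the $m$-th absolute-moment bound asserted by \eqref{eqn10}; for odd $m$, the one-sided statement without absolute values is immediately implied.

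For \eqref{eqn9}, I expand $\e_n(t)$ as in \eqref{error} — a sum over the $(d+1)^{n+1}$ multi-indices with $|I|=n+1$ — and apply Lemma \ref{lemm9} to each summand, this time with $H=c_I$ and $K=N$. Condition \ref{eqn11} guarantees $\sup_{t\in[0,\delta]}\E_0[c_I(t)^{2N}]<\infty$, so each summand has $L^{2N}$-norm of order at most $t^{((n+1)+I(0))/2}/((n+1)!)^{1/(2N)}$. Bounding the extra factor $t^{I(0)/2}$ by $\delta^{I(0)/2}$ (a harmless multiplicative constant) and then invoking Minkowski's inequality in $L^{2N}$ to aggregate the finitely many summands produces
$$\E_0\bigl[|\e_n(t)|^{2N}\bigr]\le \frac{C_{2N}^{\,n+1}\,t^{N(n+1)}}{(n+1)!},$$
once the constant $C_{2N}$ has been relabelled to absorb the combinatorial factor $(d+1)^{n+1}$, the $\delta$-dependent constants, and the uniform $L^{2N}$-bound on $c_I$. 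A final Lyapunov step, $\E_0[|\e_n(t)|^{m}]\le \E_0[|\e_n(t)|^{2N}]^{m/(2N)}$, valid for $m\le 2N$, yields exactly \eqref{eqn9}.

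The argument involves no substantive difficulty beyond careful bookkeeping of constants: Lemma \ref{lemm9} has already absorbed the hard analytic content (It\^o isometry and Burkholder--Davis--Gundy applied across the $n+1$ nested integrals), and everything else reduces to Minkowski, Jensen, and the elementary observation that $t\le\delta$ allows stray polynomial factors in $t$ to be absorbed into multiplicative constants.
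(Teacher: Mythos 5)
Your proof is correct and follows essentially the same route as the paper: apply Lemma \ref{lemm9} with $H\equiv 1$, $K=m$ for Eq.~\eqref{eqn10} and with $H=c_I$, $K=N$ for Eq.~\eqref{eqn9}, then pass to lower moments via Jensen/Lyapunov. Your Minkowski aggregation over the multi-indices with $|I|=n+1$ is simply a more careful rendering of the paper's shorthand of taking $H_t=\sum_{|I|=n+1}c_I(t)$ in Lemma \ref{lemm9}.
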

\noindent \textbf{Proof.} See Appendix \ref{Appendix1}.

\vspace{0.2cm}
\begin{remark}\label{rem1}
We emphasize that Eq.~(\ref{eqn10}) covers the case $\E_0\langle \emptyset,\widehat\W_t\rangle$. In this case, in fact, $|\emptyset| =0$ and $I(0) = 0,$ thereby leading to $\E_0\langle \emptyset,\widehat\W_t\rangle \leq C.$ In the proofs, we will - sometimes conservatively - bound $\E_0\langle I,\widehat\W_t\rangle$ by $C$ in order to account, explicitly, for the case $I= \emptyset.$
\end{remark}

We note that the constant $C_{2N}$ in Eq.~(\ref{eqn9}) is a Burkholder-Davis-Gundy constant. It is known that $C_{2N} = C$ if $N = 1/2$ and $C_{2N} = O((2N)^{2N}).$ Setting $m=1$ and $N=1/2$, we have that 
\begin{equation}\label{eqn91}
\E_0[|\e_n(t)|] \leq \frac{C^{n+1}t^{(n+1)/2}}{(n+1)!}.
\end{equation}
Given Stirling's formula, $(n+1)! \sim \sqrt{2\pi (n+1)}\left(\frac{n+1}{e}\right)^{(n+1)},$ which implies that $\frac{C^{n+1}}{(n+1)!} \rightarrow 0,$ as $n\rightarrow \infty.$

We conclude that, by a simple application of Markov's inequality, Corollary \ref{corr1} implies that the error in the general (i.e., for any $n$) process expansion in Eq.~(\ref{exp1}) satisfies
\begin{equation*}
\e_n(t) = O_p(t^{n/2}).
\end{equation*}

\section{Expanding \enquote{regular} conditional moments}\label{regular}

We now turn to the local expansion of moments $\E_0[f(X_t)].$ In order to guarantee sufficient integrability, we will often work with functions in the following set
$$C_p^k(\R):=\{f\in C^k(\R)\colon |f^{(j)}(x)|\leq C (1+|x|^p), \quad\text{ for some } p\in \N,\ \text{for all }j\leq k\}.$$
In this section, the functions have bounded derivatives at time 0. We consider the unbounded case in the next section.

We begin with a lemma which provides a representation of the drift of a process $f(\langle c,\widehat \W_t\rangle)$ (in isolation as well as multiplied by an arbitrary process $\langle d,\widehat \W_t\rangle$) in terms of a linear combination of the \textit{It\^o signature}'s components. The lemma also introduces the coefficients $\Gcal_{.,.}(d),$ which will be helpful to automate our methods. 

\begin{lemma}\label{lem3}
For each $f\in C^2_p(\R),$  and vectors $c,d,$  it holds that
\begin{align*}
f\Big(\langle c,\widehat \W_t\rangle\Big)\langle d,\widehat \W_t\rangle
&=\int_0^t \bigg(\sum_{k=0}^2f^{(k)}\Big(\langle c,\widehat \W_s\rangle\Big)\langle\Gcal_{c,k}(d),\widehat \W_s\rangle \bigg)ds+\text{martingale},
\end{align*}
with 
\begin{align*}
\Gcal_{c,0}(d)&:= \sum_{H\in \Ical_n}d_HH'1_{\{h_{|H|}=0\}},\\
\Gcal_{c,1}(d)&:=\sum_{I,H\in \Ical_n}c_Id_H I'\star (H1_{\{i_{|I|}=0\}}+ {\rho_{i_{|I|},h_{|H|}}}   H' ),\\
\Gcal_{c,2}(d)&:=\sum_{I,J,H\in \Ical_n}c_Ic_Jd_H\frac {\rho_{i_{|I|},j_{|J|}}} 2I'\star J'\star H.
\end{align*}
\end{lemma}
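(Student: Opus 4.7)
The strategy is to apply Itô's product rule to the process $f(\langle c,\widehat\W_t\rangle)\,\langle d,\widehat\W_t\rangle$, combined with Itô's formula on $f(\langle c,\widehat\W_t\rangle)$, and then use Remark~\ref{bbbrem} (to read off drifts and diffusions of the signature components) together with Definition~\ref{lem5} (to collapse products of signature elements into the $\star$-product) so as to match the three terms $\Gcal_{c,k}(d)$ in turn.

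First, I would expand $\langle d,\widehat\W_t\rangle=\sum_{H}d_H\langle H,\widehat\W_t\rangle$ using Remark~\ref{bbbrem}, so that its drift coefficient is $\sum_H d_H\langle H',\widehat\W_s\rangle 1_{\{h_{|H|}=0\}}$ and its diffusion coefficient on $W^k$ is $\sum_H d_H\langle H',\widehat\W_s\rangle 1_{\{h_{|H|}=k\}}$ for $k\geq1$. The analogous decomposition for $\langle c,\widehat\W_t\rangle$ gives drift $\sum_I c_I\langle I',\widehat\W_s\rangle 1_{\{i_{|I|}=0\}}$ and diffusion $\sum_I c_I\langle I',\widehat\W_s\rangle 1_{\{i_{|I|}=k\}}$ on $W^k$. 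Applying Itô's formula to $f(\langle c,\widehat\W_t\rangle)$ then yields a drift consisting of an $f'$-piece with coefficient $\sum_I c_I\langle I',\widehat\W_s\rangle 1_{\{i_{|I|}=0\}}$ and an $f''$-piece with coefficient $\tfrac12\sum_{k\ge1}\sum_{I,J}c_I c_J\,\langle I',\widehat\W_s\rangle\langle J',\widehat\W_s\rangle 1_{\{i_{|I|}=k\}}1_{\{j_{|J|}=k\}}$, which by the definition of $\rho$ equals $\tfrac12\sum_{I,J}c_I c_J\rho_{i_{|I|},j_{|J|}}\langle I'\star J',\widehat\W_s\rangle$ after invoking Definition~\ref{lem5}.

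Next, I would apply the Itô product rule
\[
d\bigl(f(\langle c,\widehat\W_t\rangle)\,\langle d,\widehat\W_t\rangle\bigr)
=f(\langle c,\widehat\W_t\rangle)\,d\langle d,\widehat\W_t\rangle
+\langle d,\widehat\W_t\rangle\,df(\langle c,\widehat\W_t\rangle)
+d\bigl[f(\langle c,\widehat\W\rangle),\langle d,\widehat\W\rangle\bigr]_t,
\]
and collect drift terms. The first summand contributes $f(\langle c,\widehat\W_s\rangle)\sum_H d_H 1_{\{h_{|H|}=0\}}\langle H',\widehat\W_s\rangle\,ds$, which is precisely $f^{(0)}(\langle c,\widehat\W_s\rangle)\langle\Gcal_{c,0}(d),\widehat\W_s\rangle\,ds$. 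From the second summand, the $f'$-drift multiplied by $\langle d,\widehat\W_s\rangle=\sum_H d_H\langle H,\widehat\W_s\rangle$ yields, after applying Definition~\ref{lem5} to each product $\langle I',\widehat\W_s\rangle\langle H,\widehat\W_s\rangle=\langle I'\star H,\widehat\W_s\rangle$, the contribution $\sum_{I,H}c_I d_H 1_{\{i_{|I|}=0\}}\langle I'\star H,\widehat\W_s\rangle$. The $f''$-drift multiplied by $\langle d,\widehat\W_s\rangle$, together with associativity of $\star$, delivers exactly $\tfrac12\sum_{I,J,H}c_I c_J d_H \rho_{i_{|I|},j_{|J|}}\langle I'\star J'\star H,\widehat\W_s\rangle$, which matches $\langle\Gcal_{c,2}(d),\widehat\W_s\rangle$.

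Finally, the covariation term $d[f(\langle c,\widehat\W\rangle),\langle d,\widehat\W\rangle]_t$ contributes $f'(\langle c,\widehat\W_t\rangle)\sum_{k\geq1}\sum_{I,H}c_I d_H 1_{\{i_{|I|}=k\}}1_{\{h_{|H|}=k\}}\langle I',\widehat\W_t\rangle\langle H',\widehat\W_t\rangle\,dt$, which again by Definition~\ref{lem5} and the definition of $\rho$ equals $f'(\langle c,\widehat\W_t\rangle)\sum_{I,H}c_I d_H \rho_{i_{|I|},h_{|H|}}\langle I'\star H',\widehat\W_s\rangle\,dt$. Adding this to the $f'$-piece coming from step two yields exactly $\langle \Gcal_{c,1}(d),\widehat\W_s\rangle$. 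Bundling all stochastic integral pieces (including the initial value as a degenerate constant martingale) into the term \emph{martingale} closes the proof. The only delicate points are the bookkeeping of indicator functions versus the $\rho$-symbol and the correct use of associativity of $\star$ when reducing $\langle I'\star J',\widehat\W_s\rangle\langle H,\widehat\W_s\rangle$ to $\langle I'\star J'\star H,\widehat\W_s\rangle$; both follow immediately from Lemma~\ref{prop} and Definition~\ref{lem5}, so no genuine obstacle is expected.
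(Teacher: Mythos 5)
Your proposal is correct and follows essentially the same route as the paper: Remark~\ref{bbbrem} for the drift of each signature component, Lemma~\ref{prop} for the quadratic covariations, Itô's formula on $f(\langle c,\widehat\W_t\rangle)$, and the Itô product rule, with the $f'$-terms from the drift and the covariation combining into $\Gcal_{c,1}(d)$ exactly as in the paper's argument. The only detail you gloss over is that the residual local martingale is a true martingale, which the paper justifies by the polynomial growth condition $f\in C^2_p(\R)$.
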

\noindent \textbf{Proof.} See Appendix \ref{Appendix1}.

\vspace{0.3cm}

Next, we expand the conditional (to time 0 information) expectation of the process $f(\langle c,\widehat \W_t\rangle).$ A combination of Taylor's approximation theorem and the representation in Lemma \ref{lem3} will lead to the result.

\begin{theorem}\label{thm1}
For each $f\in C_p^{2N}(\R)$ with $N\in \N$, and a vector $c,$ it holds that
$$\E_0[f(\langle c,\widehat \W_t\rangle)]=f(c_\emptyset)+\sum_{n=1}^N
\frac 1 {n!}\bigg(\sum_{k_1,\ldots,k_n=0}^2f^{(k_1+\ldots+k_n)}(c_\emptyset)\langle\emptyset,\Gcal_{c,k_1,\ldots,k_n}(\emptyset)\rangle\bigg)t^n+o(t^N),$$
where $\Gcal_{c,k_1,\ldots,k_n}:\Ical\to\Ical$ is defined recursively as $\Gcal_{c,k_1,\ldots,k_n}= \Gcal_{c,k_n} \circ \Gcal_{c,k_1,\ldots,k_{n-1}}$ or, more transparently, as $\Gcal_{c,k_1,\ldots,k_n}(I)= \Gcal_{c,k_n}(\Gcal_{c,k_1,\ldots,k_{n-1}}(I))$. 
\end{theorem}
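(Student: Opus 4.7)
The plan is to apply Lemma~\ref{lem3} iteratively $N$ times and then identify the outcome as a Taylor expansion in $t$ of the function $F(t):=\E_0[f(\langle c,\widehat\W_t\rangle)]$. For each word $(k_1,\ldots,k_n)\in\{0,1,2\}^n$ I would introduce the auxiliary quantity
$$F^{(k_1,\ldots,k_n)}(t):=\E_0\bigl[f^{(k_1+\cdots+k_n)}(\langle c,\widehat\W_t\rangle)\,\langle\Gcal_{c,k_1,\ldots,k_n}(\emptyset),\widehat\W_t\rangle\bigr],$$
with the convention that the empty word recovers $F(t)$ itself (using the selector $d=\emptyset$, which gives $\langle \emptyset,\widehat\W_t\rangle=1$). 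Since all non-empty iterated integrals in $\widehat\W_0$ vanish, only the $\emptyset$-coefficient survives at $t=0$, so $F^{(k_1,\ldots,k_n)}(0)=f^{(k_1+\cdots+k_n)}(c_\emptyset)\,\langle\emptyset,\Gcal_{c,k_1,\ldots,k_n}(\emptyset)\rangle$, which is precisely the target coefficient in the theorem statement.

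Next, I would apply Lemma~\ref{lem3} to each such product, with the lemma's ``$f$'' replaced by $f^{(k_1+\cdots+k_n)}$ and its ``$d$'' replaced by $\Gcal_{c,k_1,\ldots,k_n}(\emptyset)$. After taking $\E_0$ and discarding the martingale part --- a step I would justify by pairing the polynomial growth bound $|f^{(j)}(x)|\leq C(1+|x|^p)$ built into $C_p^{2N}(\R)$ with the moment bounds of Corollary~\ref{corr1} applied to $\langle c,\widehat\W_s\rangle$ and to each signature component appearing in $\Gcal_{c,k_1,\ldots,k_n}(\emptyset)$, which is enough to upgrade the local martingale to a true martingale via BDG --- the recursion
$$F^{(k_1,\ldots,k_n)}(t)=F^{(k_1,\ldots,k_n)}(0)+\sum_{k_{n+1}=0}^{2}\int_0^t F^{(k_1,\ldots,k_{n+1})}(s)\,ds$$
falls out. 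The identity $\Gcal_{c,k_{n+1}}\circ\Gcal_{c,k_1,\ldots,k_n}=\Gcal_{c,k_1,\ldots,k_{n+1}}$ built into the definition of $\Gcal$ as a linear map on coefficient vectors is what makes the right-hand side match the next-level auxiliary. Note that iterating to depth $N$ requires $f^{(k_1+\cdots+k_{n-1})}\in C^2$ for every admissible word, i.e.\ $f\in C^{2N}$, exactly the regularity available.

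Finally, I would unroll the recursion $N$ times starting from the empty word. The simplex identity $\int_0^t\!\int_0^{s_1}\!\cdots\!\int_0^{s_{n-1}}1\,ds_n\cdots ds_1=t^n/n!$ generates the polynomial $f(c_\emptyset)+\sum_{n=1}^{N-1}\tfrac{t^n}{n!}\sum_{k_1,\ldots,k_n}F^{(k_1,\ldots,k_n)}(0)$ plus an $N$-fold iterated integral of $F^{(k_1,\ldots,k_N)}(s_N)$. Splitting that innermost integrand as $F^{(k_1,\ldots,k_N)}(0)+[F^{(k_1,\ldots,k_N)}(s_N)-F^{(k_1,\ldots,k_N)}(0)]$ isolates the advertised $t^N/N!$ term and leaves a remainder of the form $\sum_{k_1,\ldots,k_N}\int\!\cdots\!\int[F^{(k_1,\ldots,k_N)}(s_N)-F^{(k_1,\ldots,k_N)}(0)]\,ds$. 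This remainder is $o(t^N)$ as soon as each $F^{(k_1,\ldots,k_N)}$ is continuous at $0$, and I would verify this last continuity by dominated convergence: path continuity of $\widehat\W$ gives a.s.\ convergence of the integrand, and Cauchy--Schwarz combined with polynomial growth of $f^{(k_1+\cdots+k_N)}$ and the moment bounds of Corollary~\ref{corr1} supplies a uniform $L^1$ dominator on $[0,\delta]$. The main obstacle I anticipate is purely algebraic bookkeeping --- checking that the iterated action of Lemma~\ref{lem3} really produces the composed operator $\Gcal_{c,k_1,\ldots,k_n}$ in the correct order rather than some permutation, and that the combinatorial constant $1/n!$ emerges intact from the nested simplex integrals; once that is in hand, the integrability and asymptotic bounds follow from estimates already established in the paper.
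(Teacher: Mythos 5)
Your proposal is correct and follows essentially the same route as the paper's proof: both hinge on iterating Lemma~\ref{lem3} to compute the successive time-derivatives (equivalently, the nested integral recursion) of $t\mapsto\E_0[f(\langle c,\widehat\W_t\rangle)\langle d,\widehat\W_t\rangle]$, evaluate at $t=0$ where only the $\emptyset$-components survive, and conclude by a Taylor expansion to order $N$. Your explicit unrolling of the integral recursion with the simplex identity and the dominated-convergence treatment of the remainder is just a more detailed rendering of the paper's appeal to Taylor's approximation theorem.
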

\noindent \textbf{Proof.} See Appendix \ref{Appendix1}.

\vspace{0.3cm}

Importantly, the coefficients in the expansion of $\E_0[f(\langle c,\widehat \W_t\rangle)]$ can be computed explicitly by means of an algorithm. The following remark provides details. 
\begin{remark}\label{remcomp}
Setting 
\begin{align*}
G_0(H)&:=H'1_{\{h_{|H|}=0\}}, \\ 
G_1(I,H)
&:=I'\star \Big(H1_{\{i_{|I|}=0\}}+ {\rho_{i_{|I|},h_{|H|}}}   H'\Big),\\
\textrm{and} ~~~ G_2(I,J,H)
&:=\frac {\rho_{i_{|I|},j_{|J|}}} 2I'\star J'\star H,
\end{align*}
we can write $\Gcal_{c,0}(d)=\sum_{H\in \Ical_n}d_H G_0(H)$, $\Gcal_{c,1}(d)=\sum_{I,H\in \Ical_n}c_Id_H G_1(I,H)$, and $\Gcal_{c,2}(d)=\sum_{I,J,H\in \Ical_n}c_Ic_Jd_H G_k(I,J,H)$. Because each $G_k$ is a multilinear map, we may express compositions of $\Gcal_{c,k}$ as products of the $G_k$s. For example:
$$\Gcal_{c,1,0}(d)=\sum_{I,H\in \Ical_n}c_Id_H G_{0,1}(I,H),$$
where $G_{0,1}(I,H)=\sum_{J\in \Ical_n} \langle J,G_1(I,H)\rangle G_0(J).$\footnote{A code performing these computations is available upon request.} 
 \end{remark}
 
Finally, we turn from an expansion of $\E_0[f(\langle c,\widehat \W_t\rangle)]$ to the expansion of the object of interest, namely $\E_0[f(X_t)].$ We begin with a bound on the error term.
 
\begin{lemma}\label{lem4f}
Consider an $n+1$-times $({W},t)$-differentiable process $(X_t)_{t\in[0,T]}$ with expansion
$$
X_t=\langle c,\widehat\W_t\rangle+\e_n(t),
$$
satisfying Condition \ref{eqn11}. Then, for each $f\in C^1(\R)$ such that $ \textrm{sup}|f'|<\infty,$  it holds
$$\E_0[f(X_t)]=\E_0[f(\langle c,\widehat\W_t\rangle)]+o(t^{n/2}).$$
\end{lemma}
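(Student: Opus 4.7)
The plan is to reduce the statement to a direct application of Corollary \ref{corr1} via a first-order Taylor expansion of $f$. Since we assumed $f\in C^1(\R)$ with $\sup|f'|<\infty$, the fundamental theorem of calculus gives, pathwise,
\begin{equation*}
f(X_t)-f(\langle c,\widehat\W_t\rangle)
=\e_n(t)\int_0^1 f'\bigl(\langle c,\widehat\W_t\rangle+u\,\e_n(t)\bigr)\,du,
\end{equation*}
simply because $X_t-\langle c,\widehat\W_t\rangle=\e_n(t)$ by the expansion in Eq.~(\ref{exp1}). Taking conditional expectation at time $0$ and using the uniform bound on $f'$ yields
\begin{equation*}
\bigl|\E_0[f(X_t)]-\E_0[f(\langle c,\widehat\W_t\rangle)]\bigr|
\leq \sup_{x\in\R}|f'(x)|\cdot \E_0\bigl[|\e_n(t)|\bigr].
\end{equation*}

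The second step is to control $\E_0[|\e_n(t)|]$ using the moment bound established earlier. Because the process is $n+1$-times $(W,t)$-differentiable and Condition \ref{eqn11} is in force, Corollary \ref{corr1} applies with $m=1$ and $N=1/2$, giving (cf. Eq.~(\ref{eqn91}))
\begin{equation*}
\E_0\bigl[|\e_n(t)|\bigr] \leq \frac{C^{n+1}\,t^{(n+1)/2}}{(n+1)!}.
\end{equation*}
Thus the difference $|\E_0[f(X_t)]-\E_0[f(\langle c,\widehat\W_t\rangle)]|$ is of order $t^{(n+1)/2}$, which is $o(t^{n/2})$ as $t\to 0$.

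There is essentially no obstacle here; the hypothesis that $f'$ is bounded is exactly what removes any need to control the moments of $X_t$ itself (which would otherwise require integrability of the random argument $\langle c,\widehat\W_t\rangle + u\,\e_n(t)$ against a polynomial growth rate of $f'$). Consequently, this lemma serves as the baseline regularity result, and the upcoming sections will have to work harder when $f'$ is only polynomially bounded or, in the \enquote{irregular} case, when derivatives of $f$ blow up near zero.
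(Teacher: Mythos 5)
Your proof is correct and follows essentially the same route as the paper: a first-order Taylor/mean-value decomposition of $f(X_t)-f(\langle c,\widehat\W_t\rangle)$, the uniform bound on $f'$, and the moment bound on $\e_n(t)$ from Corollary~\ref{corr1}. The only (immaterial) difference is that you bound $\E_0[|\e_n(t)|]$ directly via Eq.~(\ref{eqn91}), whereas the paper applies Cauchy--Schwarz and uses the second moment $\E_0[\e_n(t)^2]^{1/2}$; both give $O(t^{(n+1)/2})=o(t^{n/2})$.
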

\noindent \textbf{Proof.} See Appendix \ref{Appendix1}.

\vspace{0.3cm}
 
The main result in this section is now easily obtained. We expand the conditional mean of the function $f$ to the order $\lceil n/2\rceil,$ where $n$ is the order of the expansion of the underlying process (as given in Section~\ref{process}). Given Lemma \ref{lem4f}, this is guaranteeing that the approximation error is of smaller order than $t^{n/2}$.
 
\begin{theorem}\label{thm2}
\textbf{(Expanding \enquote{regular} moments.)} Consider an $n+1$-times $( W ,t)$-differentiable process $(X_t)_{t\in[0,T]}$ with expansion
$$
X_t=\langle c,\widehat\W_t\rangle+\e_n(t),
$$
satisfying Condition \ref{eqn11}. Then, for each $f\in C^{n+1}_p(\R)$ with $\sup|f'|<\infty,$ it holds that
\begin{eqnarray}\label{for}
\E_0[f(X_t)]=f(X_0)+\sum_{\ell=1}^{\lceil n/2\rceil}
\frac 1 {\ell!}\bigg(\sum_{k_1,\ldots,k_\ell=0}^2f^{(k_1+\ldots+k_\ell)}(X_0)\langle\emptyset,\Gcal_{c,k_1,\ldots,k_\ell}(\emptyset)\rangle\bigg)t^\ell+o(t^{n/2}).
\end{eqnarray}
\end{theorem}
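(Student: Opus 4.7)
The plan is to combine Lemma~\ref{lem4f} with Theorem~\ref{thm1} applied to $N=\lceil n/2\rceil$, the only subtle point being that one must check the hypothesis sets are compatible.

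First, I would invoke Lemma~\ref{lem4f} to reduce the problem. The assumption $\sup|f'|<\infty$ is precisely what that lemma requires (the additional $C^{n+1}_p$ regularity is stronger than the $C^1$ in Lemma~\ref{lem4f}), so we immediately obtain
\begin{equation*}
\E_0[f(X_t)] = \E_0[f(\langle c,\widehat\W_t\rangle)] + o(t^{n/2}).
\end{equation*}
This step rests entirely on the bound $\E_0[|\e_n(t)|] = o(t^{n/2})$ coming from Corollary~\ref{corr1} together with the mean value theorem controlled by $\sup|f'|$, all of which is already encapsulated in Lemma~\ref{lem4f}.

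Next, I would apply Theorem~\ref{thm1} to the truncated signature expression $\langle c,\widehat\W_t\rangle$ with the choice $N=\lceil n/2\rceil$. For this I need $f\in C_p^{2N}(\R)$; since $2\lceil n/2\rceil\leq n+1$ and the hypothesis gives $f\in C_p^{n+1}(\R)$, this is satisfied in both the even and odd $n$ cases. Theorem~\ref{thm1} then yields
\begin{equation*}
\E_0[f(\langle c,\widehat\W_t\rangle)] = f(c_\emptyset)+\sum_{\ell=1}^{\lceil n/2\rceil}\frac{1}{\ell!}\bigg(\sum_{k_1,\ldots,k_\ell=0}^{2}f^{(k_1+\cdots+k_\ell)}(c_\emptyset)\langle\emptyset,\Gcal_{c,k_1,\ldots,k_\ell}(\emptyset)\rangle\bigg)t^\ell + o(t^{\lceil n/2\rceil}).
\end{equation*}
Since $\lceil n/2\rceil\geq n/2$, the remainder $o(t^{\lceil n/2\rceil})$ is absorbed into $o(t^{n/2})$.

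Finally, I would identify $c_\emptyset$ with $X_0$: by the construction of the signature expansion in Section~\ref{proexp}, the $\emptyset$-component of the coefficient vector $c$ is the time-0 value of the process, so $c_\emptyset = X_0$. Substituting this and combining the two displays above produces exactly~\eqref{for}. The proof is essentially bookkeeping once Lemma~\ref{lem4f} and Theorem~\ref{thm1} are in hand; the only place where a small argument is needed is the compatibility check $2\lceil n/2\rceil\leq n+1$, which I expect to be the only non-automatic step.
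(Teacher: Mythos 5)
Your proposal is correct and follows exactly the paper's own (one-line) argument: reduce to $\E_0[f(\langle c,\widehat\W_t\rangle)]$ via Lemma~\ref{lem4f}, apply Theorem~\ref{thm1} with $N=\lceil n/2\rceil$, and use $c_\emptyset=X_0$. The compatibility check $2\lceil n/2\rceil\leq n+1$ that you flag is indeed the only point requiring verification, and your handling of it is right.
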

\noindent \textbf{Proof.} The claim derives from Lemma~\ref{lem4f} and Theorem~\ref{thm1} noting that, in the process expansion, $c_\emptyset=X_0$. 

\vspace{0.3cm}

In essence, the time-0 expectation of the function of the process at $t$ is the function of the process itself at 0 plus a linear combination of its derivatives at 0 and 0-values of suitable coefficients. Derivatives and coefficients are defined recursively (but explicitly) for every level of the expansion. Below, we provide intuition for the recursion.

\subsection{Applications}

We discuss two applications: a first-order expansion of the characteristic function of the squared changes in the $X_t$ process and a general $n$-order expansion of the $k^{th}$-order moment $\E_0[X_t^k].$ The latter is of general interest. The former is simply intended as illustrative. A more general approach to the computation of characteristic function expansions (to any order) will be discussed in the next section. 

\begin{example}\label{ex2}
Fix $n=2$, $u\in \R$ and a $(W,t)$-differentiable process $(X_t)_{t\in[0,T]}$ satisfying Condition~\ref{eqn11}. Theorem~\ref{thm2} yields the following representation:
\begin{align*}
\E_0[e^{iu(X_t-X_0)^2}]=1+iuc_1^2 t+o(t).
\end{align*}
Note, in fact, that for $f(x)=\exp(iu(x-X_0)^2),$ we get $f'(x)=2iu(x-X_0)f(x)$ and $f''(x)=(2iu-4u^2(x-X_0)^2)f(x)$ and hence
\begin{align*}
\E_0[e^{iu(X_t-X_0)^2}]=1+f^{(0)}(X_0)\langle\emptyset,\Gcal_{c,0}(\emptyset)\rangle +f^{(1)}(X_0)\langle\emptyset,\Gcal_{c,1}(\emptyset)\rangle+f^{(2)}(X_0)\langle\emptyset,\Gcal_{c,2}(\emptyset)\rangle +o(t),
\end{align*}
with $f(X_0)=1$, $\langle\emptyset,\Gcal_{c,0}(\emptyset)\rangle = 0$, $f^{(1)}(X_0) =0,$ $f^{(2)}(X_0) = 2iu$ and $\langle\emptyset,\Gcal_{c,2}(\emptyset)\rangle = {c_1^2}/{2}$, which leads to the expansion above. For clarity, we note that $\langle\emptyset,\Gcal_{c,2}(\emptyset)\rangle$, i.e., the first element of the vector $\Gcal_{c,2}(\emptyset),$ is - in this example - the term $\langle\emptyset, c_1c_1\frac{\rho_{1,1}}{2}\emptyset \star \emptyset \star \emptyset \rangle= \langle\emptyset,\frac{c^2_1}{2} \emptyset \rangle = \frac{c^2_1}{2}.$ 
\end{example}

We now turn to the $k^{th}$-order moment $\E_0[X_t^k].$ Before stating the main result, we introduce a helpful lemma which clarifies how the bound on the error term extends to the current example. 

\begin{lemma}\label{lem4}
Consider an $n+1$-times $( W, t)$-differentiable process $(X_t)_{t\in[0,T]}$ with expansion
$$
X_t=\langle c,\widehat\W_t\rangle+\e_n(t),
$$
and fix $N\in \N$ such that Condition \ref{eqn11} is satisfied.
Then, for each $k<2N,$ it holds that
$$\E_0[X_t^k]=\E_0[\langle c,\widehat\W_t\rangle^k]+o(t^{n/2}).$$
\end{lemma}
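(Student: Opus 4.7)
The natural starting point is the binomial identity
\begin{equation*}
X_t^k-\langle c,\widehat\W_t\rangle^k=\sum_{j=1}^{k}\binom{k}{j}\langle c,\widehat\W_t\rangle^{k-j}\e_n(t)^j,
\end{equation*}
so that it is enough to show each cross term $\E_0[\langle c,\widehat\W_t\rangle^{k-j}\e_n(t)^j]$ is $o(t^{n/2})$ for $j=1,\ldots,k$. The intuition is that every cross term picks up at least one factor of $\e_n(t)$, which by Corollary~\ref{corr1} is small of order $t^{(n+1)/2}$, and since the companion factor stays bounded, this will give the required smaller order relative to the reference rate $t^{n/2}$.

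The plan is to quantify each cross term by Hölder's inequality with conjugate exponents $p=\tfrac{2N}{2N-j}$ and $q=\tfrac{2N}{j}$, which are well-defined and positive because $j\leq k<2N$. This gives
\begin{equation*}
\bigl|\E_0[\langle c,\widehat\W_t\rangle^{k-j}\e_n(t)^j]\bigr|
\leq \E_0\bigl[|\langle c,\widehat\W_t\rangle|^{(k-j)p}\bigr]^{1/p}\,\E_0[\e_n(t)^{2N}]^{j/(2N)}.
\end{equation*}
Corollary~\ref{corr1}, applied with $m=2N$ (allowed by Condition~\ref{eqn11}), immediately controls the second factor: $\E_0[\e_n(t)^{2N}]^{j/(2N)}=O(t^{j(n+1)/2})$. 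Since $j\geq 1$ gives $j(n+1)/2\geq (n+1)/2>n/2$, each term will be $o(t^{n/2})$ provided the first factor is bounded uniformly in $t\in[0,\delta]$.

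For the first factor I would expand $\langle c,\widehat\W_t\rangle^{(k-j)p}$ using the $\star$ operator from Lemma~\ref{prop} into a finite linear combination (with $\mathcal{F}_0$-measurable coefficients built from the $c_I$'s) of signature elements $\langle J,\widehat\W_t\rangle$. Applying Minkowski and then Corollary~\ref{corr1} termwise yields
\begin{equation*}
\E_0\bigl[|\langle c,\widehat\W_t\rangle|^{(k-j)p}\bigr]^{1/p}
\leq C\Bigl(|X_0|+\sum_{I\neq\emptyset}|c_I|\,t^{|I|/2}\Bigr)^{k-j},
\end{equation*}
which is finite and bounded on $[0,\delta]$ (the $c_I$'s are $\mathcal{F}_0$-measurable and hence a.s.\ finite). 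Combining the two bounds, each cross term is $O(t^{j(n+1)/2})=o(t^{n/2})$, and summing over $j=1,\ldots,k$ completes the proof.

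The main obstacle is the third step: verifying that the auxiliary polynomial $\E_0[|\langle c,\widehat\W_t\rangle|^{(k-j)p}]$ really is bounded uniformly as $t\to 0$. Condition~\ref{eqn11} only controls the deepest-layer characteristics $c_I(t)$ with $|I|=n+1$, so one has to argue that the intermediate $c_I(0)$ inherit sufficient integrability (or, equivalently, read the $o(t^{n/2})$ as an $\mathcal{F}_0$-a.s.\ statement). This is the only delicate bookkeeping step; once $p,q$ are chosen so as to route all the $\e_n$-smallness into one factor and leave the signature moments alone, the rest is a routine consequence of the estimates already proved.
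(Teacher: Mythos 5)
Your proof is correct and follows essentially the same route as the paper: binomial expansion of $X_t^k$, H\"older's inequality with exponents splitting the mass between $\E_0[\e_n(t)^{2N}]$ and a moment of $\langle c,\widehat\W_t\rangle$, and Corollary~\ref{corr1} to bound both factors (your $p=\tfrac{2N}{2N-j}$ reproduces exactly the paper's choice of $m$ with $\tfrac{k-j}{2m}+\tfrac{j}{2N}=1$). The worry you flag at the end is not a real gap: the first factor involves only the constant $\mathcal{F}_0$-measurable coefficients $c_I$ (not the deepest-layer processes $c_I(t)$), so its boundedness on $[0,\delta]$ follows from Corollary~\ref{corr1} and Remark~\ref{rem1}, with the statement read conditionally on $\mathcal{F}_0$, exactly as the paper does.
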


\noindent \textbf{Proof.} See Appendix \ref{Appendix1}.

\vspace{0.3cm}
 
\begin{example}[\textbf{$\mathbf{k^{th}}$ moment}]\label{exam}
Consider an $n+1$-times $( W,t)$-differentiable process $(X_t)_{t\in[0,T]}$ with expansion
$$
X_t=\langle c,\widehat\W_t\rangle+\e_n(t),
$$
satisfying Condition~\ref{eqn11} for some $N\in \N$.
Then, for each $k<2N,$ it holds
$$\E_0[X_t^k]=X_0^k+\sum_{\ell=1}^{\lceil n/2\rceil}
\frac 1 {\ell!}\bigg(\sum_{k_1,\ldots,k_\ell=0}^2\frac {k!}{(k-(k_1+\ldots+k_\ell))!}X_0^{k-(k_1+\ldots+k_\ell)}\langle\emptyset,\Gcal_{c,k_1,\ldots,k_\ell}(\emptyset)\rangle\bigg)t^\ell+o(t^{n/2}).$$
\end{example}

\noindent \textbf{Proof.} Since $c_\emptyset=X_0$ and the $m$-th derivative of $x^k$ is $\frac {k!}{(k-m)!}x^{k-m},$ the claim follows by Lemma~\ref{lem4} and Theorem~\ref{thm1}.

\subsection{The logic of the moment expansion}

Consider the following (sufficiently-differentiable) homogeneous Markov  diffusion process
$$dX_t=\mu(X_t)dt+\sigma(X_t)dW_t,$$
and a (sufficiently-differentiable  and integrable) function $f$. By It\^o's Lemma, we have
$$f(X_t)=\int_0^t \Gcal f (X_s) ds +\text{martingale},$$
where $\Gcal f(x)=\mu(x)f'(x)+\frac 1 2 \sigma^2(x)f''(x)$ is the process' infinitesimal generator. Because the infinitesimal generator may be viewed as the derivative of the time-0 conditional expectation of the function $f$, we have that
\begin{eqnarray*}
\frac {d^n}{dt^n}\E_0[f(X_t)]=\frac {d^{n-1}}{dt^{n-1}}\E_0[ \Gcal f (X_t)]
=\frac {d^{n-2}}{dt^{n-2}}\E_0[ (\Gcal\Gcal f )(X_t)]
\ldots=\E_0[ \underbrace{(\Gcal\cdots\Gcal}_{n \text{ times}} f )(X_t)].
\end{eqnarray*}
By Taylor's approximation theorem, we conclude that
$$ \E_0[f(X_t)]=f(X_0)+ \Gcal f (X_0) t+ \Gcal\Gcal f(X_0)\frac {t^2} 2 +\ldots+\underbrace{\Gcal\cdots\Gcal}_{n\text{ times}}f(Y_0)\frac {t^n} {n!}+o(t^n).$$
Because of Remark \ref{bbbrem}, we know that the process given by all the components of the signature of depth less than $n$ may be expressed as a homogeneous Markov diffusion process. Thus, using the notation $f_c:=f(\langle c,\cdot \rangle),$ we have
$$\E_0[f_c(\widehat\W_t)]={f_c(\widehat\W_0)+ \Gcal f_c (\widehat\W_0) t+ \Gcal\Gcal f_c(\widehat\W_0)\frac {t^2} 2 +\ldots+\underbrace{\Gcal\cdots\Gcal}_{n/2\text{ times}}f_c(\widehat\W_0)\frac {t^{n/2}} {(n/2)!}}+o(t^{n/2}),$$
with $\Gcal$ denoting the infinitesimal generator of $\widehat\W$. This latter formula justifies the recursive nature of the derivatives and coefficients in Eq.~(\ref{for}). 
\section{Expanding \enquote{irregular} conditional moments}\label{irregular}
The objective of this section is to derive expansions (to any order) of the characteristic function of the standardized process $\frac{X_t - X_0-c_0t}{c_1\sqrt t},$ i.e., 
$$\E_0\Big[\exp\Big(iu \Big(\frac{X_t - X_0-c_0t}{c_1\sqrt t}\Big)\Big)\Big].$$
We, therefore, focus on a specific irregular moment that has drawn attention in the recent continuous-time econometrics literature. 

In simple examples, characteristic function expansions may be found by using known properties of It\^o semimartingales. We provide one such example next. More generally, this is not the case. The general case will be our focus. 
 \begin{example}
Set $\mu(t)=\gamma W_t$
and $\sigma(t)= \alpha.$ Write 
\begin{align*}
X_t&=X_0+\int_0^t \mu(s) ds +\int_0^t \sigma(s) dW_s.
\end{align*}
By It\^o's Lemma, we have $\int_0^t W_s ds=W_tt-\int_0^t s dW_s=\int_0^t (t-s) dW_s.$ We may, therefore, write the following:
\begin{align*}
\E_0[\exp(iu(X_t - X_0)/(\alpha \sqrt t))]
&=
\E_0[\exp((\gamma\int_0^tW_sds+\alpha W_t)iu/(\alpha \sqrt t))]\\
&=\E_0[\exp((\int_0^t(\alpha+\gamma (t-s))dW_s)iu/(\alpha \sqrt t))]\\
&=\exp(-\frac{\sigma_t^2u^2}{2\alpha^2t})\\
&=
\exp(-u^2\gamma^2\frac {t^2} {\alpha^26}-u^2\frac\gamma\alpha \frac t 2 -\frac{u^2}2)\\
&=\exp(-\frac{u^2}2)(1-u^2\frac\gamma\alpha \frac t 2+o(t)),
\end{align*}
where in the third equality we have used the fact that $\int_0^t(\alpha+\gamma (t-s))dW_s$ is a normal random variable with conditional variance process
$$\sigma_t^2:=\E_0[(\int_0^t(\alpha+\gamma (t-s))dW_s)^2]=\int_0^t(\alpha+\gamma s)^2ds=\frac 1 {3\gamma} ((\alpha+\gamma t)^3-\alpha^3)=\gamma^2\frac {t^3} 3+\alpha\gamma t^2+\alpha^2t.$$
\end{example}

This simple example yields a local approximation to the Gaussian characteristic function of the (suitably-standardized) driving Brownian motion. The expansion is of first order in $t$ and depends on the features of the drift process, which is also a function of the assumed driving Brownian motion. 

This sort of expansions are rather complex in general continuous-time càdlàg models (c.f., e.g., \citealp{BR:17}, and \citealp{To:21}). Even low-order expansions require involved computations, something which limits their applicability beyond short horizons. The $W$-transforms in the work of \cite{BR:17}, for instance, have been shown to be helpful as a general tool aiding derivations but do not immediately lead to automated expansions to any order. Automation to arbitrary (high) orders is, instead, our objective in this article. To this extent, we derive a simple algorithmic representation which will facilitate asymptotic approximations in which the order of the expansion is $t^n,$ with $t<1$, and $n \rightarrow \infty$. The typical asymptotic approximations are, instead, for a fixed smallish $n$ ($\frac{1}{2}$ and 1, at the most) and $t \rightarrow 0.$ 

Before stating our central result in this section, we provide an auxiliary lemma which introduces the analogue, in our framework, of the $W$-transforms in \cite{BR:17}. Here, by analogy with that work, we utilize $W$-transforms of the \textit{It\^o signature} elements, for which we provide a rather intuitive representation. 

\begin{lemma}\label{lem2}
For each $I\in\Ical^k,$ it holds that
$$\E_0[\langle I,\widehat\W_t\rangle\exp(iuW_t^1)]
=
\begin{cases}
(iu)^{I(1)}\frac {t^k}{k! }\exp(-u^2t/2) & \text{if }I=\{0,1\}^k,\\
0 &\text{else},
\end{cases}$$
where we recall (c.f. Definition \ref{def2}) that $I(1)$ is the number of ones in $I$. 
\end{lemma}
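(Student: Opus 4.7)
The plan is to prove the formula by induction on $k = |I|$, using It\^o's product rule together with Remark \ref{bbbrem}, and then solving a simple linear integral equation. The base case $k=0$ is just the classical Fourier transform of a Gaussian, since $\langle \emptyset,\widehat\W_t\rangle = 1$ and $I(1)=0$.

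For the inductive step, set $f_I(t):=\E_0[\langle I,\widehat\W_t\rangle\exp(iuW_t^1)]$ for $|I|=k$. By Remark \ref{bbbrem}, $d\langle I,\widehat\W_s\rangle = \langle I',\widehat\W_s\rangle 1_{\{i_{|I|}=0\}}\,ds + \langle I',\widehat\W_s\rangle 1_{\{i_{|I|}\neq 0\}}\,dW_s^{i_{|I|}}$, while $d\exp(iuW_s^1) = iu\exp(iuW_s^1)\,dW_s^1 - \tfrac{u^2}{2}\exp(iuW_s^1)\,ds$. The quadratic covariation contributes only when $i_{|I|}=1$, producing the cross term $iu\int_0^t\langle I',\widehat\W_s\rangle\exp(iuW_s^1)\,ds$. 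Applying It\^o's product formula and taking conditional expectation (the stochastic integrals vanish thanks to the integrability afforded by Lemma \ref{lemm9}) yields
\begin{equation*}
f_I(t)= -\frac{u^2}{2}\int_0^t f_I(s)\,ds + \bigl(1_{\{i_{|I|}=0\}}+iu\,1_{\{i_{|I|}=1\}}\bigr)\int_0^t f_{I'}(s)\,ds.
\end{equation*}

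The case analysis is then transparent. If $i_{|I|}\notin\{0,1\}$, the second term disappears and $f_I$ satisfies a homogeneous linear integral equation with $f_I(0)=0$, forcing $f_I\equiv 0$. If $i_{|I|}\in\{0,1\}$ but $I'$ contains an index outside $\{0,1\}$, the inductive hypothesis gives $f_{I'}\equiv 0$ and the same conclusion follows. Otherwise $I\in\{0,1\}^k$ and the inductive hypothesis supplies $f_{I'}(s)=(iu)^{I'(1)}\frac{s^{k-1}}{(k-1)!}\exp(-u^2s/2)$; using $(1_{\{i_{|I|}=0\}}+iu\,1_{\{i_{|I|}=1\}})(iu)^{I'(1)}=(iu)^{I(1)}$, one checks by direct differentiation that $f_I^\star(t):=(iu)^{I(1)}\frac{t^k}{k!}\exp(-u^2t/2)$ solves the resulting equation, and uniqueness of solutions to linear Volterra equations completes the induction.

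I expect the conceptually non-trivial piece to be the bookkeeping of which case contributes, together with justifying the integrability that lets the stochastic integrals drop out upon taking expectations; both are essentially routine given the framework already established, so the argument should remain short.
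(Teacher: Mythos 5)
Your proof is correct and follows essentially the same route as the paper: induction on $|I|$ via It\^o's product rule, with the last index of $I$ contributing a factor $1$, $iu$, or $0$ according to whether it equals $0$, $1$, or neither. The only difference is that the paper works with the compensated martingale $\exp(iuW_t^1+u^2t/2)$, which removes the $-\tfrac{u^2}{2}\int_0^t f_I(s)\,ds$ term and turns your Volterra equation into a plain iterated-integral recursion (so no uniqueness argument is needed); your uncompensated version is equally valid.
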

\noindent \textbf{Proof.} See Appendix \ref{Appendix1}.

\vspace{0.3cm}

Importantly, $W$-transforms of \textit{products} of the \textit{It\^o signature} elements can be easily derived by employing the key property of the $\star$ operator, as implied by Lemma \ref{prop}: products are linear maps of other elements of the \textit{It\^o signature}. Below, we provide an example.

\begin{example}

We begin with the $W$-transform of $\langle 1 \star 1, \W_t\rangle,$ as in Example \ref{uno}:
\begin{align*}
\E_0[\langle 1 \star 1, \W_t\rangle \exp(iuW^1_t)]&=\E_0[(2\langle 11, \widehat \W_t\rangle+\langle 0, \widehat \W_t\rangle) \exp(iuW^1_t)]
=\exp(u^2t/2)\left(2(iu)^{2}\frac {t^2}{2! }+ t\right).
\end{align*}
We now turn to the $W$-transform of $\langle 1 \star 11, \W_t\rangle,$ as in Example \ref{due}:
\begin{align*}
\E_0[\langle 1 \star 11, \W_t\rangle \exp(iuW^1_t)]&=\E_0[(3\langle 111, \widehat \W_t\rangle+\langle 10+01, \widehat \W_t\rangle) \exp(iuW^1_t)]\\
&=\exp(u^2t/2)\left(3(iu)^{3}\frac {t^3}{3! }+2(iu) \frac{t^2}2\right).
\end{align*}

\end{example}
\noindent Next, we give details on the process of interest. 

\begin{condition}\label{processcond}
Fix $m\geq 2.$ Let $(X_t)_{t\in[0,T]}$ be an $m+2$-times $(W ,t)$-differentiable process  with expansion
$$
X_t=\langle c,\widehat\W_t\rangle+\e_{m+1}(t),
$$
satisfying Condition~\ref{eqn11}.
Suppose that  $X_0=c_0=0$ and that $c_{Ii}=0$ for each $i>1$,  $c_{Ii1}=0$ for each $i>2$ and $c_{Ii0}=0$ for each $i>3.$
\end{condition}

\noindent The following process and it first two layers, in particular, satisfy Condition \ref{processcond} whenever sufficiently differentiable: 
\begin{align*}
X_t&=\int_0^t c_0(s) ds +\int_0^t c_1(s) dW_s^1\\
c_1(t)&=c_1(0)+\int_0^t c_{01}(s) ds +\int_0^t c_{11}(s) dW_s^1+\int_0^t c_{21}(s) dW_s^2\\
c_0(t)&=\int_0^t c_{00}(s) ds +\int_0^t c_{10}(s) dW_s^1+\int_0^t c_{20}(s) dW_s^2+\int_0^t c_{30}(s) dW_s^3.
\end{align*}

The drift process, $c_0(t),$ and the volatility process, $c_1(t),$ are allowed to be correlated with the level process, $X_t$, and with themselves. The correlation is, however, not one because of the dependence of all three processes on a cascade of independent Brownian motions. Imposing $X_0=c_0=0$ is, as we note below, solely for notational simplicity and without loss of generality. 

We are now ready to state the central result in this section. Again, this result can be implemented through an algorithm. 
\begin{theorem}\label{prop2}
\textbf{(Expanding the characteristic function of the standardized process.)}
Fix $u\in \R$ and consider a process $(X_t)_{t\in [0,T]}$ satisfying Condition~\ref{processcond}. Then,
\begin{align*}
&\E_0\Big[\exp\Big(iu\frac{X_t}{c_1\sqrt t}\Big)\Big]
=
\E_0\bigg[ \exp\Big(\frac{iu}{\sqrt t}{W_t^1}\Big)\bigg]\\
&\quad+
\sum_{\ell=1}^m
 \frac{(iu)^{\ell}}{ t^{\ell/2} (\ell!)}\sum_{I_1,\ldots,I_\ell\in \Ical^{m+1}}
 \overline c_{I_1}\cdots  \overline c_{I_\ell}\E_0\bigg[\langle I_1\star \cdots\star I_\ell,\widehat\W_t\rangle \exp\Big(iu\frac{W_t^1}{\sqrt t}\Big)\bigg] +o(t^{m/2} ),
\end{align*}
where $\bar c_1=0$, and $\bar c_I=c_I/c_1$ for $I\neq (1)$.
\end{theorem}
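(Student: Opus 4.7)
The plan is to split the standardized process into three pieces---the normalized leading Brownian increment, a small stochastic remainder $Z_t$ carrying all higher-layer characteristics, and a negligible error from $\e_{m+1}(t)$---and then Taylor-expand $\exp(iuZ_t)$ around zero, converting each power $Z_t^\ell$ into a sum of signature elements via the $\star$-operator.

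First, using the expansion of Section~\ref{proexp}, the vanishing of $c_\emptyset=X_0=0$ and $c_0=0$ given by Condition~\ref{processcond}, and the convention $\bar c_1=0$, I would write
\begin{equation*}
\frac{X_t}{c_1\sqrt t}=\frac{W^1_t}{\sqrt t}+Z_t+R_t,\qquad Z_t:=\frac{1}{\sqrt t}\sum_{I\in\Ical_{m+1}}\bar c_I\langle I,\widehat\W_t\rangle,\qquad R_t:=\frac{\e_{m+1}(t)}{c_1\sqrt t}.
\end{equation*}
With these conventions the sum defining $Z_t$ runs effectively over $|I|\geq 2$. Corollary~\ref{corr1} then delivers the two key estimates: $\E_0[|\langle I,\widehat\W_t\rangle|^p]^{1/p}\leq C\,t^{(|I|+I(0))/2}\leq C\,t$ for all $|I|\geq 2$ and every $p$, so $Z_t=O_{L^p}(t^{1/2})$; and $\E_0[|\e_{m+1}(t)|]\leq C\,t^{(m+2)/2}$, so $\E_0[|R_t|]=o(t^{m/2})$.

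Next, I would factor $\exp(iuX_t/(c_1\sqrt t))=\exp(iuW^1_t/\sqrt t)\exp(iuZ_t)\exp(iuR_t)$. Since all three exponentials have modulus one and $|\exp(iuR_t)-1|\leq|u|\,|R_t|$, replacing $\exp(iuR_t)$ by $1$ costs $o(t^{m/2})$ in $L^1$. Then I would Taylor-expand $\exp(iuZ_t)$ to order $m$; the standard integral-form remainder is bounded by $|uZ_t|^{m+1}/(m+1)!$, and the moment estimate on $Z_t$ yields $\E_0[|Z_t|^{m+1}]=O(t^{(m+1)/2})=o(t^{m/2})$. Taking conditional expectations gives
\begin{equation*}
\E_0\!\left[\exp\!\Big(iu\tfrac{X_t}{c_1\sqrt t}\Big)\right]=\sum_{\ell=0}^{m}\frac{(iu)^\ell}{\ell!}\,\E_0\!\left[\exp\!\Big(iu\tfrac{W^1_t}{\sqrt t}\Big)Z_t^\ell\right]+o(t^{m/2}).
\end{equation*}

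Finally, iterating Definition~\ref{lem5} gives $\prod_{k=1}^{\ell}\langle I_k,\widehat\W_t\rangle=\langle I_1\star\cdots\star I_\ell,\widehat\W_t\rangle$, hence
\begin{equation*}
Z_t^\ell=t^{-\ell/2}\!\!\sum_{I_1,\ldots,I_\ell\in\Ical_{m+1}}\!\!\bar c_{I_1}\cdots\bar c_{I_\ell}\,\langle I_1\star\cdots\star I_\ell,\widehat\W_t\rangle,
\end{equation*}
and substituting matches the stated expansion termwise (the $\ell=0$ term is $\E_0[\exp(iuW^1_t/\sqrt t)]$). The main obstacle is the second step: the effective $|I|\geq 2$ restriction on $Z_t$---inherited precisely from the vanishing of $c_\emptyset$ and $c_0$ in Condition~\ref{processcond}---is exactly what allows the $t^{-1/2}$ normalization to still produce something of order $t^{1/2}$, and propagating this control through the Taylor remainder to $o(t^{m/2})$ requires the $L^{m+1}$ bound uniformly across all multi-indices contributing to $Z_t$. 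Everything else is algebraic bookkeeping with the $\star$-product.
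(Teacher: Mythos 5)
Your proposal is correct and follows essentially the same route as the paper's proof: discard $\e_{m+1}(t)$ at cost $o(t^{m/2})$, factor out $\exp(iuW^1_t/\sqrt t)$, Taylor-expand the exponential of the remaining term $\langle \bar c,\widehat\W_t\rangle/\sqrt t$ to order $m$, control the remainder via Corollary~\ref{corr1} using that the vanishing of $\bar c_\emptyset,\bar c_0,\bar c_1$ forces $|I|+I(0)\geq 2$ on every surviving index, and convert powers to $\star$-products. The only cosmetic difference is that you bound the error-term contribution via $|e^{iuR_t}-1|\leq |u||R_t|$ where the paper invokes the mean-value theorem on the real and imaginary parts; both yield the same $o(t^{m/2})$ estimate.
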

\noindent \textbf{Proof.} See Appendix \ref{Appendix1}.

\vspace{0.3cm}

\noindent As a direct application of Theorem~\ref{prop2}, we recover Corollary 3 of Theorem 1 in \cite{BR:17}. We observe that, if $c_0\neq0$ and/or $X_0\neq0,$ the same result would hold with $X_t$ replaced by $X_t-X_0-c_0t$.
\begin{corollary}\label{BR}
\textbf{(A second-order (in $\sqrt{t}$) expansion.)}
 Fix $u\in \R$ and consider a process $(X_t)_{t\in [0,T]}$ satisfying Condition~\ref{processcond}.
Then
\begin{align*}
&\E_0\Big[\exp\Big(iu\frac{X_t}{c_1\sqrt t}\Big)\Big]\\
&=e^{-\frac{u^2} 2}\left(1+\Big[- \frac{c_{11}}{c_1}\frac{iu^3}2\Big]{\sqrt t}\right.\\
&\qquad\left.+\frac 1 2\Big [-\Big(\frac{ c_{01}}{c_1}+\frac{c_{10}}{c_1}\Big)u^2
+\Big(\frac{ c_{11}}{c_1}\Big)^2\Big(-\frac 1 2 u^2+u^4-\frac{1}{4}u^6\Big)
+\Big(\frac{ c_{21}}{c_1}\Big)^2\Big(\frac {1}{3}u^4-\frac {1}{2}u^2\Big)+\frac{ c_{111}}{c_1}\frac{u^4}{3} \Big]t \right)\\
&+o(t).
\end{align*}
\end{corollary}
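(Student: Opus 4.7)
The strategy is to specialize Theorem~\ref{prop2} to $m=2$, so that the remainder is $o(t)$, and to extract the coefficients of $\sqrt t$ and $t$ explicitly. The leading term $\E_0[\exp(iuW_t^1/\sqrt t)]=e^{-u^2/2}$ is the characteristic function of a standard normal. To evaluate the remaining summands I would invoke Lemma~\ref{lem2} with $u$ replaced by $u/\sqrt t$, obtaining
\[
\E_0\!\left[\langle I,\widehat\W_t\rangle\exp\!\Big(iu\frac{W_t^1}{\sqrt t}\Big)\right]
= (iu)^{I(1)}\frac{t^{|I|-I(1)/2}}{|I|!}\,e^{-u^2/2}
\]
when $I\in\{0,1\}^{|I|}$ and zero otherwise; together with the $t^{-\ell/2}$ prefactor in Theorem~\ref{prop2}, this identifies the relevant orders in $t$.

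Next, I would enumerate the nonzero $\bar c_I$ with $|I|\le 3$ allowed by Condition~\ref{processcond}, remembering that $\bar c_\emptyset=X_0/c_1=0$, $\bar c_0=0$, and $\bar c_1=0$ by definition. For the $\ell=1$ sum only four indices give contributions at order $\sqrt t$ or $t$: the index $(1,1)$ produces the $\sqrt t$-coefficient $-\tfrac{c_{11}}{c_1}\tfrac{iu^3}{2}$; the indices $(0,1)$ and $(1,0)$ jointly yield $-\tfrac{1}{2}\bigl(\tfrac{c_{01}}{c_1}+\tfrac{c_{10}}{c_1}\bigr)u^2$; and $(1,1,1)$ yields $\tfrac{c_{111}}{c_1}\tfrac{u^4}{6}$. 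All other admissible single indices produce contributions of order $t^{3/2}$ or smaller.

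For the $\ell=2$ sum, I would expand each star product $I_1\star I_2$ via Lemma~\ref{prop} and retain only those $\{0,1\}^{*}$-valued terms $J$ with $|J|-J(1)/2\le 2$ (larger values being $o(t)$ after the $t^{-1}$ prefactor). A power-counting argument shows that no $\ell=2$ pair contributes at order $\sqrt t$, and a direct computation reduces the $t$-order contribution to exactly two pairs. The pair $((1,1),(1,1))$, whose star product contains the terms $6(1,1,1,1)+2(0,1,1)+2(1,0,1)+2(1,1,0)+(0,0)$, produces $(c_{11}/c_1)^2(-u^2/2+u^4-u^6/4)$. The pair $((2,1),(2,1))$, whose only surviving $\{0,1\}^{*}$-valued components arise through the inner $\rho_{2,2}=1$ merge in Lemma~\ref{prop}, produces $(c_{21}/c_1)^2(u^4/3-u^2/2)$.

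The main obstacle is verifying that every other admissible pair --- those involving a single digit $\ge 2$ such as $((1,1),(2,1))$, mixed pairs such as $((1,1),(0,1))$ or $((0,1),(0,1))$, pairs coupling a length-$3$ index like $(1,1,1)$ with a length-$2$ one, and pairs with $\bar c_{00}$, $\bar c_{20}$, or $\bar c_{30}$ --- contributes only at order $o(t)$. This reduces to checking that the $\{0,1\}^{*}$ components of the corresponding star products all satisfy $|J|-J(1)/2\ge 5/2$, a routine (if combinatorially involved) exercise using the recursion of Lemma~\ref{prop} and the algorithm outlined in Remark~\ref{remcomp}. Collecting the surviving contributions and factoring out $e^{-u^2/2}$ then yields the stated expansion.
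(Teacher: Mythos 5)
Your proposal is correct and follows the paper's own proof essentially step for step: specialize Theorem~\ref{prop2} to $m=2$, evaluate each transform via Lemma~\ref{lem2} at frequency $u/\sqrt t$, discard negligible terms by the power count $|J|-J(1)/2$, and retain exactly the contributions of $(1,1)$, $(0,1)$, $(1,0)$, $(1,1,1)$ at $\ell=1$ and of $(1,1)\star(1,1)$ and $(2,1)\star(2,1)$ at $\ell=2$, including the identity $11\star 11=6(1111)+2(011)+2(101)+2(110)+(00)$. The surviving terms and their coefficients agree with the paper's computation, so no further comment is needed.
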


\noindent \textbf{Proof.} See Appendix \ref{Appendix1}.

\vspace{0.3cm}

This is a second order (in $\sqrt{t}$) expansion of the characteristic function of the \textit{standardized} process. Because the standardized process is locally Gaussian, the expansion is - as expected - around the Gaussian characteristic function. The first order yields a skewness adjustment (through the term $c_{11}$ representing the volatility of volatility associated with the driving Brownian motion $W_1$ of the level process). It is clear that the coefficient $c_{11}$ captures the correlation between the level process and the volatility process. The second order yields, instead, adjustments to the second, the fourth and the sixth moment of the distribution of the standardized process. Both the third moment correction and the corrections to positive moments are driven by the dynamics of the volatility process. \cite{BR:17} provide more discussion.  

\begin{remark}
The representation in Theorem~\ref{prop2} may also be used for regular moments. The following Corollary applies it to $\E_0[f(X_t)]$ given a sufficiently-differentiable function $f$ with bounded derivatives at zero. While working with moments of components of the \textit{It\^o signature}, as in Theorem~\ref{prop2}, may be more intuitive than working with iterated coefficients, as in Theorem~\ref{thm2}, both expansions can be easily implemented, given their algorithmic structure. In addition, as we will see in the following section, the expansion in Theorem~\ref{thm2} appears to be better suited to handle discontinuities, an observation which justifies our emphasis on it in Section~\ref{regular}.  
\end{remark}

\begin{corollary}\label{prop3}
\textbf{(Back to \enquote{regular} moments.)}
Consider an $m+1$-times $( W,t)$-differentiable process $(X_t)_{t\in[0,T]}$ with $X_0=0$ and expansion
$$
X_t=\langle c,\widehat\W_t\rangle+\e_m(t),
$$
satisfying Condition~\ref{eqn11} for some $N\in \N$.
 Assume that $f\in C^{m +1}(\R)$ satisfies $\sup|f'|<\infty$ and $\sup|f^{(m+1)}|<\infty$. Then,
\begin{eqnarray*}
\E_0[f(X_t)]
&=&f(0)+\sum_{\ell=1}^m
 \frac{f^{(\ell)}( 0)}{\ell!} \sum_{I_1,\ldots,I_\ell\in \Ical^{m}}
  c_{I_1}\cdots   c_{I_\ell}\E_0\bigg[\langle I_1\star \cdots\star I_\ell,\widehat\W_t\rangle \bigg] +o(t^{m/2} ),
\end{eqnarray*}
\end{corollary}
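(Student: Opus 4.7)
My plan is to chain Lemma \ref{lem4f}, Taylor's theorem, and repeated applications of Definition \ref{lem5}. Since $\sup|f'|<\infty$, Lemma \ref{lem4f} reduces the claim to expanding $\E_0[f(\langle c,\widehat\W_t\rangle)]$, with any residual absorbed into the $o(t^{m/2})$ term.

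I then apply Taylor's theorem with Lagrange remainder to $f$ at the base point $c_\emptyset=X_0=0$,
$$f(y)=\sum_{\ell=0}^m \frac{f^{(\ell)}(0)}{\ell!}\,y^\ell + R_m(y),\qquad |R_m(y)|\leq \frac{\sup|f^{(m+1)}|}{(m+1)!}\,|y|^{m+1},$$
substitute $y=\langle c,\widehat\W_t\rangle$, and take the conditional expectation. The zeroth term yields $f(0)$, and for $1\leq\ell\leq m$ iterating Definition \ref{lem5} gives
$$\langle c,\widehat\W_t\rangle^\ell = \sum_{I_1,\ldots,I_\ell\in\Ical^m} c_{I_1}\cdots c_{I_\ell}\,\langle I_1\star\cdots\star I_\ell,\widehat\W_t\rangle,$$
so the $\ell$-th Taylor term reproduces exactly the $\ell$-th summand in the statement.

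The main technical step is to show $\E_0[|R_m(\langle c,\widehat\W_t\rangle)|]=o(t^{m/2})$, for which, using the bound on $\sup|f^{(m+1)}|$, it suffices to show $\E_0[|\langle c,\widehat\W_t\rangle|^{m+1}]=O(t^{(m+1)/2})$. Because $c_\emptyset=X_0=0$, only indices with $|I|\geq 1$ contribute to $\langle c,\widehat\W_t\rangle$. Expanding the $(m+1)$-th power via Definition \ref{lem5} produces a finite linear combination of elements $\langle K,\widehat\W_t\rangle$ arising from products $I_1\star\cdots\star I_{m+1}$ with each $|I_j|\geq 1$. A short induction on Lemma \ref{prop} shows that the weight $w(I):=|I|+I(0)$ is additive under $\star$, so every such $K$ satisfies $w(K)=\sum_j w(I_j)\geq m+1$. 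Applying Corollary \ref{corr1} (at $m=1$, or to $\E_0[\langle K,\widehat\W_t\rangle^2]^{1/2}$ via Cauchy--Schwarz) then gives $\E_0[|\langle K,\widehat\W_t\rangle|]\leq C\,t^{w(K)/2}\leq C\,t^{(m+1)/2}$, and summing over the finitely many terms yields the required bound $O(t^{(m+1)/2})=o(t^{m/2})$.

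The main obstacle is the additivity of $w(I)=|I|+I(0)$ under $\star$, which must be checked case by case against the three summands in Lemma \ref{prop}: the first two simply append one coordinate and adjust $|I|$ and $I(0)$ consistently, while the term $\rho_{i_{|I|}j_{|J|}}(I'\star J')(0)$ removes two nonzero coordinates (as $\rho$ forces them to be equal and positive) and appends a single zero, so the net weight change is zero. Once this invariant is in hand, the remainder bound becomes almost automatic, and the corollary follows by assembling the Taylor terms with the $o(t^{m/2})$ residual from both Lemma \ref{lem4f} and $R_m$. An alternative route via generalized H\"older applied to $\prod_j|\langle I_j,\widehat\W_t\rangle|$, together with Corollary \ref{corr1} applied term-by-term, would also close the argument, but the star-based bookkeeping stays closer in spirit to the rest of the paper.
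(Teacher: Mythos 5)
Your proposal is correct and follows essentially the same route as the paper, whose proof of this corollary simply defers to the method of Theorem~\ref{prop2}: replace $X_t$ by $\langle c,\widehat\W_t\rangle$ at cost $o(t^{m/2})$, Taylor-expand $f$ at $c_\emptyset=0$, rewrite powers via the $\star$ product, and kill the Lagrange remainder using Corollary~\ref{corr1}. Your additivity of the weight $|I|+I(0)$ under $\star$ is a valid (and correctly verified) substitute for the paper's more direct convexity bound $\E_0[|\langle c,\widehat\W_t\rangle|^{m+1}]\leq C\sum_I |c_I|\,\E_0[|\langle I,\widehat\W_t\rangle|^{m+1}]$, and it correctly yields the weaker exponent $t^{(m+1)/2}$ (rather than $t^{m+1}$ as in Theorem~\ref{prop2}) that the relaxed hypothesis $c_\emptyset=0$ alone supports.
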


\noindent \textbf{Proof.} The method of proof leading to the statement of Theorem~\ref{prop2} immediately yields the result.

\vspace{0.3cm}

\begin{example}\textbf{\textit{(Back to Example \ref{ex2}.)}}
Recall that we wish to show that
\begin{align*}
\E_0[e^{iu(X_t-X_0)^2}]=1+iuc_1^2 t+o(t).
\end{align*}
Write $Y_t=X_t-X_0$ and $f(x)=\exp(iux^2).$ Thus, $f'(x)=2iuxf(x)$ and $f''(x)=(2iu-4u^2x^2)f(x)$. Set $m=2.$
Note, in fact, that
\begin{eqnarray*}
\E_0[e^{iuY_t^2}]&=&f(0) +f^{(1)}(0)\sum_{I \in \Ical^{\red 2}}
  c_{I}\E_0\bigg[\langle I,\widehat\W_t\rangle \bigg] \\
  &&\qquad+\frac{f^{(2)}(0)}{2}\sum_{I_1, I_2 \in \Ical^{\red 2}}
  c_{I_1}  c_{I_2}\E_0\bigg[\langle I_1\star I_2,\widehat\W_t\rangle \bigg]+o(t) \\
&=& 1 + iu\sum_{I_1, I_2 \in \Ical^{\red 2}}
  c_{I_1}  c_{I_2}\E_0\bigg[\langle I_1\star I_2,\widehat\W_t\rangle \bigg]+o(t) \\
& = & 1 + iu c^2_{1} \E_0\bigg[\langle 1\star 1,\widehat\W_t\rangle \bigg]+o(t) \\ 
\textrm{given Example \ref{ex4}} & = & 1 + iu c^2_1\underbrace{\E_0\bigg[2\langle 11 ,\widehat\W_t\rangle \bigg]}_{= 0} + iu c^2_1\underbrace{\E_0 \bigg[\langle 0 ,\widehat\W_t\rangle \bigg]}_{= t}+o(t) \\
& = &iu c^2_{1}t + o(t),   
\end{eqnarray*}
  which proves the statement. 
\end{example}

\section{Application: automation to any order}\label{any}

As emphasized, by using the properties of the \textit{It\^o signature}, the reported expansions can be conducted to any order. In this section, we present an additional corollary to Theorem~\ref{prop2} giving the expansion of the characteristic function of the standardized process to the \textit{third order} (in $\sqrt{t}$).    

We assume the same process as in Condition \ref{processcond}. We solely restrict $c_{20}$ to be zero. The implication is that the drift process, $c_0(t),$ and the volatility process, $c_1(t)$, are allowed to be correlated with the level process, $X_t,$ but they are not correlated with each other. This is the model used in \citet{BFR} to price options with expirations within the day. We note that setting $c_{20}=0$ is without loss of generality. The change does not affect the first two orders of the expansion (Corollary \ref{BR} would be unchanged). While it would affect the third order, the corresponding modifications are trivial to incorporate in light of the algorithmic nature of the signature properties. 

It is, indeed, the algorithmic nature of the signature properties which allows us to easily compute the result in the following corollary (in which the order is $t^{3/2}$) as well as results for any order $t^n$. Due to the complexity of these expansions, the cases $n=1/2$ (c.f., \citealp{jacod2014efficient}, and \citealp{todorov2019nonparametric}) and $n=1$ (c.f., \citealp{BR:17}, and \citealp{To:21}) are, to date and to the best of our knowledge, the only ones in the literature. Corollary \ref{BR1}, below, is explicit about the extension to the next order (i.e., the case $n=3/2$). Theorem~\ref{prop2} generalizes to any order.\footnote{The code is available upon request.} 

\begin{corollary}\label{BR1}
\textbf{(A third-order (in $\sqrt{t}$) expansion.)}
Fix $u\in \R$ and consider a process $(X_t)_{t\in [0,T]}$ satisfying Condition~\ref{processcond} for $m=3$. Then, it holds that 
\begin{eqnarray*}
&&\E_0\Big[\exp\Big(iu\frac{X_t}{c_1\sqrt t}\Big)\Big]\\
&=&e^{-\frac{u^2} 2}\left(1+\Big[- \frac{c_{11}}{c_1}\frac{iu^3}2\Big]{\sqrt t}\right.\\
&+&\left.\frac 1 2\Big [-\Big(\frac{ c_{01}}{c_1}+\frac{ c_{10}}{c_1}\Big)u^2
+\Big(\frac{ c_{11}}{c_1}\Big)^2\Big(-\frac 1 2 u^2+u^4-\frac{1}{4}u^6\Big)
+\Big(\frac{ c_{21}}{c_1}\Big)^2\Big(\frac {1}{3}u^4-\frac {1}{2}u^2\Big)+\frac{ c_{111}}{c_1}\frac{u^4}{3} \Big]t \right)\\
&+ & \left [ \frac{iu^3}{2}\left( - \frac{c_{011}}{3 c_1} - \frac{c_{101}}{3 c_1} - \frac{c_{10} c_{11}}{c_1^2} -\frac{c_{110}}{3 c_1} - \frac{c_{01} c_{11}}{c_1^2} - \frac{c_{11}^3}{3 c_1^3} - \frac{c_{11} c_{111}}{c_1^2} - \frac{c_{121} c_{21}}{3 c_1^2} \right) \right. \\
& + & \left. \frac{iu^5}{2}\left(\frac{c_{10} c_{11}}{2 c_1^2} + \frac{c_{01} c_{11}}{2 c_1^2} + \frac{5 c_{11}^3}{4 c_1^3} + \frac{c_{11} c_{111}}{c_1^2} + \frac{c_{1111}}{12 c_1} + \frac{c_{121} c_{21}}{6 c_1^2} + \frac{11c_{11} c_{21}^2}{12 c_1^3} \right) \right.\\
 &+ & \left. \left. \frac{iu^7}{4}\left(- \frac{c_{11}^3}{c_1^3} - \frac{c_{11} c_{111}}{3 c_1^2} - \frac{c_{11} c_{21}^2}{3 c_1^3}\right) + \frac{iu^9}{4}\frac{c_{11}^3}{12 c_1^3}\right]t^{3/2} \right) \\
&+&o(t^{3/2}).
\end{eqnarray*}
\end{corollary}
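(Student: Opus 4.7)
The plan is to derive Corollary~\ref{BR1} as a direct instantiation of Theorem~\ref{prop2} with $m=3$. Starting from
$$
\E_0\!\Big[\exp\!\Big(iu\frac{X_t}{c_1\sqrt t}\Big)\Big] = \E_0\!\Big[\exp\!\Big(\frac{iu}{\sqrt t}W_t^1\Big)\Big] + \sum_{\ell=1}^3 \frac{(iu)^\ell}{t^{\ell/2}\,\ell!}\sum_{I_1,\ldots,I_\ell\in \Ical^{4}}\bar c_{I_1}\cdots \bar c_{I_\ell}\,\E_0\!\Big[\langle I_1\star\cdots\star I_\ell,\widehat\W_t\rangle\exp\!\Big(iu\frac{W_t^1}{\sqrt t}\Big)\Big]+o(t^{3/2}),
$$
the leading term equals $e^{-u^2/2}$, and every remaining piece is handled by a two-step reduction: first, use Lemma~\ref{prop} recursively to expand each $I_1\star\cdots\star I_\ell$ as a linear combination of basis signature indices $J$; second, apply Lemma~\ref{lem2} (after substituting $u\mapsto u/\sqrt t$) to evaluate $\E_0[\langle J,\widehat\W_t\rangle \exp(iuW_t^1/\sqrt t)]=(iu/\sqrt t)^{J(1)}t^{|J|}/|J|!\cdot e^{-u^2/2}$ whenever $J\in\{0,1\}^{|J|}$, and zero otherwise. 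The $e^{-u^2/2}$ factor is then extracted as a common prefactor.

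Next, I would use the order-counting identity: each term in the $\ell$-fold sum contributes $t^{|J|-\ell/2 - J(1)/2}$. For order $\sqrt t$ ($|J|-J(1)/2=(\ell+1)/2$), order $t$ ($(\ell+2)/2$), and order $t^{3/2}$ ($(\ell+3)/2$), I would enumerate all admissible $(\ell; I_1,\ldots,I_\ell; J)$ triples with $|I_j|\leq 4$, restricted by Condition~\ref{processcond} ($c_{Ii}=0$ for $i>1$, $c_{Ii1}=0$ for $i>2$, $c_{Ii0}=0$ for $i>3$, $\bar c_1=0$, $X_0=c_0=0$) and the additional hypothesis $c_{20}=0$. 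Because $J$ must lie in $\{0,1\}^{|J|}$ for the expectation to be nonzero, most star products produced by Lemma~\ref{prop} collapse to a short list; the $c_{Ii1}$ and $c_{Ii0}$ restrictions further collapse the admissible $I_j$'s (e.g., only $c_{21},c_{111},c_{121},c_{1111}$ can carry a single index $>1$). Collecting all surviving contributions at $\ell=1,2,3$ and summing reproduces the first two bracketed lines (recovering Corollary~\ref{BR}) and the third bracketed line of powers $iu^3,iu^5,iu^7,iu^9$ at order $t^{3/2}$.

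The main obstacle will be the combinatorial bookkeeping at $\ell=3$: each triple star product $I_1\star I_2\star I_3$ with $|I_j|\le 2$ can generate on the order of a dozen basis strings through Lemma~\ref{prop}, and one must sum over many such triples with the appropriate coefficients $\bar c_{I_1}\bar c_{I_2}\bar c_{I_3}$ before the restrictions from Condition~\ref{processcond} kill most of them. This is precisely the calculation automated by the algorithm described in Remark~\ref{remcomp} (and mentioned in the footnote to this corollary), so the proof is best presented as a statement that the result follows by the algorithmic expansion, with the $\ell=1,2$ contributions identified by hand as matching Corollary~\ref{BR} and the $\ell=3$ contributions at order $t^{3/2}$ laid out in a short table of nonzero star-product entries. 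A useful sanity check at the end is that every coefficient on the right-hand side of Corollary~\ref{BR1} is dimensionally homogeneous — each term carries the same total \emph{depth} of $c$-coefficients as powers of $c_1$ in the denominator — which both cross-checks the algebra and guides the enumeration.
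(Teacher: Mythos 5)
Your proposal follows exactly the route the paper takes: the paper's proof of Corollary~\ref{BR1} is the single line that the result is an implication of Theorem~\ref{prop2} (with the bookkeeping delegated to the algorithm of Remark~\ref{remcomp}), and your instantiation with $m=3$, the reduction of star products via Lemma~\ref{prop}, the evaluation of the $W$-transforms via Lemma~\ref{lem2} with $u\mapsto u/\sqrt t$, and the order-counting $t^{|J|-J(1)/2-\ell/2}$ are precisely how the analogous $m=2$ case (Corollary~\ref{BR}) is carried out in Appendix~\ref{Appendix1}. The approach is correct and essentially identical to the paper's.
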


\noindent \textbf{Proof.} The result is an implication of Theorem~\ref{prop2}.

\vspace{0.3cm}

\noindent In a second-order expansion, like the one in Corollary \ref{BR}, the dynamics of volatility - through leverage and the volatility and volatility - are free to tilt the Gaussian density and give it both skewness and kurtosis. The volatility of volatility is, however, effectively \enquote{frozen}. The third-order expansion in Corollary \ref{BR1} would \enquote{unfreeze} the volatility of volatility (as well as other quantities) by allowing characteristics that drive its dynamics, like $c_{011}$ and $c_{111}$, to contribute to the distributional tilts. This increased depth may speak to risk premia on higher-order equity characteristics, something that has drawn recent attention in financial markets. The VVIX, as an example, was introduced by the CBOE in 2012.

\section{Adding discontinuities}\label{Levy}
In this section, we add a vector $N$ of $e$ independent compound Poisson processes to the original source of randomness (the $d$-dimensional Brownian motion $W:=(W^1,\ldots,W^d)$). The assumed compound Poisson processes have intensities $\lambda_1,\ldots,\lambda_e$ and jump densities $\nu_1,\ldots, \nu_e$. These processes are independent of $W$ and we assume that their Lévy measures satisfy $\lambda_j\int |\xi|^k\nu_j(d\xi)<\infty,$ for each $k$ and $j$. We set $$ Z_t:=(N_t^e,\ldots, N_t^1,t,W_t^1,\ldots,W_t^d).$$ 
We denote the components of $Z_t$ by
$$ Z_t^{-j}:=N_t^j,\qquad  Z_t^0 :=t,\qquad\text{and}\qquad  Z_t^i :=W^i_t,$$
for each $j\in\{1,\ldots,e\}$ and $i\in \{1,\ldots,d\}.$ The \textit{It\^o signature} of $Z_t$ is, instead, denoted by $\Z$.

It will be convenient to work with an extended version of $Z_t$ ($\overline Z_t$). To this extent, for some fixed $m\in \N,$ we consider  
$$ \overline Z_t:=(\overline N_t^{e\cdot m},\ldots, \overline N_t^1,t,W_t^1,\ldots,W_t^d),$$
where
$ \overline N_t:=(N_t^{1,1},\ldots,N_t^{1,m} ,\ldots, N_t^{e,1},\ldots,N_t^{e,m}).$ Thus, e.g., $\overline N^1_t = N_t^{1,1}$ and $\overline N^m_t = N_t^{1,m}.$ Here, $N^{j,k}$ denotes the compound Poisson process with intensity $\lambda_j$ and jump compensator $\lambda_j \int \xi^k\nu_j(d\xi)$: 
$$N_t^{j,k}=\sum_{s\leq t}(\Delta N_t^j)^k.$$ 
Once more, we denote the components of $\overline Z_t$ by
$$ \overline Z_t^{-j}:=\overline N_t^j,\qquad  \overline Z_t^0:=t\qquad\text{and}\qquad  \overline Z_t^i:=W^i_t,$$
for each $j\in\{1,\ldots,e\cdot m \}$ and $i\in \{1,\ldots,d\}$
and its \textit{It\^o signature} by $\overline \Z$.

For future convenience in the notation, we also introduce the bijection
$$\Lcal:\{-e\cdot m,\ldots,-1\}\to \{1,\ldots,e\}\times \{1,\ldots,m\},$$ which gives
$$\overline Z_t^j=N_t^{\Lcal_1(j),\Lcal_2(j)}\qquad\text{and}\qquad \overline Z_t^{\Lcal^{-1}(j,k)}=N_t^{j,k},$$
 for each $j<0$. In words, we can map every element of the signature (defined by $j$) into the corresponding Poisson process (defined by $\Lcal_1(j)$) and the corresponding power (defined by $\Lcal_2(j)$). Fixing, for instance, $m=2$ and $e=3,$ we have
 $$(Z^{-1}_t,\ldots, Z^{-6}_t)=(N_t^{1,1},N_t^{1,2},N_t^{2,1},\ldots,N_t^{3,2}),$$
which implies, e.g., that $Z^{-5}_t = N_t^{3,1}.$

As a first step towards a theory which allows for the addition of compound Poisson processes, we now extend the product operator in previous sections. We do so by introducing the $\ostar$ product.

\subsection{The  $\ostar$ product}
The following Definition and Lemma are the analogue of Definition \ref{lem5} and Lemma \ref{prop} in Section~\ref{process}. 

\begin{definition}\label{lem5new} For each $I$ and $J,$ assume that
\begin{eqnarray} \label{m}
(\Lcal_2(i_k)+\Lcal_2(j_\ell))1_{\{\Lcal_1(i_k)=\Lcal_1(j_\ell)\}}\leq m,
\end{eqnarray}
for each $k\leq |I|$ and $\ell\leq |J|$. We define
$$\langle I\ostar J,\overline \Z_t\rangle {\red :}= \langle I,\overline\Z_t\rangle\langle J,\overline\Z_t\rangle.$$
\end{definition}

\begin{lemma}\label{newprop}
Set $\overline \rho_{ij}:=1_{\{i=j>0\}}$,  $\tau_{ij}:=1_{\{i,j<0\}}1_{\{\Lcal_1(i)=\Lcal_1(j)\}}$ and
\begin{eqnarray}\label{circ}
i\circ j:=\Lcal^{-1}(\Lcal_1(i),\Lcal_2(i)+\Lcal_2(j)).
\end{eqnarray}
Then, 
$$I\ostar J=(I'\ostar J)i_{|I|}+(I\ostar J')j_{|J|}+\overline \rho_{ij}(I'\ostar J')0+\tau_{ij}(I'\ostar J')i_{|I|}\circ j_{|J|},$$
and  $\emptyset\ostar I=I\ostar\emptyset=I$.
\end{lemma}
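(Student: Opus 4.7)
The plan is to induct on $|I|+|J|$ using the It\^o product formula for c\`adl\`ag semimartingales, mirroring the proof of Lemma~\ref{prop} while tracking the additional pure-jump covariation. The base case $I=\emptyset$ or $J=\emptyset$ is immediate from $\langle\emptyset,\overline\Z_t\rangle=1$ together with the convention $\emptyset\ostar I=I\ostar\emptyset=I$.

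For the inductive step, extend Remark~\ref{aaa} to the c\`adl\`ag setting so that
$\langle I,\overline\Z_t\rangle=\int_0^t\langle I',\overline\Z_{s-}\rangle\,d\overline Z^{i_{|I|}}_s$
and analogously for $J$. Apply the integration-by-parts formula
$$\langle I,\overline\Z_t\rangle\langle J,\overline\Z_t\rangle=\int_0^t\langle I,\overline\Z_{s-}\rangle\,d\langle J,\overline\Z_s\rangle+\int_0^t\langle J,\overline\Z_{s-}\rangle\,d\langle I,\overline\Z_s\rangle+\bigl[\langle I,\overline\Z\rangle,\langle J,\overline\Z\rangle\bigr]_t.$$
By the inductive hypothesis applied to the predictable integrands, $\langle I,\overline\Z_{s-}\rangle\langle J',\overline\Z_{s-}\rangle=\langle I\ostar J',\overline\Z_{s-}\rangle$ and symmetrically, so the first two terms deliver $\langle(I\ostar J')j_{|J|},\overline\Z_t\rangle+\langle(I'\ostar J)i_{|I|},\overline\Z_t\rangle$.

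The covariation is handled via $d\bigl[\int F\,dX,\int G\,dY\bigr]_s=F_{s-}G_{s-}\,d[X,Y]_s$, which reduces matters to $d[\overline Z^{i_{|I|}},\overline Z^{j_{|J|}}]_s$. The four cases are exhaustive: (a) both indices positive — the same Brownian gives $ds$, yielding the $\overline\rho_{i_{|I|}j_{|J|}}(I'\ostar J')0$ term after one more application of the inductive hypothesis on the integrand, while distinct Brownians give $0$; (b) either index is $0$ — vanishes since time is of bounded variation; (c) one positive and one negative — vanishes by independence of $W$ from $N$; (d) both negative — by independence of distinct Poisson sources this vanishes unless $\Lcal_1(i_{|I|})=\Lcal_1(j_{|J|})$, in which case $\sum_{s\leq t}(\Delta N^{\Lcal_1(i_{|I|})}_s)^{\Lcal_2(i_{|I|})+\Lcal_2(j_{|J|})}=\overline Z^{i_{|I|}\circ j_{|J|}}_t$. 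Invoking the inductive identity once more, case (d) contributes $\tau_{i_{|I|}j_{|J|}}\langle(I'\ostar J')(i_{|I|}\circ j_{|J|}),\overline\Z_t\rangle$, and combining all four cases yields the stated recursion.

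The main obstacle is case (d): one must recognize $\Lcal^{-1}(\Lcal_1(i_{|I|}),\Lcal_2(i_{|I|})+\Lcal_2(j_{|J|}))$ as a legitimate index of $\overline\Z$, which requires $\Lcal_2(i_{|I|})+\Lcal_2(j_{|J|})\leq m$. This is precisely condition~(\ref{m}) built into Definition~\ref{lem5new}; without it, the Poisson covariation outputs a coordinate outside the range of $\overline Z$ and the recursion cannot close. The only other subtlety is the consistent use of left limits in the jump integrands, which is routine once Remark~\ref{aaa} is upgraded to the c\`adl\`ag setting.
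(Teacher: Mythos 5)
Your proof is correct and follows essentially the same route as the paper's: induction on $|I|+|J|$, the semimartingale integration-by-parts formula, and a split of the covariation into its continuous part (giving the $\overline\rho$ term) and its jump part (giving the $\tau$ term via $\sum_{s\le t}(\Delta N^{\Lcal_1(i)}_s)^{\Lcal_2(i)+\Lcal_2(j)} = N^{\Lcal_1(i),\Lcal_2(i)+\Lcal_2(j)}$). Your explicit case analysis, the attention to left limits in the jump integrands, and the remark on why condition~(\ref{m}) is needed to keep $i\circ j$ a valid coordinate of $\overline Z$ are all sound refinements of the argument the paper gives more tersely.
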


\noindent \textbf{Proof.} See Appendix \ref{Appendix2}.

\vspace{0.3cm}

\noindent Lemma \ref{newprop} is the result of an application of It\^o's Lemma for discontinuous processes. A simple comparison between Lemma \ref{prop} and Lemma \ref{newprop} clarifies that the difference between the operator $\star$ and the operator $\ostar$ resides in the last term, i.e., the term that captures the impact of discontinuities on the product of \textit{It\^o signature} elements. This is, in essence, a term representing jump co-variation. Next, we provide an illustrative example.

\begin{example}
We recall that $\overline Z^{-1}_t=\overline N_t^{1,1}$ and $\overline Z^{-2}_t=\overline N_t^{1,2}$. We have that
$$(-1)\circ(-1)=\Lcal^{-1}(\Lcal_1(-1),\Lcal_2(-1)+\Lcal_2(-1))=\Lcal^{-1}(1,2)=(-2),$$
where the first equality is due to the definition in Eq.~(\ref{circ}) of Lemma \ref{newprop} and the remaining two are due to an application of the bijection introduced in the previous subsection.
As a result, we have that $(-1)\overline \star (-1) = 2(-1-1)+(-2)$. We can now easily verify the relation among indices implied by the $\ostar$ operator by using It\^o's formula:
$$\langle (-1),\overline\Z_t\rangle\langle (-1),\overline\Z_t\rangle
= N^1_t \fdot N^1_t
=2\int_0^t N_s^1dN_s^1+\sum_{s\leq t}(\Delta N^1_s)^2
=2\langle (-1-1),\overline\Z_t\rangle+\langle (-2) ,\overline\Z_t\rangle.$$
Importantly, the example clarifies that, for $m$ large enough (a condition which justifies Eq. (\ref{m})) products of \textit{It\^o signature} elements can be expressed - like in the continuous case - as linear combinations of other \textit{It\^o signature} elements. In the example, the product of the first Poisson component of the signature ($I=(-1)$) is expressed as 2 times the component $I = (-1,-1)$ (the integral of the component $I=(-1)$ with respect to itself) and the component $I = (-2) $ (the quadratic variation of the component $I=(-1)$). It also clarifies the motivation for working with $\overline Z$ instead of $Z$.
\end{example}

\subsection{A triplet of expansions}
We now turn to expansions of the discountinuous process $X_t$ as well as to expansions of its \enquote{regular} and \enquote{irregular} moments. 
\subsubsection{The process}
The definition of an $n+1$-times $\overline Z$-differentiable process is consistent with that in Subsection~\ref{proexp}. 
 \begin{definition}
A process $(X_t)_{t\in[0,T]}$ is $n+1$-times $\overline Z$-differentiable if
$$
X_t=\sum_{I\in \Ical^n}c_{I}\langle I,\overline\Z_t\rangle+\e_n(t),
$$
where $(\e_n(t))_{t\in[0,T]}$ is an error term given by
$$\e_n(t)=\sum_{|I|=n+1}
 \int_0^t\int_0^{t_{n+1}}\cdots\int_0^{t_{2}} c_{I}(t_1) d\overline Z_{t_{1}}^{i_{1}}\cdots d\overline Z_{t_{|I|}}^{i_{|I|}},$$
for some stochastic process $t\mapsto c_{I}(t)$. We assume that the map $t\mapsto \E_0[c_{I}(t) ^{2N}] $ satisfies Condition \ref{eqn11}. 
\end{definition}
\begin{remark}
Using the short-hand notation introduced in Definition~\ref{def2}, an $n+1$-times $\overline Z$-differentiable process $(X_t)_{t\in[0,T]}$ may, also, be written compactly as follows: 
\begin{eqnarray}\label{process2}
X_t=\langle c,\overline\Z_t\rangle+\e_n(t).
\end{eqnarray}
\end{remark}
\noindent Next, we turn to a probability bound on the error term. Recall that, for each $I_1,\ldots,I_n,$ by Definition~\ref{lem5new}, it holds that
$$\E_0[\langle I_1\ostar\cdots\ostar I_n,\overline\Z_t\rangle] = \E_0[\langle I_1,\overline\Z_t\rangle\cdots \langle I_n,\overline\Z_t\rangle].
$$
Any moment of $\overline\Z_t$ can, therefore, be expressed as a linear combination of terms of the form $\E_0[\langle I,\overline\Z_t\rangle]$.

\begin{lemma}\label{lem_momentsJ}
For each $I,$ we have that
$$\E_0[\langle I,\overline\Z_t\rangle]
=\begin{cases}
\frac{t^{|I|}}{{|I|}! } \prod_{k=1}^{|I|}\Big( \lambda_{\Lcal_1(i_k)}\int \xi^{\Lcal_2(i_k)}\nu_{\Lcal_1(i_k)}(d\xi)1_{\{i_k<0\}}+1_{\{i_k=0\}}\Big) & \text{if }I=(0,-1,...,-e \cdot m)^{|I|}\\
0 &\text{else}.
\end{cases}$$
\end{lemma}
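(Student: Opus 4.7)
The plan is to proceed by induction on $|I|$, exploiting the recursive representation
$\langle I,\overline\Z_t\rangle=\int_0^t \langle I',\overline\Z_s\rangle\, d\overline Z_s^{i_{|I|}}$, the obvious analogue of Remark~\ref{aaa} for the enlarged driver $\overline Z$. The base case $|I|=0$ reads $\E_0[\langle \emptyset,\overline\Z_t\rangle]=1=t^0/0!$, and the empty product is $1$, so the claim holds.

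For the inductive step I would distinguish three cases according to the sign of $i_{|I|}$. When $i_{|I|}>0$, the integrand is integrated against a Brownian motion, so $\int_0^t \langle I',\overline\Z_s\rangle\,dW^{i_{|I|}}_s$ is a local martingale; a standard BDG-type bound (adapted to the jump setting, and using the assumed integrability $\lambda_j\int|\xi|^k\nu_j(d\xi)<\infty$ for all $k,j$) shows it is square-integrable on $[0,T]$ and hence a true martingale, so its expectation vanishes. This forces the expectation to vanish whenever any entry of $I$ is strictly positive, matching the \enquote{else} branch of the statement. When $i_{|I|}=0$, Fubini gives $\E_0[\langle I,\overline\Z_t\rangle]=\int_0^t\E_0[\langle I',\overline\Z_s\rangle]\,ds$. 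When $i_{|I|}<0$, the process $N^{\Lcal_1(i_{|I|}),\Lcal_2(i_{|I|})}_t=\sum_{s\le t}(\Delta N^{\Lcal_1(i_{|I|})}_s)^{\Lcal_2(i_{|I|})}$ is a compound Poisson process whose compensator is the deterministic rate $\lambda_{\Lcal_1(i_{|I|})}\int \xi^{\Lcal_2(i_{|I|})}\nu_{\Lcal_1(i_{|I|})}(d\xi)\,ds$; the compensation formula together with Fubini therefore yields
$$\E_0\!\left[\int_0^t\langle I',\overline\Z_s\rangle\,d\overline Z_s^{i_{|I|}}\right]=\lambda_{\Lcal_1(i_{|I|})}\!\int \xi^{\Lcal_2(i_{|I|})}\nu_{\Lcal_1(i_{|I|})}(d\xi)\int_0^t\E_0[\langle I',\overline\Z_s\rangle]\,ds.$$

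Setting $a_i:=1_{\{i=0\}}+\lambda_{\Lcal_1(i)}\!\int \xi^{\Lcal_2(i)}\nu_{\Lcal_1(i)}(d\xi)\,1_{\{i<0\}}$, the last two cases unify as $\E_0[\langle I,\overline\Z_t\rangle]=a_{i_{|I|}}\int_0^t\E_0[\langle I',\overline\Z_s\rangle]\,ds$, while the Brownian case kills the expectation at the current step and hence for the whole iterated integral. Unrolling the recursion $|I|$ times and collapsing the nested time integrals via Fubini produces
$$\E_0[\langle I,\overline\Z_t\rangle]=\bigg(\prod_{k=1}^{|I|}a_{i_k}\bigg)\int_0^t\!\!\int_0^{t_{|I|}}\!\!\!\cdots\!\int_0^{t_2}\!1\,ds_1\cdots ds_{|I|}=\bigg(\prod_{k=1}^{|I|}a_{i_k}\bigg)\frac{t^{|I|}}{|I|!},$$
which is exactly the stated formula, valid when every entry of $I$ lies in $\{0,-1,\ldots,-e\cdot m\}$ and zero otherwise.

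The main obstacle is the integrability check at the Brownian step: one needs $\langle I',\overline\Z_s\rangle\in L^2$ uniformly on $[0,T]$ in order to convert the local martingale into a true mean-zero martingale. This is not immediate because $\overline\Z$ now contains the jump coordinates $\overline N^{j,k}$; however, combining the moment hypothesis on the Lévy measures with a straightforward extension of Lemma~\ref{lemm9} (applying BDG separately to the Brownian part and Kunita-type inequalities to the jump part) yields the required uniform $L^2$ bound. Modulo this technicality, the proof is a clean induction plus Fubini.
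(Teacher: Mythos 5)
Your proof is correct and follows essentially the same route as the paper's: the paper's (very terse) argument likewise iterates the compensation identity $\E_0\big[\int_0^t H_s\, d\overline Z_s^i\big]=\lambda_{\Lcal_1(i)}\int\xi^{\Lcal_2(i)}\nu_{\Lcal_1(i)}(d\xi)\,\E_0\big[\int_0^t H_s\,ds\big]$ for $i<0$, combined with Fubini for $i=0$ and the vanishing expectation of the Brownian terms, exactly as in your induction. Your extra attention to the $L^2$ bound needed to turn the local martingales into true mean-zero martingales is a detail the paper leaves implicit (it is supplied by Lemma~\ref{lem9new}).
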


\noindent \textbf{Proof.} See Appendix \ref{Appendix2}.

\vspace{0.3cm}

Before stating the next result, we extend Definition~\ref{def2}. For a vector $I,$ we denote by $I(<0)$, respectively $I(>0)$, the number of strictly negative, respectively strictly positive, entries in $I$.
\begin{lemma}\label{lem9new}
Fix $ \delta>0$, a vector $I=(i_1,\ldots, i_{|I|})$ and a process $(H_t)_{t\in[0,T]}$  such that $t\mapsto \E_0[H_t^{2K}]$ is bounded on $[0,\delta]$. Then, there exists a constant $C_{2K}>0$ such that 
$$\E_0\bigg[\bigg(\int_0^t\int_0^{t_{|I|}}\cdots\int_0^{t_{2}} H_{t_{1}} d\overline Z_{t_{1}}^{i_{1}}\cdots d\overline Z_{t_{|I|}}^{i_{|I|}}\bigg)^{2K}\bigg]
\leq \frac{C_{2K}^{|I|-I(0)}t^{KI(>0)+2KI(0)+I(<0)}}{|I|!}\sup_{t\in[0,\delta]}\E_0[H_{t}^{2K}]$$
for each $t\in[0,\delta]$.
\end{lemma}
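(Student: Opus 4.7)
The plan is to prove the estimate by induction on $|I|$, generalising Lemma~\ref{lemm9} to accommodate the two new situations produced by the enlarged signal $\overline Z$: integration against time (which was already present implicitly) and, crucially, integration against a compound-Poisson component. The base case $|I|=0$ is immediate from the assumption that $t\mapsto \E_0[H_t^{2K}]$ is bounded on $[0,\delta]$, so all the work lies in the inductive step. Writing the iterated integral as $M_t^I:=\int_0^t M_s^{I'}\,d\overline Z_s^{i_{|I|}}$, I would split into three cases according to the sign of $i_{|I|}$, namely (i) $i_{|I|}=0$, (ii) $i_{|I|}>0$, and (iii) $i_{|I|}<0$, and show that each contributes the expected incremental factor in the exponent $KI(>0)+2KI(0)+I(<0)$ while preserving the denominator $|I|!$.

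In case (i), Jensen's inequality gives $\E_0[(M_t^I)^{2K}]\leq t^{2K-1}\int_0^t \E_0[(M_s^{I'})^{2K}]\,ds$; substituting the inductive hypothesis and integrating $s^{\alpha}$ with $\alpha:=KI'(>0)+2KI'(0)+I'(<0)$ produces the required exponent $\alpha+2K$ and, because $\alpha+1\geq |I|$ for $K\geq 1$, absorbs an extra $1/|I|$ into the factorial. In case (ii), the classical Burkholder-Davis-Gundy inequality for continuous martingales bounds $\E_0[(M_t^I)^{2K}]$ by $C_{2K}\,\E_0\big[\big(\int_0^t (M_s^{I'})^2\,ds\big)^K\big]$; one further application of Jensen puts this into the same form as case (i) and yields the factors $C_{2K}$ and $t^K$ together with the required factorial decrement.

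The delicate case is (iii), where the outer integration is against $\overline Z^{i_{|I|}}=N^{\Lcal_1(i_{|I|}),\Lcal_2(i_{|I|})}$, a compound Poisson process of finite variation. My plan is to decompose $dN^{j,k}_s=d\tilde N^{j,k}_s+c^{j,k}\,ds$ with $c^{j,k}=\lambda_j\int\xi^k\,\nu_j(d\xi)$. The compensator piece is a time integral of $M^{I'}$ and is therefore covered by case (i); it produces a factor $t^{2K}$ which is dominated by $t$ on $[0,\delta\wedge 1]$. For the compensated jump martingale I would apply the BDG inequality for purely discontinuous martingales, which bounds the $2K$-th moment by the $K$-th moment of the quadratic variation $\int_0^t (M_s^{I'})^2\,d\overline Z_s^{j,2k}$; conditioning on the Poisson counting process, whose $2K$-th moment on $[0,\delta]$ is of order $t$ (since $P(N^*_t\geq 2)=O(t^2)$ while the single-jump contribution is $O(t)$), then yields the required factor $t^1$, matching the exponent because $I(<0)=I'(<0)+1$ while $I(>0)$ and $I(0)$ are unchanged.

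The main obstacle I anticipate lies precisely in case (iii), specifically in recovering the clean $1/|I|!$ factorial. Unlike cases (i) and (ii), where the extra $1/|I|$ emerges automatically from integrating $s^\alpha\,ds$, the conditioning-on-jumps argument above produces a $\sup_{s\leq t}$ rather than an integral and so loses the factorial increment. I would address this by exploiting the simplex structure of iterated Poisson integrals---expanding $M_t^I$ as an ordered sum over jump tuples of $N^{\Lcal_1(i_{|I|})}$ and symmetrising---or, equivalently, by iterating the compensator/martingale decomposition so that each recursive step picks up the missing factor from integrating a dominating power of $s$ produced by the compensator piece. Once the factorial is accounted for in each case, collecting the three bounds and absorbing all constants into a single $C_{2K}$ (depending only on $K$, the intensities $\lambda_j$ and the jump moments $\int|\xi|^{2Kk}\nu_j(d\xi)$) completes the induction.
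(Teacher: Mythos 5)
Your overall architecture (induction on $|I|$, splitting on the sign of the last index, with cases (i)--(ii) handled exactly as in Lemma~\ref{lemm9}) is the right one and matches the paper's. Cases (i) and (ii) are fine. The problem is case (iii), which is the only genuinely new content of this lemma, and which your proposal does not actually prove: you yourself flag that the BDG-plus-conditioning route delivers a bound of the form $t\cdot\sup_{s\le t}(\cdot)$ rather than $\int_0^t\E_0[(\cdot)]\,ds$, and the two fixes you offer are left as sketches. The difficulty is structural, not cosmetic. After BDG for purely discontinuous martingales you are facing $\E_0\big[\big(\int_0^t (M_s^{I'})^2\,dN_s^{j,2k}\big)^{K}\big]$, i.e.\ a moment of another Poisson integral of exactly the same type, so the problem has not been reduced; and the conditioning shortcut either loses the $1/|I|!$ (via the supremum) or, if one tries Cauchy--Schwarz against $\E_0[(N_t^{j,2k})^{2K}]=O(t)$, yields only $t^{1/2}$ instead of the required $t^{1}$ per negative index. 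What the induction actually needs from case (iii) is the clean statement that integration against a compound Poisson component contributes a factor $C_{2K}\int_0^t\E_0[H_s^{2K}]\,ds$ with no loss, so that the exponent increases by exactly $I(<0)$ and the nested time integrals continue to generate the $1/|I|!$.

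The paper obtains precisely this in a few lines by a different and much more elementary route: apply It\^o's (Dynkin's) formula to $x\mapsto x^{2K}$ along the jump integral, so that $\E_0\big[\big(\int_0^t H_r\,d\overline Z_r^j\big)^{2K}\big]$ equals the time integral of the compensated jump increments $\overline\lambda_j\int\big[(\int_0^s H_r\,d\overline Z_r^j+H_s\xi^\star)^{2K}-(\int_0^s H_r\,d\overline Z_r^j)^{2K}\big]\overline\nu_j(d\xi)$; bound the integrand with $(a+b)^{2K}\le 2^{2K-1}(a^{2K}+b^{2K})$; and close the resulting integral inequality with Gr\"onwall, using $\lambda_j\int|\xi|^{2K\Lcal_2(j)}\nu_j(d\xi)<\infty$. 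This sidesteps BDG for jumps entirely and directly produces the integral form needed for the factorial. If you prefer to keep your decomposition, the gap can alternatively be closed by invoking a moment inequality for Poisson stochastic integrals of predictable integrands (Kunita/Bichteler--Jacod type), whose ``$\int_0^t\E_0[|G_s|^{2K}]\,ds$'' term is exactly what you need; but as written, the step that should deliver $t^{I(<0)}/|I|!$ is asserted rather than established.
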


\noindent \textbf{Proof.} See Appendix \ref{Appendix2}.

\vspace{0.6cm}

Differently from the bound in Lemma \ref{lemm9}, it is important to notice that the order of the bound in Lemma \ref{lem9new} is not, in general, strictly increasing in $K$ for each $I$. If $I$ only contains negative entries, i.e. each $\overline Z^i$ is a compound Poisson process, this is - in fact - not the case. 

\vspace{0.3cm}
\begin{corollary}\label{cor1new}
For each $t\in [0,\delta]$,  $I$ and $m\geq0,$ it holds that
\begin{equation}
\E_0[\langle I,\widehat\Z_t\rangle^{m}]\leq \frac{C_{2m}^{(|I|-I(0))/2}t^{\frac{m}{2}I(>0)+mI(0)+I(<0)}}{\sqrt{|I|!}},
\end{equation}
for some constant $C_{2m}>0$. Similarly, fix $N\in \N$ such that Condition \ref{eqn11} is satisfied. Then, for each
$m\leq 2N$ and $t\in [0,\delta],$ it also holds that
\begin{equation}
\E_0[|\e_n(t)|^{m}]\leq \frac{C_{2N}^{m(n+1)/2N}t^{m(n+1)/2N}}{((n+1)!)^{m/2N}},
\end{equation}
for some constant $C_{2N}>0$. 
\end{corollary}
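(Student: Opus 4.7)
The plan is to follow the same two-step strategy as in the proof of Corollary~\ref{corr1}, now invoking Lemma~\ref{lem9new} (the jump analogue of Lemma~\ref{lemm9}) in place of its continuous counterpart.

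For the first inequality, I use the fact that $\langle I,\widehat\Z_t\rangle$ is precisely the iterated integral appearing in Lemma~\ref{lem9new} with integrand $H_{t_1}\equiv 1$, obtained by iterating Remark~\ref{aaa}. Applying Lemma~\ref{lem9new} with $2K=2m$ gives a bound on $\E_0[\langle I,\widehat\Z_t\rangle^{2m}]$ of the form $C_{2m}^{|I|-I(0)}t^{mI(>0)+2mI(0)+I(<0)}/|I|!$. Passing to the (absolute) $m$-th moment by the Cauchy--Schwarz/Jensen inequality $\E_0[|Y|^m]\le (\E_0[Y^{2m}])^{1/2}$ and relabeling the Burkholder--Davis--Gundy constant yields the stated form with $\sqrt{|I|!}$ and $C_{2m}^{(|I|-I(0))/2}$ in the denominator/numerator. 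Condition~\ref{eqn11} plays no role here because the integrand is constant, so $\sup \E_0[H_t^{2m}]=1$.

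For the second inequality, I write the error as
$$\e_n(t)=\sum_{|I|=n+1}\int_0^t\int_0^{t_{n+1}}\cdots\int_0^{t_2}c_I(t_1)\,d\overline Z_{t_1}^{i_1}\cdots d\overline Z_{t_{n+1}}^{i_{n+1}},$$
apply Lemma~\ref{lem9new} to each summand with $2K=2N$ and $H_t=c_I(t)$ (Condition~\ref{eqn11} ensures $\sup_{t\in[0,\delta]}\E_0[c_I(t)^{2N}]<\infty$), and then minimize the exponent $NI(>0)+2NI(0)+I(<0)$ over multi-indices with $|I|=n+1$. The minimum is attained in the all-Poisson case $I(<0)=n+1$, giving a worst-case scaling of $t^{n+1}$. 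Since the sum has finitely many terms, I obtain $\E_0[\e_n(t)^{2N}]\le \widetilde C^{\,n+1}t^{n+1}/(n+1)!$ for a constant $\widetilde C$. Finally, Jensen's inequality $\E_0[|\e_n(t)|^m]\le (\E_0[\e_n(t)^{2N}])^{m/(2N)}$, valid for $m\le 2N$, produces the stated $m$-th moment bound.

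The principal subtlety is the asymmetric behaviour of the iterated-integral bound across continuous and discontinuous directions: in the continuous setting each additional integration contributes $t^K$ (or $t^{2K}$ for time), while in the Poisson directions Lemma~\ref{lem9new} contributes only $t$ per integration, regardless of $K$. This forces the worst-case exponent in $\e_n(t)$ to be the ``Poisson-like'' $t^{(n+1)m/(2N)}$ rather than the ``Brownian-like'' $t^{(n+1)m/2}$ of the continuous case, and identifying this configuration and tracking how it survives Jensen's inequality is the only new ingredient compared to Corollary~\ref{corr1}.
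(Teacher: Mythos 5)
Your proposal coincides with the paper's proof, which is literally just ``the proof follows the same steps as that of Corollary~\ref{corr1}'': Jensen's inequality to reduce to even moments, then Lemma~\ref{lem9new} with $H\equiv 1$ for the signature components and $H_t=\sum_{|I|=n+1}c_I(t)$ for the error term, and your identification of the all-Poisson configuration as the worst case (exponent $n+1$) is exactly the observation that makes the second bound come out as stated. One minor caveat: taking the square root of the Lemma~\ref{lem9new} bound produces the $t$-exponent $\tfrac m2 I(>0)+mI(0)+\tfrac12 I(<0)$ rather than the $\tfrac m2 I(>0)+mI(0)+I(<0)$ printed in the first display (the halving hits the jump count too), so your computation actually delivers a slightly weaker bound for $t\le 1$ than the one stated; this looks like an inconsistency in the paper's own statement rather than a flaw in your argument, and it is harmless for the way the bound is used later (e.g.\ in the proof of Theorem~\ref{prop2J}, where only the $I(0)$ and $I(>0)$ contributions matter).
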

\noindent \textbf{Proof.} The proof follows the same steps as that of Corollary \ref{corr1}.

\vspace{0.3cm}
Consistent with our findings in Subsection~\ref{proexp}, we conclude that Corollary \ref{cor1new} implies that the error in the general (i.e., for any $n$) process expansion in Eq.~(\ref{process2}) satisfies
\begin{equation*}
\e_n(t) = O_p(t^{n+1}),
\end{equation*}
for $N=1/2$.

\subsubsection{\enquote{Regular} moments}

We begin with the counterpart of Lemma \ref{lem3}. The lemma provides a representation of the drift of a process $f(\langle c,\overline \Z_t\rangle)$ (in isolation as well as multiplied by an arbitrary process $\langle d,\overline \Z_t\rangle$) in terms of a linear combination of the \textit{It\^o signature} elements. 

\begin{lemma}\label{lem3new}
For each $f\in C^2_p(\R),$ vectors $c, d,$ and $m$ large enough, it holds that
\begin{align*}
f\big(\langle c,\overline\Z_t\rangle\big)\langle d,\overline\Z_t\rangle
&=\int_0^t\bigg( \sum_{k=-e}^2\lambda_{-k}\int f^{(k^+)}\Big(\langle \Gcal_{k,\xi}^c(c),\overline\Z_s\rangle\Big)\langle\Gcal_{k,\xi}^d(c,d),\overline\Z_s\rangle \nu_{-k}(d\xi)
\bigg)ds\\
&\qquad+\text{martingale},
\end{align*}
where $\lambda_0=\lambda_{1}=\lambda_{2}=1$, $\nu_0=\nu_{1}=\nu_{2}$ are probability measures, $\Gcal_{k,\xi}^c(c)=c$ for $k\geq 0$ and $\Gcal_{k,\xi}^c(c)=\Jcal_{-k,\xi}(c)$ for $k<0$, and
$$
\Gcal_{k,\xi}^d(c,d):= \begin{cases}
 \Jcal_{-k,\xi}(d)&\text{if }k<0,\\
\sum_{H\in \Ical_n}d_HH'1_{\{h_{|H|}=0\}}-\sum_{j=1}^e\lambda_j d_HH&\text{if }k=0,\\
\sum_{I,H\in \Ical_n}c_Id_H I'\overline \star \Big(H1_{\{i_{|I|}=0\}}+ {\rho_{i_{|I|},h_{|H|}}}   H'\Big) &\text{if }k=1,\\
\sum_{I,J,H\in \Ical_n}c_Ic_Jd_H\frac {\rho_{i_{|I|},j_{|J|}}} 2I'\overline \star J'\overline \star H&\text{if }k=2,
\end{cases}$$
with $\Jcal_{j,\xi}(c)=\sum_{I\in \Ical_n}c_I\Big(I+ \xi^{\Lcal_2(i_{|I|})}1_{\{j=\Lcal_1(i_{|I|})\}}I'\Big)$.
\end{lemma}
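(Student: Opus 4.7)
The proof extends Lemma~\ref{lem3} to the jump setting via It\^o's formula for general semimartingales. Write $Y_t = \langle c,\overline\Z_t\rangle$ and $Z_t = \langle d,\overline\Z_t\rangle$. The c\`adl\`ag analogue of Remark~\ref{bbbrem}, namely $\langle I,\overline\Z_t\rangle = \int_0^t\langle I',\overline\Z_{s-}\rangle d\overline Z^{i_{|I|}}_s$, yields canonical semimartingale decompositions for $Y$ and $Z$: the finite-variation drift comes from multiindices with $i_{|I|}=0$, the continuous martingale part from $i_{|I|}>0$, and the purely-discontinuous part from the compound Poisson drivers when $i_{|I|}<0$. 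I would then apply It\^o's formula to $F(y,z)=f(y)z$, noting $\partial_y F = f'(y)z$, $\partial_z F = f(y)$, $\partial_{yy}F = f''(y)z$, $\partial_{yz}F = f'(y)$ and $\partial_{zz}F=0$, and collect the drift of $f(Y_t)Z_t$ from five pieces: the $dt$-drift of $dY^c$, the $dt$-drift of $dZ^c$, the quadratic variation $d\langle Y^c\rangle$, the cross-variation $d\langle Y^c,Z^c\rangle$, and the compensator of the jump part of $f(Y)Z$.

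The first four pieces mirror the derivation in Lemma~\ref{lem3}, with $\star$ replaced by $\ostar$ of Definition~\ref{lem5new}. Using $d\langle W^i,W^j\rangle_s=\rho_{i,j}\,ds$ and the pointwise identity $\langle I',\overline\Z_{s-}\rangle\langle J',\overline\Z_{s-}\rangle = \langle I'\ostar J',\overline\Z_{s-}\rangle$ (legal for $m$ large enough), they fill the $k=1$ and $k=2$ slots of the claimed formula and contribute the first summand $\sum_H d_H H'1_{\{h_{|H|}=0\}}$ of $\Gcal^d_{0,\xi}(c,d)$. Since $\lambda_0=\lambda_1=\lambda_2=1$ and $\nu_0=\nu_1=\nu_2$ are probability measures, the $\xi$-integrals in the $k\geq 0$ slots are vacuous, matching the statement.

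The jump piece is where the discontinuous structure enters. The recursive index description encoded in $\Jcal_{j,\xi}$ shows that at a jump of $N^j$ of size $\xi$ one has $Y_s=\langle\Jcal_{j,\xi}(c),\overline\Z_{s-}\rangle$ and $Z_s=\langle\Jcal_{j,\xi}(d),\overline\Z_{s-}\rangle$, so the compensator of $\sum_{s\leq t}[f(Y_s)Z_s-f(Y_{s-})Z_{s-}]$ equals the $ds$-integral of
\begin{equation*}
\sum_{j=1}^e \lambda_j \int \Big[ f\bigl(\langle\Jcal_{j,\xi}(c),\overline\Z_{s-}\rangle\bigr)\langle\Jcal_{j,\xi}(d),\overline\Z_{s-}\rangle - f(Y_{s-})\langle d,\overline\Z_{s-}\rangle\Big]\nu_j(d\xi).
\end{equation*}
Relabeling $k=-j$, the first half realizes the $k\in\{-e,\ldots,-1\}$ terms of the statement (since $f^{(k^+)}=f$, $\Gcal^c_{k,\xi}(c)=\Jcal_{-k,\xi}(c)$ and $\Gcal^d_{k,\xi}(c,d)=\Jcal_{-k,\xi}(d)$), while the second half contributes exactly the missing correction $-\sum_{j=1}^e\lambda_j d$ that completes $\Gcal^d_{0,\xi}(c,d)$. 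Pushing the $dW$-integrals and the compensated-jump martingale into the final \enquote{martingale} term yields the claim. The main obstacle is combinatorial bookkeeping, especially tracking the admissibility constraint $(\Lcal_2(i_k)+\Lcal_2(j_\ell))1_{\{\Lcal_1(i_k)=\Lcal_1(j_\ell)\}}\leq m$ of Definition~\ref{lem5new} so that all $\ostar$-products arising in the quadratic-variation expansion are legal; the growth hypothesis $f\in C^2_p(\R)$ and the moment assumptions on the $\nu_j$ ensure that It\^o's formula and the jump compensation step apply.
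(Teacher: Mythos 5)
Your proposal is correct and follows essentially the same route as the paper: the same five-way split of the drift via It\^o's formula applied to $f(y)z$ (drift of $Y$, drift of $Z$, continuous quadratic variation, cross-variation, jump compensator), the same identification of the post-jump state with $\Jcal_{j,\xi}$, and the same allocation of the compensator's two halves to the $k<0$ slots and to the $-\sum_j\lambda_j d_H H$ correction in the $k=0$ slot. No substantive differences from the paper's argument.
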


\noindent \textbf{Proof.} See Appendix \ref{Appendix2}.

\vspace{0.3cm}
\noindent We may now expand the expectation of a linear combination of the \textit{It\^o signature} elements.

\begin{theorem}\label{thm1new}
For each $f\in C_p^{2N}(\R)$ with $N\in \N,$ and a vector $c$, it holds that
\begin{align*}
\E_0[f(\langle c,\overline \Z_t\rangle)]
&=f(c_\emptyset)
+\sum_{n=1}^N
\frac 1 {n!}\bigg(\sum_{k_1,\ldots,k_n=-e}^2\lambda_{k_1,\ldots,k_n}\int\cdots\int f^{(k_1^++\ldots+k_n^+)}\left(\langle\emptyset,\Gcal_{k_1,\ldots,k_n,\xi}^c(c)\rangle\right)\\
&\qquad \times \langle\emptyset,\Gcal_{k_1,\ldots,k_n,\xi}^d(c,\emptyset)\rangle\nu_{k_1,\ldots,k_n}(d\xi)\bigg)t^n
+o(t^N),
\end{align*}
where
$\Gcal^c_{k_1,\ldots,k_n,\xi} = \Gcal^c_{k_n,\xi} \circ ~\Gcal^c_{k_1,\ldots,k_{n-1},\xi}, ~ \Gcal^d_{k_1,\ldots,k_n,\xi}=\Gcal^d_{k_n,\xi}\circ ~ (\Gcal^c_{k_1,\ldots,k_{n-1},\xi},\Gcal^d_{k_1,\ldots,k_{n-1},\xi}),
$
$\lambda_{k_1,\ldots,k_n}=\lambda_{-k_1}\cdots\lambda_{-k_n}$, and 
$\nu_{k_1,\ldots,k_n}(d\xi)=\nu_{-k_1}(d\xi_1)\cdots \nu_{-k_n}(d\xi_n)$. 

\end{theorem}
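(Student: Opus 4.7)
The proof follows the strategy of Theorem~\ref{thm1}, with Lemma~\ref{lem3new} replacing Lemma~\ref{lem3} to accommodate the jump terms. I would start by writing $f(\langle c,\overline\Z_t\rangle)=f(\langle c,\overline\Z_t\rangle)\cdot\langle\emptyset,\overline\Z_t\rangle$ and applying Lemma~\ref{lem3new} with $d=\emptyset$. Taking $\E_0$ kills the martingale, leaving
\begin{equation*}
\E_0[f(\langle c,\overline\Z_t\rangle)]=f(c_\emptyset)+\int_0^t\sum_{k_1=-e}^{2}\lambda_{-k_1}\int \E_0\!\left[f^{(k_1^+)}(\langle\Gcal^c_{k_1,\xi_1}(c),\overline\Z_{s_1}\rangle)\langle\Gcal^d_{k_1,\xi_1}(c,\emptyset),\overline\Z_{s_1}\rangle\right]\nu_{-k_1}(d\xi_1)\,ds_1.
\end{equation*}
The key observation is that each integrand inside $\E_0$ is again of the form ``scalar function of $\langle c',\overline\Z\rangle$ times $\langle d',\overline\Z\rangle$,'' so Lemma~\ref{lem3new} can be invoked once more, with $f$ replaced by $f^{(k_1^+)}$, $c$ by $\Gcal^c_{k_1,\xi_1}(c)$, and $d$ by $\Gcal^d_{k_1,\xi_1}(c,\emptyset)$.

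Iterating this procedure $n$ times, the nested invocations of Lemma~\ref{lem3new} reproduce exactly the stated recursive compositions $\Gcal^c_{k_1,\ldots,k_n,\xi}=\Gcal^c_{k_n,\xi_n}\circ\Gcal^c_{k_1,\ldots,k_{n-1},\xi}$ and $\Gcal^d_{k_1,\ldots,k_n,\xi}=\Gcal^d_{k_n,\xi_n}\circ(\Gcal^c_{k_1,\ldots,k_{n-1},\xi},\Gcal^d_{k_1,\ldots,k_{n-1},\xi})$, while the accumulating intensities and jump measures combine into the products $\lambda_{k_1,\ldots,k_n}=\lambda_{-k_1}\cdots\lambda_{-k_n}$ and $\nu_{k_1,\ldots,k_n}(d\xi)=\nu_{-k_1}(d\xi_1)\cdots\nu_{-k_n}(d\xi_n)$. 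At each layer I split the drift into its value at $s=0$ plus a residual destined for the next iteration. Because $\langle I,\overline\Z_0\rangle=0$ for $|I|\geq 1$, only the $\emptyset$-coordinate of $\Gcal^c_{k_1,\ldots,k_n,\xi}(c)$ and $\Gcal^d_{k_1,\ldots,k_n,\xi}(c,\emptyset)$ survives, giving the boundary term $f^{(k_1^++\cdots+k_n^+)}(\langle\emptyset,\Gcal^c_{k_1,\ldots,k_n,\xi}(c)\rangle)\langle\emptyset,\Gcal^d_{k_1,\ldots,k_n,\xi}(c,\emptyset)\rangle$. The $n$-fold time integration over the simplex $\{0<s_1<\cdots<s_n<t\}$ then produces the $t^n/n!$ prefactor, reproducing the $n$-th term in the claimed expansion.

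The remainder after $N$ iterations is an $N$-fold iterated time integral of $\E_0$ of the difference between the integrand at $\overline\Z_{s_N}$ and at $\overline\Z_0$. Since $f\in C_p^{2N}(\R)$, all the relevant derivatives have polynomial growth; the moment conditions $\lambda_j\int|\xi|^k\nu_j(d\xi)<\infty$ control the $\xi$-integrations entering $\Gcal^c$ and $\Gcal^d$; Corollary~\ref{cor1new} under Condition~\ref{eqn11} then furnishes uniform moment bounds on the $\overline\Z$-components appearing in the compositions, and Cauchy-Schwarz combined with continuity of $f^{(2N)}$ shows that the inner difference vanishes as $s_N\to 0$. The $N$-fold simplex integration therefore produces a remainder of size $o(t^N)$. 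The main technical obstacle is combinatorial: one must carefully chase how the output $(c',d')$ produced by one application of Lemma~\ref{lem3new} becomes the input of the next, in order to confirm that the stated recursive operators, tensor-product measures, and accumulating intensity products emerge in exactly the claimed form.
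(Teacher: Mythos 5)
Your proposal is correct and follows essentially the same route as the paper: the paper's proof simply invokes Lemma~\ref{lem3new} and repeats the argument of Theorem~\ref{thm1}, i.e., iterating the drift representation to build up the recursive operators $\Gcal^c_{k_1,\ldots,k_n,\xi}$ and $\Gcal^d_{k_1,\ldots,k_n,\xi}$ and then applying Taylor's theorem in $t$ at $0$, which is exactly what your nested-integral/simplex formulation does in integral form. The only cosmetic difference is that the paper phrases the iteration as computing $\partial_t^n\E_0[\cdot]$ and citing Taylor's approximation theorem, whereas you carry the iterated time integrals explicitly and bound the remainder directly; these are equivalent.
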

\begin{proof}
The result follows from Lemma~\ref{lem3new} by using the same method of proofs as for Theorem~\ref{thm1}.
\end{proof}

\noindent Next, we turn to a suitably-differentiable function of the discontinuous process $X_t$ and characterize the distance between the conditional expectation of the function of the process and the conditional expectation of the same function applied to the process expansion. 

\begin{lemma}\label{lem4fnew}
Consider an $n+1$-times $\overline Z$-differentiable process $(X_t)_{t\in[0,T]}$ with expansion
$$
X_t=\langle c,\overline \Z_t\rangle+\e_n(t),
$$
satisfying Condition \ref{eqn11}. Then, for each $f\in C^1(\R)$ such that $\sup|f'|<\infty,$ it holds that
$$\E_0[f(X_t)]=\E_0[f(\langle c,\overline \Z_t\rangle)]+o(t^{n/2}).$$
\end{lemma}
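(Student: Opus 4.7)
The plan is to transplant the argument used for the continuous counterpart (Lemma \ref{lem4f}) to the c\`adl\`ag setting. Two ingredients make the transplantation essentially automatic: the mean value theorem, available because $f\in C^1(\R)$ has bounded derivative, and the residual bound from Corollary \ref{cor1new} (based on Lemma \ref{lem9new}), which I will show delivers the same $t^{(n+1)/2}$ rate on $\E_0[|\e_n(t)|]$ as in the purely continuous case.

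First, I will write $M:=\sup|f'|<\infty$ and invoke the mean value theorem pathwise to obtain
$$|f(X_t)-f(\langle c,\overline\Z_t\rangle)|\leq M\cdot |X_t-\langle c,\overline\Z_t\rangle|= M\cdot|\e_n(t)|,$$
then take $\E_0$ of both sides and use Jensen's inequality to pass the modulus inside the expectation, yielding
$$\bigl|\E_0[f(X_t)]-\E_0[f(\langle c,\overline\Z_t\rangle)]\bigr|\leq M\cdot \E_0[|\e_n(t)|].$$

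Second, I will expand $\e_n(t)$ as the finite sum of iterated integrals indexed by $|I|=n+1$ and bound each summand via Lemma \ref{lem9new} with $K=1$ (admissible since Condition \ref{eqn11} ensures that $t\mapsto \E_0[c_I(t)^{2}]$ is bounded on a neighborhood of $0$). The time exponent appearing there is $I(>0)+2I(0)+I(<0)=|I|+I(0)\geq n+1$, so, combining with Jensen's inequality on each summand and summing over the finitely many $I$ with $|I|=n+1$,
$$\E_0[|\e_n(t)|]\leq C\,t^{(n+1)/2}=o(t^{n/2}).$$
Substituting this back into the display above closes the argument.

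The only (minor) point not inherited verbatim from the proof of Lemma \ref{lem4f} is the verification that the compound Poisson components of $\overline Z$ do not degrade the order of the residual. Reading the exponent $KI(>0)+2KI(0)+I(<0)$ in Lemma \ref{lem9new} at $K=1$ shows that every jump index contributes the same $t$-power to the $L^2$ bound as a Brownian index, so the worst-case multi-index $I$ remains the purely Brownian one and the $t^{(n+1)/2}$ rate is preserved unchanged. This is the main — and still fairly minor — obstacle I anticipate; once handled, the rest of the proof is a direct calque of the continuous case.
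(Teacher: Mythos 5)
Your proof is correct and follows essentially the same route as the paper, which simply transplants the argument of Lemma \ref{lem4f} — mean value theorem plus a second-moment bound on $\e_n(t)$ — to the c\`adl\`ag setting. Your explicit check that the exponent $KI(>0)+2KI(0)+I(<0)$ of Lemma \ref{lem9new} at $K=1$ equals $|I|+I(0)\geq n+1$, so that $\E_0[\e_n(t)^2]=O(t^{n+1})$ and hence $\E_0[|\e_n(t)|]=o(t^{n/2})$, is exactly the verification the paper leaves implicit (and using Lemma \ref{lem9new} directly at $K=1$, with Condition \ref{eqn11} controlling the second moment of $c_I(t)$ via Lyapunov, is the clean way to do it).
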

\begin{proof}
The proof follows the same logic as that of Lemma~\ref{lem4f}.
 \end{proof}
 
\vspace{0.5cm}
 
\noindent Finally, we expand $\E_0[f(X_t)]$ by invoking the results above. 

\vspace{0.5cm}
 
\begin{theorem} \textbf{(Expanding \enquote{regular} moments.)}
Consider an $n+1$-times $\overline Z$-differentiable process $(X_t)_{t\in[0,T]}$ with expansion
$$
X_t=\langle c,\overline \Z_t\rangle+\e_n(t),
$$
satisfying Condition \ref{eqn11}. Then, for each $f\in C_p^{n+1}(\R)$ with $\sup|f'|<\infty$, it holds that
\begin{align*}
\E_0[f(X_t)]
&=f(X_0)
+\sum_{\ell=1}^{\lceil n/2\rceil}
\frac 1 {\ell!}\bigg(\sum_{k_1,\ldots,k_\ell=-e}^2\lambda_{k_1}\cdots\lambda_{k_\ell}\int\cdots\int f^{(k_1^++\ldots+k_\ell^+)}(\langle\emptyset,\Gcal_{k_1,\ldots,k_\ell,\xi}^c(c)\rangle)\\
&\times \langle\emptyset,\Gcal_{k_1,\ldots,k_\ell,\xi}^d(c,\emptyset)\rangle\nu_{k_1,\ldots,k_\ell}(d\xi)\bigg)t^\ell
+o(t^{n/2}).
\end{align*}
\end{theorem}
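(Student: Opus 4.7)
The plan is to mirror the proof of Theorem~\ref{thm2} (the continuous case) using its discontinuous analogues: Lemma~\ref{lem4fnew} controls the error from replacing $X_t$ with its signature approximation $\langle c,\overline\Z_t\rangle$, and Theorem~\ref{thm1new} supplies the explicit Taylor-style expansion of $\E_0[f(\langle c,\overline\Z_t\rangle)]$ in powers of $t$. So the proof is essentially a two-line composition of these results, provided the hypotheses line up, which is the only delicate point.

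First, I would apply Lemma~\ref{lem4fnew}. The hypothesis on $f$ in the theorem requires $f\in C_p^{n+1}(\R)$ with $\sup|f'|<\infty$, which subsumes the $C^1$ with bounded derivative requirement in Lemma~\ref{lem4fnew}, so we immediately get
\begin{equation*}
\E_0[f(X_t)] = \E_0[f(\langle c,\overline\Z_t\rangle)] + o(t^{n/2}).
\end{equation*}
Second, I would apply Theorem~\ref{thm1new} with $N = \lceil n/2\rceil$; since $n+1\geq 2\lceil n/2\rceil$, the assumption $f\in C_p^{n+1}$ implies $f\in C_p^{2N}$, so Theorem~\ref{thm1new} is applicable and yields the nested sum over $k_1,\ldots,k_\ell$ with the recursively defined operators $\Gcal^c_{k_1,\ldots,k_\ell,\xi}$ and $\Gcal^d_{k_1,\ldots,k_\ell,\xi}$, with remainder $o(t^{\lceil n/2\rceil}) = o(t^{n/2})$. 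Finally, the first term $f(c_\emptyset)$ produced by Theorem~\ref{thm1new} coincides with $f(X_0)$ since $c_\emptyset = X_0$ by construction of the $n+1$-times $\overline Z$-differentiable expansion. Summing the two $o(t^{n/2})$ errors and combining gives exactly the claimed formula.

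The only steps I would expect to require some care, rather than being a one-line invocation, are verifying that the constants $\lambda_{k_1,\ldots,k_\ell}$ appearing in Theorem~\ref{thm1new} reduce to the product form $\lambda_{k_1}\cdots\lambda_{k_\ell}$ displayed in the statement (which is immediate from the definition $\lambda_{k_1,\ldots,k_n}=\lambda_{-k_1}\cdots\lambda_{-k_n}$ combined with the convention $\lambda_0=\lambda_1=\lambda_2=1$, up to a sign convention on the index), and checking that the integrability requirement embedded in Condition~\ref{eqn11} propagates through the recursion, which it does because each application of $\Gcal^c_{k,\xi}$ and $\Gcal^d_{k,\xi}$ produces finite linear combinations of signature coordinates whose moments are controlled by Corollary~\ref{cor1new}. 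Once these bookkeeping points are dispatched, no further computation is needed; the result is a corollary of the two lemmas.

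The main obstacle, to the extent there is one, is not any new estimate but rather matching the hypotheses cleanly --- in particular, ensuring that the growth bound baked into $C_p^{n+1}(\R)$ is strong enough to absorb the polynomial growth of the signature coordinates when multiplied by the derivatives $f^{(k_1^+ + \cdots + k_\ell^+)}$ that appear in the nested sum. Since all signature moments in Corollary~\ref{cor1new} satisfy polynomial-in-$t$ bounds with coefficients controlled by Condition~\ref{eqn11}, and the jump compensators $\lambda_j\int|\xi|^k\nu_j(d\xi)$ are finite by assumption for every $k$, every term in the expansion is well defined and the error from Theorem~\ref{thm1new} genuinely achieves the stated $o(t^N)$ rate. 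No new ideas are required beyond those already developed for the continuous case and for the discontinuous analogues stated just above.
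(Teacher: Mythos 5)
Your proposal is correct and matches the paper's own proof, which is exactly the two-line composition of Lemma~\ref{lem4fnew} and Theorem~\ref{thm1new} together with the identification $c_\emptyset=X_0$. The additional bookkeeping you flag (the product form of $\lambda_{k_1,\ldots,k_\ell}$ and the integrability propagation) is handled the same way implicitly in the paper.
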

\begin{proof}
The claim follows from Theorem~\ref{thm1new} and Lemma~\ref{lem4fnew}, given $c_\emptyset=X_0$.
\end{proof}

\noindent We note that, through the representation provided in Remark~\ref{remcomp}, the coefficients of the expansion of $\E_0[f(X_t)]$ may be computed explicitly by means of an algorithm. An expansion of $\E_0[X_t^k]$ readily follows as an example. We begin with a lemma. 
 
\begin{lemma}\label{lem4new}
Consider an $n+1$-times $\overline Z$-differentiable process $(X_t)_{t\in[0,T]}$ with expansion
$$
X_t=\langle c,\overline \Z_t\rangle+\e_n(t),
$$
and fix $N\in \N$ such that Condition \ref{eqn11} is satisfied. Then, for each $k<2N$ it holds
$$\E_0[X_t^k]=\E_0[\langle c,\overline \Z_t\rangle^k]+o(t^{n/2}).$$
\end{lemma}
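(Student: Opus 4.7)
This lemma is the discontinuous counterpart of Lemma~\ref{lem4} and I would mimic that proof. The strategy is to expand $X_t^k - \langle c,\overline\Z_t\rangle^k$ via the binomial theorem and bound each cross-term in expectation by H\"older's inequality combined with the moment estimates of Corollary~\ref{cor1new}.

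Concretely, writing $X_t = \langle c,\overline\Z_t\rangle + \e_n(t)$, the binomial theorem gives
$$X_t^k - \langle c,\overline\Z_t\rangle^k = \sum_{j=1}^k \binom{k}{j}\langle c,\overline\Z_t\rangle^{k-j}\,\e_n(t)^j,$$
so it suffices to show $\E_0\bigl[|\langle c,\overline\Z_t\rangle|^{k-j}|\e_n(t)|^j\bigr] = o(t^{n/2})$ for each $j\in\{1,\ldots,k\}$. Since $k<2N$, every such $j$ is strictly below $2N$, and I take conjugate exponents $p = 2N/(2N-j)$ and $q = 2N/j$ (both strictly greater than one, with $qj = 2N$). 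H\"older's inequality then yields
$$\E_0\bigl[|\langle c,\overline\Z_t\rangle|^{k-j}|\e_n(t)|^j\bigr] \le \E_0\bigl[|\langle c,\overline\Z_t\rangle|^{p(k-j)}\bigr]^{1/p}\E_0\bigl[|\e_n(t)|^{2N}\bigr]^{j/(2N)}.$$

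The first factor is uniformly bounded on $[0,\delta]$: expanding the power of $\langle c,\overline\Z_t\rangle$ via the $\ostar$ product (Lemma~\ref{newprop}) rewrites it as a finite linear combination of components $\langle I,\overline\Z_t\rangle$, each of which has finite moments of all orders continuous in $t$, thanks to the finite L\'evy moment assumption made at the start of Section~\ref{Levy} and Lemma~\ref{lem_momentsJ}. The second factor is handled by Corollary~\ref{cor1new} with $m = 2N$: isolating the dominant Brownian-driven contribution in $\e_n(t)$ gives a bound of the form $\E_0[|\e_n(t)|^{2N}] \le C\,t^{N(n+1)}$, and raising to the power $j/(2N)$ yields $C\,t^{j(n+1)/2}$, which is $o(t^{n/2})$ because $j(n+1)/2 - n/2 \ge 1/2$ for every $j\ge 1$. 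Summing over finitely many $j$ preserves this rate and delivers the claim.

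\textbf{Main obstacle.} The delicate point is extracting the correct rate from Corollary~\ref{cor1new}: the error $\e_n(t)$ is a sum of iterated integrals of mixed Brownian and compound Poisson type, whose scalings in $t$ are $t^{1/2}$ and $t$ per integration respectively. Verifying that the worst-case (Brownian) rate $t^{(n+1)/2}$ governs the $L^{2N}$-norm of $\e_n(t)$ — and that this rate survives being raised to the H\"older power $j/(2N)$ with a strict margin over $n/2$ — requires careful bookkeeping of the exponents in Lemma~\ref{lem9new}, especially in indices $I$ that mix positive, zero and negative entries. Once this accounting is in place, the rest of the proof is a verbatim transcription of the argument used for Lemma~\ref{lem4}.
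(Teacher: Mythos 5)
Your approach is the same as the paper's: the published proof of Lemma~\ref{lem4new} is literally a one-line deferral to the argument of Lemma~\ref{lem4}, i.e.\ binomial expansion, H\"older, and the moment bounds of Corollary~\ref{cor1new}, which is exactly what you propose. The structure, the conjugate exponents, and the treatment of the factor $\E_0[|\langle c,\overline\Z_t\rangle|^{p(k-j)}]$ via the $\ostar$ product and the finite L\'evy moments are all fine.

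However, the one point you single out as the ``main obstacle'' is resolved incorrectly. You claim that the Brownian-driven terms dominate, so that $\E_0[|\e_n(t)|^{2N}]\le C\,t^{N(n+1)}$ and hence $\E_0[|\e_n(t)|^{2N}]^{j/2N}\le C\,t^{j(n+1)/2}$. This is backwards. By Lemma~\ref{lem9new} the $2K$-th moment of an iterated integral of depth $n+1$ scales as $t^{KI(>0)+2KI(0)+I(<0)}$: each compound-Poisson integration contributes a factor $t^{1}$ \emph{independently of the moment order} (a depth-$(n+1)$ pure-jump iterated integral is nonzero only on an event of probability $O(t^{n+1})$, on which it is $O(1)$, so all its moments are of order $t^{n+1}$, not $t^{N(n+1)}$). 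Consequently the worst index for the $L^{2N}$-norm is the all-negative one, and Corollary~\ref{cor1new} accordingly states $\E_0[|\e_n(t)|^{m}]\le C\,t^{m(n+1)/2N}$ --- note the extra $N$ in the exponent relative to the continuous-case Corollary~\ref{corr1}. Feeding this into your H\"older step with $j=1$ gives $t^{(n+1)/2N}$, which is \emph{not} $o(t^{n/2})$ as soon as $N\ge 2$ and $n\ge 1$; so the step as written fails. The repair is to not funnel every component of $\e_n(t)$ through its $2N$-th moment: split $\e_n(t)=\sum_{|I|=n+1}\e_I(t)$ and, for indices $I$ containing negative entries, bound $\E_0[|\e_I(t)|^{j}]$ (or the relevant low-order moment in the H\"older pairing) directly via Lemma~\ref{lem9new} with a small $K$, exploiting that the Poisson integrations contribute $t^{I(<0)}$ at every moment order, while reserving the $2N$-th-moment/Jensen route for the purely Brownian-and-time indices. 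In fairness, the paper's own one-line proof glosses over exactly the same transfer, but since you explicitly identified this as the delicate point, the sign error in which source of randomness is the bottleneck needs to be fixed.
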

\begin{proof}
The proof follows the same logic as that of Lemma~\ref{lem4}.
 \end{proof}
 
\begin{example}[\textbf{$\mathbf{k^{th}}$ moment}]
Consider an $n+1$-times $\overline Z$-differentiable process $(X_t)_{t\in[0,T]}$ with expansion
$$
X_t=\langle c,\overline\Z_t\rangle+\e_n(t),
$$
satisfying Condition~\ref{eqn11} for some $N\in \N$. Then, for each $k<2N,$ it holds that 
\begin{align*}
\E_0[X_t^k]
&=X_0^k
+\sum_{\ell=1}^{\lceil n/2\rceil}
\frac 1 {\ell!}\bigg(\sum_{k_1,\ldots,k_\ell=-e}^2\lambda_{k_1,\ldots,k_\ell}\int\cdots\int \frac {k!}{(k-(k_1^++\ldots+k_\ell^+))!} (\langle\emptyset,\Gcal_{k_1,\ldots,k_\ell,\xi}^c(c)\rangle)^{k-(k_1^++\ldots+k_\ell^+)}\\
& \times \langle\emptyset,\Gcal_{k_1,\ldots,k_\ell,\xi}^d(c,\emptyset)\rangle\nu_{k_1,\ldots,k_\ell}(d\xi)\bigg)t^\ell
+o(t^{n/2}).
\end{align*}
\end{example}

\begin{proof}
The proof uses Theorem~\ref{thm1new} and Lemma~\ref{lem4new} as in the proof of Example \ref{exam}.
 \end{proof}
 
\subsubsection{\enquote{Irregular} moments}

In order to present the main result in this subsection, we begin with the analogue of Lemma \ref{lem2}. The lemma provides Fourier-like transforms of the \textit{It\^o signature} components with respect to two sources of randomness in the level process, the Brownian motion $W^1$ (like in Lemma \ref{lem2}) and the compound Poisson process $N^1$ (for which Lemma \ref{lem2} did not account).  

 \begin{lemma}\label{lem2J}
For each $I\in\Ical^k,$ it holds that
\begin{align*}
    &\E_0[\langle I,\Z_t\rangle\exp(iuc_1W_t^1+iuc_{-1}N_t^1)]\\
    &=\frac {t^k}{k! }\exp\Big(\frac{-(c_1u)^2t}2+\lambda_1t\int (\exp(iuc_{-1}\xi)-1) \nu_1(d\xi)\Big)\\
&
\times\begin{cases}
(iuc_1)^{I(1)}
(\lambda_1\int(\exp(iuc_{-1}\xi)\xi)\nu_1(d\xi))^{I(-1)}(\prod^e_{j=2}(\lambda_j\int \xi\nu_j(d\xi))^{I(-j)} )& \text{if }I=\{-e,\cdots,1\}^k,\\
0 &\text{else},
\end{cases}
\end{align*}
where we recall (c.f. Definition \ref{def2}) that $I(j)$ is the number of $j$s in $I$. 
\end{lemma}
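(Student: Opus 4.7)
The plan is to proceed by induction on $k=|I|$, leveraging the exponential martingale associated with the Lévy process $c_1 W^1+c_{-1} N^1$. Set
$$\Psi(t):=\exp\Big(-\tfrac{(c_1 u)^2 t}{2}+\lambda_1 t\textstyle\int(e^{iuc_{-1}\xi}-1)\nu_1(d\xi)\Big)$$
and define $M_t:=\exp(iuc_1 W_t^1+iuc_{-1}N_t^1)/\Psi(t)$. Then $M$ is a true martingale (using the moment assumption on $\nu_1$) and $\Psi$ is multiplicative, $\Psi(s)\Psi(t-s)=\Psi(t)$. Writing the target as $\Psi(t)\,\E_0[\langle I,\Z_t\rangle M_t]$ reduces the claim to showing that $\E_0[\langle I,\Z_t\rangle M_t]=\tfrac{t^k}{k!}\kappa(I)$, where $\kappa(I)$ is the announced product of factors when $I\in\{-e,\ldots,1\}^k$ and zero otherwise.

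The base case $k=0$ is the Lévy--Khintchine formula, giving $\E_0[M_t]=1=\kappa(\emptyset)$. For the inductive step, decompose $I=I'i_{|I|}$, use $d\langle I,\Z_t\rangle=\langle I',\Z_{t-}\rangle dZ_t^{i_{|I|}}$ (Remark~\ref{aaa}), and apply the product rule to $\langle I,\Z_t\rangle M_t$. Taking expectation kills the martingale increments, leaving a recursion
$$\E_0[\langle I,\Z_t\rangle M_t]=\kappa_0(i_{|I|})\int_0^t \E_0[\langle I',\Z_s\rangle M_s]\,ds,$$
with $\kappa_0$ determined case by case. For $i_{|I|}=0$, only the Lebesgue drift survives, so $\kappa_0=1$. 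For $i_{|I|}=1$, the continuous quadratic covariation between $dW^1$ and the $M_{t-}(iuc_1)dW^1_t$ part of $dM$ gives $\kappa_0=iuc_1$. For $i_{|I|}=-j$ with $j\geq 2$, independence of $N^j$ from $(W^1,N^1)$ kills the jump co-variation, and only the compensator $\lambda_j\int\xi\,\nu_j(d\xi)$ contributes. For $i_{|I|}\in\{2,\ldots,d\}$, $W^{i_{|I|}}$ has neither drift nor covariation with $M$, so $\E_0[\langle I,\Z_t\rangle M_t]=0$, which propagates through the recursion and yields the ``else'' zero whenever any entry of $I$ falls outside $\{-e,\ldots,1\}$.

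The delicate case is $i_{|I|}=-1$, where $\langle I,\Z\rangle$ and $M$ jump simultaneously with $N^1$. Here the compensator of $d\langle I,\Z_t\rangle=\langle I',\Z_{t-}\rangle dN^1_t$ against $M_{t-}$ contributes $\lambda_1\int\xi\,\nu_1(d\xi)$, while the dual predictable projection of the quadratic covariation $[\langle I,\Z\rangle,M]_t=\sum_{s\leq t}\langle I',\Z_{s-}\rangle M_{s-}\Delta N^1_s\bigl(e^{iuc_{-1}\Delta N^1_s}-1\bigr)$ contributes $\lambda_1\int\xi(e^{iuc_{-1}\xi}-1)\nu_1(d\xi)$. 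Summing and using the algebraic identity $\xi+\xi(e^{iuc_{-1}\xi}-1)=\xi e^{iuc_{-1}\xi}$ yields $\kappa_0=\lambda_1\int\xi e^{iuc_{-1}\xi}\nu_1(d\xi)$, exactly the announced factor. Iterating the recursion from $I$ down to $\emptyset$ produces $\tfrac{t^k}{k!}\prod_{j=1}^k\kappa_0(i_j)=\tfrac{t^k}{k!}\kappa(I)$; multiplying by $\Psi(t)$ concludes. The main obstacle is precisely this bookkeeping in the $i_{|I|}=-1$ case: one must identify that the naive compensator term and the jump covariation term combine, via the Lévy jump measure, into the single coefficient $\lambda_1\int\xi e^{iuc_{-1}\xi}\nu_1(d\xi)$ rather than treating them separately.
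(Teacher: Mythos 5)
Your proposal is correct and follows essentially the same route as the paper: both center on the compensated exponential (your $M_t$ is the paper's $V_t$), derive the one-step recursion $\E_0[\langle Jj,\Z_t\rangle V_t]=B(j)\int_0^t\E_0[\langle J,\Z_s\rangle V_s]\,ds$ via the product rule, and in the delicate $j=-1$ case combine the jump compensator $\lambda_1\int\xi\,\nu_1(d\xi)$ with the jump covariation $\lambda_1\int\xi(e^{iuc_{-1}\xi}-1)\nu_1(d\xi)$ into the single factor $\lambda_1\int\xi e^{iuc_{-1}\xi}\nu_1(d\xi)$. The only differences are cosmetic (you keep $c_1$ general and phrase the iteration as an induction with a L\'evy--Khintchine base case, whereas the paper normalizes $c_1=1$ and iterates directly).
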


\begin{proof}
See Appendix \ref{Appendix2}.
 \end{proof}

Generalizing now to a situation in which the level process contains both an idiosyncratic discontinuity $N^1$ and a discontinuity $N^2$ potentially correlated with discontinuities in the volatility process is straightforward. The corresponding result is contained in Lemma \ref{lem2J2}.  

 \begin{lemma}\label{lem2J2}
For each $I\in\Ical^k,$ it holds that
\begin{align*}
    &\E_0[\langle I,\Z_t\rangle\exp(iuc_1W_t^1+iuc_{-1}N_t^1+iuc_{-2}N_t^2)]\\
    &=\frac {t^k}{k! }\exp\Big(\frac{-(uc_1)^2t}2+\lambda_1t\int (\exp(iuc_{-1}\xi)-1) \nu_1(d\xi)+\lambda_2t\int (\exp(iuc_{-2}\xi)-1) \nu_2(d\xi)\Big)\\
&
\times\begin{cases}
(iuc_1)^{I(1)}
(\prod_{j=1}^2(\lambda_j\int(\exp(iuc_{-j}\xi)\xi)\nu_j(d\xi))^{I(-j)})(\prod_{j=3}^e(\lambda_j\int \xi\nu_j(d\xi))^{I(-j)} )& \text{if }I=\{-e,\cdots,1\}^k,\\
0 &\text{else},
\end{cases}
\end{align*}
where we recall (c.f. Definition \ref{def2}) that $I(j)$ is the number of $j$s in $I$. 
\end{lemma}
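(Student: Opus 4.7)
The plan is to generalize the argument that established Lemma~\ref{lem2J} in an essentially symmetric fashion, noting that the only new ingredient is the factor $\exp(iuc_{-2}N_t^2)$, and $N^2$ is to play a role completely analogous to that of $N^1$. The remaining Poisson processes $N^3,\ldots,N^e$ are independent of the exponential and therefore enter the formula only through their untilted first moments $\lambda_j\int\xi\,\nu_j(d\xi)$, while $W^j$ for $j\ge 2$ are independent Brownian motions whose martingale property forces any signature element containing such an index to have zero expectation.

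First, I would introduce the complex-valued exponential martingale
\begin{equation*}
M_t:=\exp\!\Bigl(iuc_1W_t^1+iuc_{-1}N_t^1+iuc_{-2}N_t^2-\psi t\Bigr),
\end{equation*}
where $\psi=-\tfrac{(uc_1)^2}{2}+\lambda_1\!\int(e^{iuc_{-1}\xi}-1)\nu_1(d\xi)+\lambda_2\!\int(e^{iuc_{-2}\xi}-1)\nu_2(d\xi)$, exploiting the independence of $W^1$, $N^1$, and $N^2$. A formal Esscher-type tilt by $M_t$ produces a complex measure $\Q$ under which (i) $W^1$ gains a drift $iuc_1t$, (ii) $N^1$ has $\Q$-compensator $\lambda_1\!\int e^{iuc_{-1}\xi}\xi\,\nu_1(d\xi)\,ds$, (iii) $N^2$ has $\Q$-compensator $\lambda_2\!\int e^{iuc_{-2}\xi}\xi\,\nu_2(d\xi)\,ds$, and (iv) all other noise sources are unchanged. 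This immediately yields
\begin{equation*}
\E_0\!\bigl[\langle I,\Z_t\rangle\,e^{iuc_1W_t^1+iuc_{-1}N_t^1+iuc_{-2}N_t^2}\bigr]=e^{\psi t}\,\E_0^\Q\!\bigl[\langle I,\Z_t\rangle\bigr].
\end{equation*}

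Next, I would compute $\E_0^\Q[\langle I,\Z_t\rangle]$ by induction on $k=|I|$ using the recursion of Remark~\ref{aaa}, namely $\langle I,\Z_t\rangle=\int_0^t\langle I',\Z_s\rangle\,dZ_s^{i_{|I|}}$. If any $i_\ell\in\{2,\ldots,d\}$, then peeling off that layer gives a $\Q$-martingale, hence expectation zero, which explains the vanishing case. Otherwise, each layer contributes an explicit drift factor under $\Q$: $iuc_1$ for an index $1$, $\lambda_1\!\int e^{iuc_{-1}\xi}\xi\,\nu_1(d\xi)$ for index $-1$, $\lambda_2\!\int e^{iuc_{-2}\xi}\xi\,\nu_2(d\xi)$ for index $-2$, $\lambda_j\!\int\xi\,\nu_j(d\xi)$ for index $-j$ with $j\ge3$, and $1$ for index $0$. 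The nested time integrals of the constant $1$ produce the combinatorial factor $t^k/k!$, and rearranging the factors according to the multiplicities $I(j)$ recovers the product stated in the lemma. Multiplying by $e^{\psi t}$ completes the proof.

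The main obstacle is the rigorous justification of the complex-valued Radon--Nikodym density $M_t$, since the tilt is not by a positive density. This can be bypassed in one of two standard ways: either run the induction directly under $\P$, conditioning on the $\sigma$-algebra generated by $(W^1,N^1,N^2)$ to exploit independence and then applying integration by parts to push the exponential factor inside each stochastic integral; or work separately on each exponential factor and combine them by independence, so the $N^2$-tilting is derived from the $N^1$-tilting simply by relabelling. Either route reduces the task to a purely mechanical bookkeeping of how each Poisson tilt scales the first jump moment, which is exactly the pattern already established in the proof of Lemma~\ref{lem2J}.
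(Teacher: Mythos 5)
Your proposal is correct and essentially coincides with the paper's argument: the paper proves Lemma~\ref{lem2J2} by observing it is identical to the proof of Lemma~\ref{lem2J}, which introduces exactly your normalized exponential martingale (the paper's $V_t$, your $M_t$) and runs the induction under $\P$ by integration by parts on $\langle Jj,\Z_t\rangle V_t$ — precisely the rigorous fallback you describe in your final paragraph. The Esscher/change-of-measure framing is only a repackaging of the same per-layer drift factors $iuc_1$, $\lambda_j\int e^{iuc_{-j}\xi}\xi\,\nu_j(d\xi)$ for $j\in\{1,2\}$, $\lambda_j\int \xi\,\nu_j(d\xi)$ for $j\ge 3$, and $1$ for the time index, with the nested integrals yielding $t^k/k!$.
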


\begin{proof}
The proof is identical to that of Lemma \ref{lem2J}.
\end{proof}
As a direct consequence of Lemma \ref{lem2J2}, we obtain the following corollary. The corollary provides the asymptotic order of the Fourier transform of a generic \textit{It\^o signature} component and will be used repeatedly in our proofs. 
\begin{corollary}\label{corollario} For each $I\in \Ical_k,$ it holds
    $$\E_0\Big[\langle I,\Z_t\rangle\exp\Big(\frac{iu}{\sqrt t}(c_1W_t^1+iuc_{-1}N_t^1+iuc_{-2}N_t^2)\Big)\Big]=O\left(\frac{t^{k-I(1)/2}}{k!}\right).$$
\end{corollary}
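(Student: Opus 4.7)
The plan is to apply Lemma \ref{lem2J2} with the substitution $u\mapsto u/\sqrt t$ and then simply tabulate the order (in $t$) of each factor in the resulting product. If $I\not\in\{-e,\ldots,1\}^k$, the expectation vanishes and the bound is trivial, so I may assume $I\in\{-e,\ldots,1\}^k$.

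First, I would isolate the exponential prefactor produced by the lemma, namely
\[
\exp\!\Big(-\tfrac{(uc_1)^2}{2}+\lambda_1 t\!\int(\exp(iuc_{-1}\xi/\sqrt t)-1)\,\nu_1(d\xi)+\lambda_2 t\!\int(\exp(iuc_{-2}\xi/\sqrt t)-1)\,\nu_2(d\xi)\Big),
\]
and observe that its real part is the sum of the Gaussian term $-(uc_1)^2/2$ and the terms $\lambda_j t\big(\int\cos(uc_{-j}\xi/\sqrt t)\,\nu_j(d\xi)-1\big)\le 0$ for $j=1,2$; hence the whole exponential is bounded in modulus by $1$ uniformly in $t\in(0,T]$, i.e.\ $O(1)$.

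Next, I would account for the index-dependent factors in the product. The Brownian contribution $(iuc_1/\sqrt t)^{I(1)}$ is exactly $O(t^{-I(1)/2})$. The Poisson contributions $\big(\lambda_j\!\int\exp(iuc_{-j}\xi/\sqrt t)\,\xi\,\nu_j(d\xi)\big)^{I(-j)}$ for $j=1,2$ are bounded in modulus by $\big(\lambda_j\!\int|\xi|\,\nu_j(d\xi)\big)^{I(-j)}$, and the remaining factors $\big(\lambda_j\!\int\xi\,\nu_j(d\xi)\big)^{I(-j)}$ for $j\ge 3$ are constants; all of these are finite thanks to the standing moment condition $\lambda_j\!\int|\xi|^k\nu_j(d\xi)<\infty$, so they contribute $O(1)$. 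Multiplying by the front prefactor $t^k/k!$ gives the announced bound $O\!\big(t^{k-I(1)/2}/k!\big)$.

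The argument is essentially bookkeeping; the only subtlety — and what I would treat as the main (mild) obstacle — is verifying that the rescaling $u\mapsto u/\sqrt t$ leaves the Poisson-type integrals uniformly bounded rather than blowing up for small $t$. This is handled by dominating $|\exp(iuc_{-j}\xi/\sqrt t)\xi|$ by $|\xi|$ before integrating against $\nu_j$ and invoking the integrability hypothesis on the Lévy measures, so no finer estimate (e.g.\ a small-jump/large-jump decomposition) is required.
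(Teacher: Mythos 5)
Your proof is correct and follows essentially the same route as the paper's: apply Lemma \ref{lem2J2} with $u$ replaced by $u/\sqrt t$ and observe that the only factor whose order in $t$ changes is the Brownian one, $(iuc_1/\sqrt t)^{I(1)}$, yielding the extra $t^{-I(1)/2}$ against the prefactor $t^k/k!$. The paper states this as "immediate"; your additional checks that the rescaled exponential prefactor and the Poisson-type integrals stay uniformly bounded are exactly the right bookkeeping to justify that word.
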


\begin{proof}
Immediate after noticing that the characteristic exponent on the left hand side of the expression in Lemma \ref{lem2J2} has been divided by $\sqrt{t}$ and this change only affects the right hand side through the number of times ($I(1)$) the Brownian motion $W^1$ is iterated in the component $\langle I,\Z_t\rangle$.
\end{proof}
We are, once more, interested in the characteristic function of \textit{standardized} increments of the process. We begin with the (first two layers of the) assumed dynamics. Write 
\begin{eqnarray}\label{eqnX}
X_t&=&\int_0^t c_0(s) ds +\int_0^t c_1(s) dW_s^1+\int_0^t{c_{-1}}dN^1_s+\int_0^t{c_{-2}}dN^2_s\\
c_1(t)&=&c_1(0)+\int_0^t c_{01}(s) ds +\int_0^t c_{11}(s) dW_s^1+\int_0^t c_{21}(s) dW_s^2+\int_0^t c_{-2 1}(s) dN^2_s+\int_0^t c_{-31}(s)dN^3_s \notag \\
c_0(t)&=&\int_0^t c_{00}(s) ds +\int_0^t c_{10}(s) dW_s^1+\int_0^t c_{20}(s) dW_s^2+\int_0^t c_{30}(s) dW_s^3, \notag
\end{eqnarray}
where $c_{-21}$ and $c_{-31}$ are $( W,t)$-differentiable and 0 in 0. The continuous portion of the process satisfies Condition \ref{process} and is, therefore, consistent with the specification in Section~\ref{irregular}. To the process in Section~\ref{irregular}, however, we append idiosyncratic discontinuities in levels ($X$) and volatility ($c_1$), denoted by $N^1$ and $N^3$, as well as a joint discontinuity, denoted by $N^2$. The latter leads to a source of correlation between the level process and the volatility process which is distinct from the correlation induced by the common Brownian motion $W^1.$ We note that the assumed process is a \textit{nonparametric} version of specifications typically assumed in continuous-time finance, one in which the coefficients are unrestricted rather than being specified as parametric functions of the assumed state.

We may now state the central result in this subsection. The result is the analogue of Theorem~\ref{prop2}. Like in Theorem  \ref{prop2}, we could expand up to a generic order $t^n$. As earlier, the expansion to any order may be implemented through an algorithm. Here, without loss of generality, we stop the expansion to the second order in $\sqrt{t}.$ The second-order expansion will be employed in the subsequent corollary in order to derive the analogue (with price and volatility discontinuities) of the result in Corollary \ref{BR}.

\begin{theorem}\label{prop2J}
\textbf{(Expanding the characteristic function of the standardized process.)}
    Suppose that the process follows the dynamics described in Eq. \eqref{eqnX} and is a 4 times $(Z,t)$-differentiable process satisfying Condition~\ref{eqn11}. Given a vector $c$ such that $X_t=\langle c, \Z_t\rangle+\e_3(t)$ and given $m$ large enough, we obtain the following expansion:
\begin{align*}
\E_0\Big[\exp\Big(iu\frac{X_t}{c_1\sqrt t}\Big)\Big]
&= \E_0\bigg[ \exp\Big(iu\frac{c_1W_t^1+ c_{-1}N_t^1+c_{-2}N_t^2}{c_1 \sqrt t}\Big)\bigg]\\
&+
 \frac{iu}{ t^{1/2} }\sum_{I\in \Ical^{3}}
 \overline c_{I}\E_0\bigg[\langle I,\Z_t\rangle \exp\Big(iu\frac{c_1W_t^1+ c_{-1}N_t^1+c_{-2}N_t^2}{c_1 \sqrt t}\Big)\bigg]\\
 &+
 \frac{(iu)^2}{2 t }\sum_{I,J\in \Ical^{3}}
 \overline c_{I}\overline c_{J}\E_0\bigg[\langle I\overline\star J,\overline \Z_t\rangle \exp\Big(iu\frac{c_1W_t^1+ c_{-1}N_t^1+c_{-2}N_t^2}{c_1 \sqrt t}\Big)\bigg]+o(t),
\end{align*}
where $\overline c_1=\overline c_{-1}=\overline c_{-2}=c_0=0$ and $\overline c_I=c_I/c_1$ for each $I\notin\{-2,-1,0,1\}$.
\end{theorem}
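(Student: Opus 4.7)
The plan is to carry the structure of the proof of Theorem~\ref{prop2} over to the jump setting, in five steps. The main novelty, compared to the continuous case, is that the leading exponential in the standardized process now contains three terms rather than one, and products of signature elements must be collapsed via the $\ostar$ operator rather than the $\star$ operator.

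First, taking $f(x)=\exp(iux/(c_1\sqrt t))$, which satisfies $\sup|f'|=|u|/(c_1\sqrt t)$, apply the argument of Lemma~\ref{lem4fnew} to obtain
$$\E_0[\exp(iuX_t/(c_1\sqrt t))]=\E_0[\exp(iu\langle c,\Z_t\rangle/(c_1\sqrt t))]+o(t).$$
The moment bound from Corollary~\ref{cor1new} applied to $\e_3(t)$, together with Condition~\ref{eqn11} for $N$ large enough, ensures that $\E_0[|\e_3(t)|]/\sqrt t = o(t)$. Second, split the signature representation of $\langle c,\Z_t\rangle$ by isolating the three indices $(1)$, $(-1)$, $(-2)$ that drive the leading (Gaussian-plus-jump) characteristic behaviour, and rescale the remaining coefficients by $c_1$:
$$\frac{\langle c,\Z_t\rangle}{c_1\sqrt t}=\frac{c_1W_t^1+c_{-1}N_t^1+c_{-2}N_t^2}{c_1\sqrt t}+\frac{1}{\sqrt t}\sum_{I\in\Ical^3}\overline c_I\langle I,\Z_t\rangle,$$
recalling that $\overline c_1=\overline c_{-1}=\overline c_{-2}=0$. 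Factor the exponential accordingly.

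Third, Taylor-expand the residual factor to second order,
$$\exp\!\Big(\tfrac{iu}{\sqrt t}\sum_I\overline c_I\langle I,\Z_t\rangle\Big)=1+\tfrac{iu}{\sqrt t}\sum_I\overline c_I\langle I,\Z_t\rangle+\tfrac{(iu)^2}{2t}\Big(\sum_I\overline c_I\langle I,\Z_t\rangle\Big)^{\!2}+\rho_t,$$
with cubic remainder $\rho_t$. Fourth, collapse the quadratic term to a linear form via Definition~\ref{lem5new} and Lemma~\ref{newprop},
$$\Big(\sum_I\overline c_I\langle I,\Z_t\rangle\Big)^{\!2}=\sum_{I,J\in\Ical^3}\overline c_I\overline c_J\langle I\ostar J,\overline\Z_t\rangle,$$
multiply by the leading exponential, and take expectations; this reproduces the three explicit summands in the statement.

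Fifth, control $\rho_t$. Writing each cubic product as $\langle I,\Z_t\rangle\langle J,\Z_t\rangle\langle K,\Z_t\rangle=\langle I\ostar J\ostar K,\overline\Z_t\rangle$ and invoking Corollary~\ref{corollario} (whose proof transfers to $\overline\Z$, since components carrying indices other than $(1),(-1),(-2)$ integrate to zero against the leading exponential), each signature element $\langle H,\overline\Z_t\rangle$ of length $k$ with $H(1)=a$ contributes order $t^{k-a/2}/k!$. Excluding $(1)$, $(-1)$, $(-2)$ from the sum guarantees $|I|-I(1)/2\geq 1$ for every admissible $I$, so each factor $\overline c_I\langle I,\Z_t\rangle/\sqrt t$ is $O(\sqrt t)$ after Fourier weighting; hence the cubic part of $\rho_t$ is $O(t^{3/2})=o(t)$, and higher-order Taylor terms are smaller still.

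\emph{Main obstacle.} The delicate point is the fifth step. One must verify that expanding $I\ostar J\ostar K$ through repeated application of Lemma~\ref{newprop} yields only terms whose Fourier-weighted expectation is $o(t^{3/2})$, and in particular that the new type of cross-term in Lemma~\ref{newprop} (the $\tau_{ij}$-contribution, arising from simultaneous jumps and merging indices via the $\circ$-operation) does not degrade the asymptotic order. The combinatorial bookkeeping, together with the fact that removing the three leading indices improves the order of each $\overline c_I\langle I,\Z_t\rangle/\sqrt t$ by a factor $\sqrt t$, is what makes the argument close and reduces everything to mechanical application of Lemma~\ref{lem2J2} and Corollary~\ref{corollario}.
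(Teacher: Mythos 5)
Your steps 1--4 coincide with the paper's proof: replace $X_t$ by $\langle c,\Z_t\rangle$ at cost $\tfrac{C}{\sqrt t}\E_0[|\e_3(t)|]=o(t)$, peel off the three leading indices so that the exponential factors as $\exp\bigl(iu\tfrac{c_1W_t^1+c_{-1}N_t^1+c_{-2}N_t^2}{c_1\sqrt t}\bigr)\exp\bigl(iu\tfrac{\langle\overline c,\Z_t\rangle}{\sqrt t}\bigr)$, Taylor-expand the second factor to order two, and collapse the square via $\ostar$. Up to there you are reproducing the paper's argument.

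The gap is in your step 5. The object you must control is the Taylor \emph{remainder} of $\exp(iz)$ after the quadratic term, which (applying the mean-value/Lagrange form separately to real and imaginary parts, as the paper does for $R_t$) is only known to satisfy $|\rho_t|\leq \tfrac{C}{t^{3/2}}\,|\langle\overline c,\Z_t\rangle|^{3}$. This is an \emph{absolute value} of a random variable, not an exact cubic polynomial in signature elements, so the identity $\langle I,\Z_t\rangle\langle J,\Z_t\rangle\langle K,\Z_t\rangle=\langle I\ostar J\ostar K,\overline\Z_t\rangle$ and Corollary~\ref{corollario} (which compute \emph{signed} Fourier-weighted expectations of exact products) simply do not apply to it; and appealing to "higher-order Taylor terms are smaller still" just pushes the same problem one order up, since at some order you must bound an absolute remainder. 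What is needed — and what the paper supplies — is an absolute moment bound: since $\overline c_I\neq 0$ only if $2I(0)+I(>0)\geq 2$, Corollary~\ref{cor1new} gives $\E_0[|\langle\overline c,\Z_t\rangle|^{3}]\leq \E_0[|\langle\overline c,\Z_t\rangle|^{4}]^{3/4}=O(t^{3})$, whence $\E_0[|\rho_t|]\leq Ct^{-3/2}\cdot O(t^3)=O(t^{3/2})=o(t)$ after using $|\exp(iu\cdot)|\leq 1$ for the leading factor. This one-line moment estimate entirely sidesteps the combinatorial bookkeeping of $I\ostar J\ostar K$ and of the $\tau_{ij}$/$\circ$ cross-terms that you single out as the "main obstacle"; that machinery is needed only for the terms that are kept explicitly in the statement (where it is used through the definition of $\ostar$), not for the error control. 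Your argument is repairable — insert the $L^4$-to-$L^3$ bound from Corollary~\ref{cor1new} in place of the appeal to Corollary~\ref{corollario} — but as written the remainder is not controlled.
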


\begin{proof}
See Appendix \ref{Appendix2}.
\end{proof}

\begin{corollary}\label{corr}\textbf{(A second-order (in $\sqrt{t}$) expansion.)}
Suppose that the process follows the dynamics described in Eq. \eqref{eqnX} and is a 4 times $(Z,t)$-differentiable process. Given a vector $c$ such that $X_t=\langle c, \Z_t\rangle+\e_3(t)$, we obtain the following expansion:
\begin{align*}
&\E_0\Big[\exp\Big(iu\frac{X_t-c_0t}{c_1\sqrt t}\Big)\Big]\\
&=\phi(u,c,t)+ t e^{-\frac{u^2}2}\Big(\lambda_1\int (e^{iu\frac{c_{-1}}{c_1\sqrt t }\xi}-1) \nu_1(d\xi)+\lambda_2\int (e^{iu\frac{c_{-2}}{c_1\sqrt t}\xi}-1) \nu_2(d\xi)
\Big)+o(t ),
\end{align*}
where $\phi(u,c,t)$ is the characteristic function expansion reported in Corollary~\ref{BR}.
\end{corollary}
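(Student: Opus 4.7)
The starting point is Theorem~\ref{prop2J}, which decomposes $\E_0[\exp(iuX_t/(c_1\sqrt t))]$ as the sum of a leading Gaussian-jump characteristic function plus two correction sums (one of order $1/\sqrt t$, one of order $1/t$) involving the star products $\langle I\overline\star J,\overline\Z_t\rangle$ weighted by products of normalized coefficients $\overline c_I$. My plan is to evaluate the leading term exactly through a Lévy--Khintchine factorization, and then to show that the two correction sums contribute the continuous-case expansion $\phi(u,c,t)$ plus an $o(t)$ remainder.

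For the leading term, I would use the independence of $W^1,N^1,N^2$ to factor
\[
\E_0\Big[\exp\Big(\tfrac{iu}{c_1\sqrt t}(c_1W_t^1+c_{-1}N_t^1+c_{-2}N_t^2)\Big)\Big]
=e^{-u^2/2}\prod_{j=1}^2\exp\!\Big(\lambda_jt\!\int\!(e^{iuc_{-j}\xi/(c_1\sqrt t)}-1)\nu_j(d\xi)\Big).
\]
Since the exponent inside each jump factor is $O(t)$ uniformly in $t\in[0,\delta]$ (its absolute value is bounded by $2\lambda_j t$), I would Taylor-expand $e^x=1+x+O(x^2)$ with $x=\lambda_jt\int(\cdots)\nu_j(d\xi)$, so that the product becomes $e^{-u^2/2}\big(1+\sum_{j=1,2}\lambda_jt\int(e^{iuc_{-j}\xi/(c_1\sqrt t)}-1)\nu_j(d\xi)\big)+O(t^2)$. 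This produces exactly the jump correction stated in the corollary plus the $e^{-u^2/2}$ piece that sits at the head of $\phi(u,c,t)$.

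For the two correction sums, I would invoke Lemma~\ref{lem2J2} to evaluate each $\E_0[\langle I,\Z_t\rangle\exp(\cdot)]$ and each $\E_0[\langle I\overline\star J,\overline\Z_t\rangle\exp(\cdot)]$ (using Lemma~\ref{newprop} to linearize the star products). Splitting $\overline c_I$ into coefficients with purely continuous indices and coefficients carrying at least one jump index, the purely continuous contributions reproduce, verbatim, the derivation that yielded Corollary~\ref{BR}; together with the Gaussian $e^{-u^2/2}$ isolated above, they assemble into the full $\phi(u,c,t)$. For the coefficients carrying at least one negative (jump) index, I would track the time order via Corollary~\ref{corollario}: each such $\langle I,\Z_t\rangle$-expectation is $O(t^{|I|-I(1)/2})$, and the $1/\sqrt t$ (or $1/t$) prefactor, combined with the fact that any contributing $I$ that produces a non-vanishing expectation must carry at least one Brownian or time slot in addition to the jump slot, yields a contribution of order at least $t^{3/2}$, which is absorbed into the $o(t)$ remainder. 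Finally, I would account for the replacement of $X_t$ by $X_t-c_0 t$ on the left-hand side (a multiplicative factor $e^{-iuc_0\sqrt t/c_1}=1+O(\sqrt t)$) exactly as in the continuous case.

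The main obstacle is the bookkeeping in the last paragraph: unlike in Corollary~\ref{BR}, the exponential modulation $\exp(iuc_{-j}\xi/(c_1\sqrt t))$ does not vanish as $t\downarrow 0$, so one cannot Taylor-expand it around $0$; the time decay must instead come from the factor $t^{|I|-I(1)/2}$ provided by Corollary~\ref{corollario}. One has to verify carefully that every jump-carrying index $I$ (respectively $I\overline\star J$) appearing in the first- and second-order sums of Theorem~\ref{prop2J} has enough non-Brownian length for its expectation to be $o(t^{3/2})$ and $o(t^2)$ respectively, so that after multiplication by $1/\sqrt t$ and $1/t$ only $o(t)$ remains. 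Once this indexing is verified, the corollary follows by collecting the three streams of terms.
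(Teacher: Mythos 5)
Your overall route is the paper's: apply Theorem~\ref{prop2J}, evaluate the surviving expectations with Lemma~\ref{lem2J2}, and dispose of the remaining terms via Corollary~\ref{corollario}. Your treatment of the leading term (L\'evy--Khintchine factorization followed by $\exp(\lambda_j t\int(\cdot)\,\nu_j(d\xi))=1+\lambda_j t\int(\cdot)\,\nu_j(d\xi)+O(t^2)$) is what the paper does too, the only difference being that the paper factors $\exp(-u^2/2+A(u)t)$ out of \emph{every} term simultaneously and expands $e^{A(u)t}$ once at the end, which is a bookkeeping convenience rather than a different argument.

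The genuine gap is in your order count for the jump-carrying indices. You claim that any nonvanishing jump-carrying $I$ in the first-order sum contributes at least $t^{3/2}$ after the $1/\sqrt t$ prefactor. This fails for $I=(-2,1)$ and $I=(-3,1)$: there $|I|=2$ and $I(1)=1$, so Corollary~\ref{corollario} gives an expectation of order $t^{2-1/2}=t^{3/2}$, and dividing by $\sqrt t$ leaves a term of \emph{exact} order $t$; by Lemma~\ref{lem2J2} its coefficient is proportional to $iu\,\lambda_j\int e^{iuc_{-j}\xi/(c_1\sqrt t)}\xi\,\nu_j(d\xi)$, which does not vanish in general. These volatility-jump terms are not absorbed into $o(t)$ by order counting; they disappear only because Eq.~\eqref{eqnX} stipulates that $c_{-21}$ and $c_{-31}$ are zero at $0$, so that $\overline c_{(-2,1)}=\overline c_{(-3,1)}=0$. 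The paper makes precisely this point in the discussion following the corollary (the volatility discontinuities ``would, in general, play a role at time 0 since they would be of order $t$''), so your proof must invoke that model assumption explicitly in place of the blanket $t^{3/2}$ claim. A secondary imprecision: the centering by $c_0t$ cannot be handled as a multiplicative factor $1+O(\sqrt t)$ to be discarded, since $O(\sqrt t)$ is exactly the order being tracked; it is harmless here only because $c_0(0)=0$ under Eq.~\eqref{eqnX}, which is how the paper disposes of it.
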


\begin{proof}
See Appendix \ref{Appendix2}.
\end{proof}

The presence of level discontinuities adds two terms to the characteristic function expansion of the continuous portion of the process, i.e. $\phi(u,c,t),$ in Corollary~\ref{BR}. The role played by these two terms, jointly collected in 
\begin{equation}\label{term}
\lambda_1\int (e^{iu\frac{c_{-1}}{c_1\sqrt t }\xi}-1) \nu_1(d\xi)+\lambda_2\int (e^{iu\frac{c_{-2}}{c_1\sqrt t}\xi}-1)\nu_2(d\xi),
\end{equation}
is intuitive. Eq.~(\ref{term}) is the sum of the first-order expansions of the characteristic functions of the level discontinuities with intensities ($\lambda_1$ and $\lambda_2$) and jump measures ($\nu_1$ and $\nu_2$) frozen at 0.

The Corollary provides an alternative proof of the second-order expansion in Corollary 7 of \cite{BR:17}, one that is obtained, here, using the properties of the \textit{It\^o signature}. The slightly different look of the two expressions is an implication of the assumption $c_{-21} = c_{-31}=0,$ an assumption which eliminates the role of volatility discontinuities at zero. The assumption is only meant to aid the reader by simplifying the expression in Theorem~\ref{term} and clarifying its logic, as a result. Consistent with \cite{BR:17}, the idiosyncratic and joint discontinuities in volatility would, in general, play a role at time 0 since they would be of order $t.$ Their order is an implication of Corollary~\ref{corollario}. Notice, in fact, that the (standardized) Fourier transform of the generic \textit{It\^o signature} component $$ \frac{iu}{ t^{1/2} }\overline c_{I}\E_0\bigg[\langle I,\Z_t\rangle \exp\Big(iu\frac{c_1W_t^1+ c_{-1}N_t^1+c_{-2}N_t^2}{c_1 \sqrt t}\Big)\bigg]$$
in Theorem~\ref{prop2J} is, because of Corollary \ref{corollario}, of order $t$ either when $I=(-3,1)$ or when $I=(-2,1).$

\section{Further discussion and conclusions}\label{Conclusions}
An active area of research in (high-frequency) inference for continuous-time processes has recently used local expansions to estimate various aspects of the process of interest (e.g., \citealp{BR:17}, \citealp{To:21}, and \citealp{chong2024volatility}). 

Two typical features of these local expansions is that they apply to a rather specific object, namely the conditional characteristic function of the process, and they are low order. Emphasis on the conditional characteristic function is, of course, justified by its one-to-one mapping with the process' density and, therefore, its usefulness, e.g., in efficient inferential procedures. While generally appropriate for the specific task at hand, emphasis on low-order approximations is also, undoubtedly, a result of the complexity of these expansions and the corresponding need to track an increasing number of terms for every additional order. As an example, the second-order expansion of the characteristic function of the standardized process in Corollary \ref{BR} (which re-derives - using the methods proposed in this article - central results in \citealp{BR:17}, and \citealp{To:21}) has 1 term to the first order (in $\sqrt{t}$) and 8 to the second order (in $t$). As shown in Corollary \ref{BR1}, which contains a novel third-order expansion of the same object, the third order (in $t^{3/2}$) contains 19 terms, thereby resulting in nearly exponential growth of the number of terms for each additional order in the expansion. Deriving the 19 third-order terms using existing methods of proof is \enquote{expensive}. In addition, existing methods of proof would not automate the expansions to higher orders, thereby requiring \enquote{increasingly expensive} calculations. Yet, interest may be in the dynamics of deep characteristics which only higher-order expansions would reveal.         
  
Against this backdrop, this article provides local expansions of semimartingales and their moments using the properties of the \textit{It\^o signature}. The characteristic function of the standardized process is a sub-case of our set of results, which are for conditional moments of functions with well-defined derivatives at zero as well as for conditional moments of functions with unbounded derivatives at zero. Low-order expansions are also a sub-case of our set of results. Making use of fundamental properties of the \textit{It\^o signature}, we show how expansions of generic local moments to any order can be derived and implemented through algorithms. The end result is arbitrarily-accurate representations of local moments for general functions of processes with and without compound Poisson discontinuities. 

Going forward, our methods offer an alternative way to conduct asymptotics in the context of these expansions. Not only can we assume time to become increasingly small given a specific order of the expansion, we may also work with asymptotic designs in which the order of the expansion enlarges asymptotically. As emphasized, the latter approach is expected to permit a deeper dive into layers of the process and, therefore, identification of quantities other than $c_1$ (spot volatility), $c^2_{11}+c^2_{12}$ (spot variance of volatility) and $c_{11}$ (leverage), the exclusive focus of the current high-frequency literature. It could also permit superior identification of these same quantities, when time is not especially short.

The proposed methods could also be used to automate moment expansions of suitable modifications of the stochastic process assumed in the current article. We provide three examples. First, while our emphasis is on economically-meaningful (large) discontinuities, jumps of infinity activity (and infinity variation) may be added. Second, fractional Brownian motion may be one of the shocks affecting the process characteristics (e.g., $c_1$, spot volatility), thereby permitting forms of  \enquote{roughness} (\citealp{gatheral2018volatility}). The recent characteristic function expansion in \citet{chong2022short} makes progress along both dimensions while remaining low order. Finally, we may automate expansions of (moments of) multivariate processes beyond the bivariate (for levels and volatility) second-order characteristic function expansion in \cite{BR:17}. These lines of inquiry are better left for future work.  

\appendix     

\section{Appendix: Proofs in the continuous case} \label{Appendix1}

\begin{small}

\begin{proof}[Proof of Lemma \ref{prop}]
We proceed by induction. Since $\langle I,\widehat\W_t\rangle=1$ for $I=\emptyset,$ the result is clear if $|I|=0$ or $|J|=0$. Assume, now, that the claim holds for each $I,J$ such that $|I|+|J|\leq n-1$. Thus, for each $I,J$ such that $|I|+|J|\leq n,$ 
by It\^o's formula and since $[\widehat W^i,\widehat W^j]_t=\rho_{ij} t=\rho_{ij}\widehat W_t^0,$ we have that
\begin{align*}
\langle I,\widehat\W_t\rangle\langle J,\widehat\W_t\rangle
&=\bigg(\int_0^t\langle I',\widehat\W_s\rangle d\widehat W_s^{i_{|I|}}\bigg)\bigg(\int_0^t\langle J',\widehat\W_s\rangle d\widehat W_s^{j_{|J|}}\bigg)\\
&=\int_0^t\langle J,\widehat\W_s\rangle\langle I',\widehat\W_s\rangle d\widehat W_s^{i_{|I|}}+\int_0^t\langle I,\widehat\W_s\rangle\langle J',\widehat\W_s\rangle d\widehat W_s^{j_{|J|}} \\
&+\rho_{i_{|I|}j_{|J|}}\int_0^t\langle J',\widehat\W_s\rangle\langle I',\widehat\W_s\rangle d\widehat W^0_s.
\end{align*}
By induction, the claim follows.
\end{proof}

\begin{proof}[Proof of Lemma \ref{lemm9}]
Observe that for $H_t$ such that $t\mapsto \E_0[H_t^{2K}]$ is integrable on $[0,\delta]$ we have that
\begin{align*}
\E_0\bigg[\bigg(\int_0^tH_sds\bigg)^{2K}\bigg]
&\leq t^{2K-1}\E_0\bigg[\int_0^tH_s^{2K}ds\bigg]
= t^{2K-1}\int_0^t\E_0[H_s^{2K}]ds,\\
\E_0\bigg[\bigg(\int_0^tH_sdW^i_s\bigg)^{2K}\bigg]
&\leq C_{2K}\E_0\bigg[\bigg(\int_0^tH_s^2ds\bigg)^K\bigg]
\leq C_{2K}t^{K-1} \int_0^t\E_0[H_s^{2K}]ds,
\end{align*}
for some constant $C_{2K}>0$ (which depends on $K$) changing from place to place.\footnote{We use the same convention for all constants in these proofs.} Here, Jensen's inequality has been used in the first and in the third inequality and BDG inequality has been used in the second. We, thus, obtain that
$$\E_0\bigg[\bigg(\int_0^tH_sd\widehat W_s^i\bigg)^{2K}\bigg]
\leq C_{2K}^{1-1_{\{i=0\}}}t^{K(1+1_{\{i=0\}})-1}\int_0^t\E_0[H_s^{2K}]ds.$$
Recursively applying this argument, we get - using Lemma \ref{lem_moments} - that 
\begin{align*}
\E_0\bigg[\bigg(\int_0^t\int_0^{t_{n}}\cdots\int_0^{t_{2}} H_{t_{1}} d\widehat W_{t_{1}}^{i_{1}}\cdots d\widehat W_{t_{|I|}}^{i_{|I|}}\bigg)^{2K}\bigg]
&\leq C_{2K}^{|I|-I(0)}t^{(K-1)|I|+KI(0)}\frac{t^{|I|}}{|I|!}\sup_{t\in[0,\delta]}\E_0[H_{t}^{2K}],
\end{align*}
and the claim follows.
\end{proof}

\begin{proof}[Proof of the Corollary \ref{corr1}]
We observe that, by Jensen's inequality,
$$\E_0[\langle I,\widehat\W_t\rangle^{m}]\leq \E_0[\langle I,\widehat\W_t\rangle^{2m}]^{1/2}\qquad
\text{and} 
\qquad\E_0[|\e_n(t)|^{m}]\leq \E_0[|\e_n(t)|^{2N}]^{m/2N},$$
since $m\leq 2N.$ The claim follows directly from Lemma~\ref{lemm9} by choosing $H_t = 1$ and, because of Eq.~(\ref{error}), $H_t = \sum_{|I|=n+1}c_I(t)$, respectively.
\end{proof}

\begin{proof}[Proof of Lemma \ref{lem3}]
Given Remark \ref{bbbrem}, we have that
$$\langle I,\widehat\W_t\rangle=\int_0^t\langle I',\widehat \W_s\rangle1_{\{i_{|I|}=0\}} ds+\text{martingale}.$$
By {Lemma~\ref{prop}}, this representation implies that
$$[\langle I,\widehat\W\rangle,\langle J,\widehat\W\rangle]_t=\int_0^t\langle I'\star J',\widehat \W_s\rangle\rho_{i_{|I|},j_{|J|}}ds.$$
By It\^o's lemma, we then have
\begin{eqnarray*}
df\Big(\langle c,\widehat \W_t\rangle\Big) 
&=&f'\Big(\langle c,\widehat \W_t\rangle\Big)
\sum_{I\in \Ical_n}c_I1_{\{i_{|I|=0}\}}\langle I',\widehat \W_t\rangle dt\\
&+&f'\Big(\langle c,\widehat \W_t\rangle\Big)
\sum_{I\in \Ical_n}c_I1_{\{i_{|I|>0}\}}\langle I',\widehat \W_t\rangle d W^{i_{|I|}}_t\\
&+&\frac 1 2 f''\Big(\langle c,\widehat \W_t\rangle\Big)
\sum_{I,J\in \Ical_n}c_Ic_J{\rho_{i_{|I|},j_{|J|}}}\langle I'\star J',\widehat \W_t\rangle dt.
\end{eqnarray*}
The claim follows, again, from It\^o's lemma, since 
\begin{eqnarray*}
f\Big(\langle c,\widehat \W_t\rangle\Big)\langle d,\widehat \W_t\rangle &=& \int_0^t f\Big(\langle c,\widehat \W_s\rangle\Big)\sum_{H\in \Ical_n}d_H 1_{\{\red h_{|H|}=0\}} \langle H' , \widehat \W_s \rangle ds \\
&+& \int_0^t f'\Big(\langle c,\widehat \W_s\rangle\Big)\sum_{I,H\in \Ical_n}c_Id_H 1_{\{i_{|I|}=0\}} \langle I' \star H, \widehat \W_s \rangle ds \\
&+&  \int_0^t \frac 1 2 f''\Big(\langle c,\widehat \W_s\rangle\Big)
\sum_{I,J\in \Ical_n}c_Ic_Jd_H{\rho_{i_{|I|},j_{|J|}}}\langle I'\star J' \star H,\widehat \W_s\rangle ds \\
&+& \int_0^t f'\Big(\langle c,\widehat \W_s\rangle\Big)
\sum_{I,H\in \Ical_n}c_I d_H\rho_{\red i_{|I|},h_{|H|}}\langle I' \star H' ,\widehat \W_s\rangle ds \\
&+& \text{martingale}.
\end{eqnarray*}
Observe that, by the assumed growth condition, the last term is a true martingale and not just a local martingale.
\end{proof}

\begin{proof}[Proof of Theorem~\ref{thm1}]
By Lemma~\ref{lem3}, we have that
$$\partial_t\E_0\Big[f\Big(\langle c,\widehat \W_t\rangle\Big)\langle d,\widehat \W_t\rangle\Big]
=\sum_{k=0}^2\E_0[f^{(k)}(\langle c,\widehat \W_t\rangle)\langle\Gcal_{c,k}(d),\widehat \W_t\rangle].$$
Because Lemma~\ref{lem3} is written for $f\Big(\langle c,\widehat \W_t\rangle\Big)\langle d,\widehat \W_t\rangle$ rather than for $f\Big(\langle c,\widehat \W_t\rangle\Big),$ we can easily iterate the computation of derivatives. The $n^{th}$ derivative is, in fact, 
$$\partial_t^n\E_0\Big[f\Big(\langle c,\widehat \W_t\rangle\Big)\langle d,\widehat \W_t\rangle\Big]
=\sum_{k_1,\ldots,k_n=0}^2\E_0[f^{(k_1+\ldots+k_n)}(\langle c,\widehat \W_t\rangle)\langle\Gcal_{c,k_1,\ldots,k_n}(d),\widehat \W_t\rangle],$$
for each vector $d$. Recalling that $\langle \emptyset,\widehat \W_t\rangle = 1,$ the claim follows by an application of Taylor's approximation theorem to the map $t\mapsto \E_0[f\Big(\langle c,\widehat \W_t\rangle\Big)\langle \emptyset,\widehat \W_t\rangle]$ which is expanded to the order $N$ afforded by the differentiability of the function $f(.)$.
\end{proof}

\begin{proof}[Proof of Lemma \ref{lem4f}]
Observe that, because $X_t=\langle c,\widehat\W_t\rangle+\e_n(t),$ by the mean-value theorem, we may write
$$f(X_t)=f(\langle c,\widehat\W_t\rangle)+Y\e_n(t),$$
for some random variable $Y$ satisfying $|Y|\leq \textrm{sup}|f'| \leq C$.
Cauchy-Schwartz inequality and Corollary~\ref{corr1} now give
$$\E_0[|Y\e_n(t)|] \leq \E_0[Y^2]^{1/2}\E_0[\e_n(t)^2]^{1/2}\leq C\E_0[\e_n(t)^2]^{1/2}=o(t^{n/2})$$
and the claim follows.
 \end{proof}
 
 \begin{proof}[Proof of Lemma \ref{lem4}]
Observe that, since $X_t=\langle c,\widehat\W_t\rangle+\e_n(t),$ the binomial formula gives 
$$X_t^k=\langle c,\widehat\W_t\rangle^k+\sum_{\ell=1}^k\binom k \ell \langle c,\widehat\W_t\rangle^{k-\ell}\e_n(t)^{\ell}.$$
By Holder's inequality, we know that 
$$\E_0[\langle I,\widehat\W_t\rangle^{k-\ell}\e_n(t)^{\ell}]\leq \E_0[\langle I,\widehat\W_t\rangle^{2m}]^{(k-\ell)/2m}\E_0[\e_n(t)^{2N}]^{\ell/2N}$$
for $m$ such that $\frac {k-\ell}{2m}+\frac \ell {2N}=1$. Since, by Corollary~\ref{corr1} and Remark \ref{rem1}, the first term is bounded on $[0,\delta]$ and the second term is bounded by $C t^{\ell(n+1)/2}$ (and, thus, by $C t^{(n+1)/2}$), the claim follows.
 \end{proof}
 
 \begin{proof}[Proof of Lemma \ref{lem2}]
Observe that, for $j\in\{1,\ldots,d\},$ by It\^o's Lemma, we have
\begin{eqnarray} \label{exx}
d\exp(iuW_t^1+u^2t/2)&=&iu\exp(iuW_t^1+u^2t/2)dW_t^1 - \frac{u^2}{2}\exp(iuW_t^1+u^2t/2)dt + \frac{u^2}{2}\exp(iuW_t^1+u^2t/2)dt \notag\\
&=&iu\exp(iuW_t^1+u^2t/2)dW_t^1.
\end{eqnarray}
Thus, for each $j>0$,
\begin{eqnarray}\label{exp}
d\langle Jj,\widehat\W_t\rangle\exp(iuW_t^1+u^2t/2)&=&\exp(iuW_t^1+u^2t/2)\langle J,\widehat\W_t\rangle dW_t^{j}+\langle Jj,\widehat\W_t\rangle iu\exp(iuW_t^1+u^2t/2) dW_t^1 \notag \\
&&+iu\exp(iuW_t^1+u^2t/2)\langle J,\widehat\W_t\rangle 1_{\{j=1\}} dt
\end{eqnarray}
and
\begin{eqnarray*}
&&d\langle J0,\widehat\W_t\rangle\exp(iuW_t^1+u^2t/2)=\exp(iuW_t^1+u^2t/2)\langle J,\widehat\W_t\rangle dt
+\langle J0,\widehat\W_t\rangle iu\exp(iuW_t^1+u^2t/2) dW_t^1.
\end{eqnarray*}
The last two expressions imply that
$$\E_0[\langle Jj,\widehat\W_t\rangle\exp(iuW_t^1+u^2t/2)]=(1_{\{j=0\}}+iu{1_{\{j=1\}}})\E_0\left[\int_0^t\langle J,\widehat\W_s\rangle\exp(iuW_s^1+u^2s/2) ds
\right].$$
Proceeding by recursion for $I\in \{0,1\}^k,$ we obtain
\begin{align*}
\E_0[\langle I,\widehat\W_t\rangle\exp(iuW_t^1+u^2t/2)]
&=
(iu)^{I(1)}\int_0^t\int_0^{t_1}\cdots \int_0^{t_{k-1}}\E_0[\exp(iuW_{t_k}^1+u^2{t_k}/2)] d {t_k}\cdots d {t_1}\\
&=(iu)^{I(1)}\frac {t^k}{k! }.
\end{align*}
Finally, Eq.~(\ref{exp}) clarifies that the expectation vanishes for each $I\notin \{0,1\}^k$.
\end{proof}

\begin{proof}[Proof of Theorem~\ref{prop2}]
As $X_t=\langle  c,\widehat \W_t\rangle+\e_{m+1}(t)$, an application of the mean-value theorem\footnote{Recall that the mean-value theorem does not hold, in general, for complex-valued functions.} for the real and the imaginary parts of $f(x)=\exp({iux})$ yields
\begin{align*}
f\Big(\frac{X_t}{c_1\sqrt t}\Big)&= f\Big(\frac{\langle  c,\widehat \W_t\rangle}{c_1\sqrt t}\Big)+Z\frac{\e_{m+1}(t)}{c_1\sqrt t},
\end{align*}
for some random variable $Z$ satisfying $|Z|\leq \textrm{sup}| \Re(f')|+\textrm{sup}| \Im(f')| \leq C$. By Condition~\ref{eqn11}, an application of Cauchy-Schwartz - as in the proof of Lemma~\ref{lem4f} - yields 
$$\E_0\Big[f\Big(\frac{X_t}{c_1\sqrt t}\Big)\Big]= \E_0\Big[f\Big(\frac{\langle  c,\widehat \W_t\rangle}{c_1\sqrt t}\Big)\Big]+o(t^{m/2}).$$
Next, note that $\langle  c,\widehat \W_t\rangle/c_1=\langle  \overline c,\widehat \W_t\rangle+W^1_t$ and, hence,
$\E_0\Big[\exp\Big(iu\frac{\langle  c,\widehat \W_t\rangle}{c_1\sqrt t}\Big)\Big]
=\E_0\Big[f\Big(\frac{W_t^1}{\sqrt t}\Big)f\Big(\frac{\langle  \bar c,\widehat \W_t\rangle}{\sqrt t}\Big)\Big]$. An application of Taylor's approximation theorem to the real and the imaginary parts of $f$, thus, yields
\begin{align*}
f\Big(\frac{\langle  \bar c,\widehat \W_t\rangle}{\sqrt t}\Big)&=\sum_{\ell=0}^m \frac{f^{(\ell)}(0)}{t^{\ell/2}\ell!}\langle  \bar c,\widehat \W_t\rangle^\ell+R^m_f(\langle  \bar c,\widehat \W_t\rangle),
\end{align*}
where
$$|R^m_f(y)|\leq\frac{\textrm{sup}|\Re(f^{(m+1)})|+\textrm{sup}|\Im(f^{(m+1)})|}{(m+1)!}\frac{|y|^{m+1}}{t^{(m+1)/2}}
\leq 2\frac{\textrm{sup}|f^{(m+1)}|}{(m+1)!}\frac{|y|^{m+1}}{t^{(m+1)/2}} ,$$
 with $\textrm{sup}|f^{(m+1)}|<C$. By Corollary~\ref{corr1} we get that
$$\E_0[|\langle  \bar c,\widehat \W_t\rangle|^{m+1}]\leq C\sum_{I\in \Ical^{m+1}} |\bar c_I|\E_0[|\langle I,\widehat \W_t\rangle|^{m+1}]\leq C\sum_{I\in \Ical^{m+1}} |\bar c_I|t^{(m+1)(|I|+I(0))/2},$$
for some potentially different constants $C$.
As $\bar c_\emptyset=\bar c_1=0$ by Condition~\ref{processcond}, this can be bounded by $C t^{m+1}$, implying in particular that
$$\E_0[|R^m_f(\langle  \bar c,\widehat \W_t\rangle)|]\leq
C\frac{\E_0[|\langle  \bar c,\widehat \W_t\rangle|^{m+1}]}{t^{(m+1)/2}}
=o(t^{m/2}).$$
The claim now follows from Lemma~\ref{prop} and the observation that $f^{(\ell)}(0)=(iu)^\ell$.
\end{proof}

\begin{proof}[Proof of Corollary \ref{BR}]
By Theorem~\ref{prop2}, for $m=2$, we have 
\begin{align*}
\E_0\Big[\exp\Big(iu\frac{X_t}{c_1\sqrt t}\Big)\Big]
&=\E_0\bigg[ \exp\Big(\frac{iu}{\sqrt t}{W_t^1}\Big)\bigg]
+ \frac{iu}{\sqrt t }\sum_{I\in \Ical^3}\overline c_{I}\E_0\bigg[\langle I,\widehat\W_t\rangle \exp\Big(\frac{iu}{\sqrt t}{W_t^1}\Big)\bigg]\\
&\qquad+\frac{(iu)^2}{2t }\sum_{I_1, I_2 \in \Ical^3}\overline c_{I_1} \overline c_{I_2}\E_0\bigg[\langle I_1\star I_2,\widehat\W_t\rangle \exp\Big(\frac{iu}{\sqrt t}{W_t^1}\Big)\bigg] +o(t ),
\end{align*}
Observe that, setting $u_t:=u/\sqrt t,$ by Lemma~\ref{lem2} we have that $\E_0[\exp(iu_tW_t^1)]= e^{-\frac{u^2} 2},$
\begin{align} \label{aaaeq}
\sum_{I\in \Ical^3}&\overline c_{I}\E_0\bigg[\langle I,\widehat\W_t\rangle \exp\Big(iu_t{W_t^1}\Big)\bigg] \notag \\
 &=
 e^{-\frac{u^2} 2}
\bigg[
(\overline c_{01}+\overline c_{10})(i u_t)\frac{t^{2}}2
+\overline c_{11}(i u_t)^2\frac{t^{2}}2
+\overline c_{111}(i u_t)^3\frac{t^{3}}6\bigg]+o(t^{3/2}),
\end{align}
and
\begin{align}\label{bbb}
\sum_{I_1, I_2 \in \red \Ical^3}&\overline c_{I_1} \overline c_{I_2}\E_0\bigg[\langle I_1\star I_2,\widehat\W_t\rangle \exp\Big(iu_t{W_t^1}\Big)\bigg] \notag \\
&= e^{-\frac{u^2} 2}
\bigg[
\overline c_{11}^2\left(\frac{t^2}2+6(i  u_t)^2\frac{t^3}6+6(i  u_t)^4\frac{t^4}{24}\right)
+\overline c_{21}^2\left(2(iu_t)^2\frac {t^3}{6}+\frac {t^2}{2}\right)\bigg]+o(t^2).
\end{align}
We note that, in Eq.~(\ref{aaaeq}), the terms with coefficients $\overline c_{0}$ and $\overline c_{1}$ do not appear since those coefficients are zero and the terms with coefficients $\overline c_{00}, \overline c_{000}, \overline c_{001}$ etc. are folded into the error term. Similar considerations apply to Eq.~(\ref{bbb}). Eq.~(\ref{bbb}) also derives from the properties of the $\star$ operator. For example, the first term, e.g., is due to the fact that $11\star11 = 6(1111) + 2(011)+2(101)+2(110)+00,$ given Lemma \ref{prop}. 

Combining Eq.~(\ref{aaaeq}) and Eq.~(\ref{bbb}),  we obtain
\begin{align*}
\E_0\Big[\exp\Big(iu\frac{X_t-c_0t}{c_1\sqrt t}\Big)\Big]
&=e^{-\frac{u^2} 2}\bigg(1
+{iu}{ }
[(\overline c_{01}+\overline c_{10})(iu)\frac{t}2
+\overline c_{11}(i u)^2\frac{\sqrt t}2
+\overline c_{111}(i u)^3\frac{ t}6]\\
&\qquad-\frac{u^2}{2  }
[\overline c_{11}^2(\frac{t}2+6(i u)^2\frac{t}6+6(i u)^4\frac{t}{24})
+\overline c_{21}^2(2(iu)^2\frac {t}{6}+\frac {t}{2})]\bigg)+o(t)\\
&=e^{-\frac{u^2} 2}\bigg(1
+[-\overline c_{11}\frac{iu^3}2]{\sqrt t}\\
&\qquad
+\frac 1 2 [-(\overline c_{01}+\overline c_{10})u^2
+\overline c_{11}^2(-\frac 1 2 u^2+u^4-\frac{1}{4}u^6)
+\overline c_{21}^2(\frac {1}{3}u^4-\frac {1}{2}u^2)
+\overline c_{111}\frac{u^4}{3}]t\bigg)\\
&\qquad+o(t),
\end{align*}
thereby proving the claim.
\end{proof}

\section{Appendix: Proofs in the discontinuous case} \label{Appendix2}

\begin{proof}[Proof of Lemma \ref{newprop}]
We proceed by induction. Since $\langle J,\overline \Z_t\rangle=1$ for $J=\emptyset$ the result is clear if $|I|=0$ or $|J|=0$. Assume then that the claim holds for each $I,J$ such that $|I|+|J|\leq n-1$. Then, for each $I,J$ such that $|I|{+}|J|\leq n,$ 
by definition of signature and the product formula, we have that
\begin{align*}
\langle I,\overline\Z_t\rangle\langle J,\overline\Z_t\rangle
&=\bigg(\int_0^t\langle I',\overline\Z_s\rangle d\overline Z_s^{i_{|I|}}\bigg)\bigg(\int_0^t\langle J',\overline \Z_s\rangle d\overline Z_s^{j_{|J|}}\bigg)\\
&=\int_0^t\langle I', \overline \Z_s\rangle\langle J, \overline \Z_s\rangle d\overline Z_s^{i_{|I|}}
+\int_0^t\langle J', \overline \Z_s\rangle\langle I, \overline \Z_s\rangle d\overline Z_s^{j_{|J|}}\\
&\qquad+ \int_0^t\langle I', \overline \Z_s\rangle\langle J', \overline \Z_s\rangle d[\overline Z^{i_{|I|}},\overline Z^{j_{|J|}}]_s^c
+\sum_{s\leq t}\langle I', \overline \Z_s\rangle\langle J', \overline \Z_s\rangle \Delta\overline Z^{i_{|I|}}_s\Delta \overline Z^{j_{|J|}}_s.
\end{align*}
Since
$$ \int_0^t\langle I', \overline \Z_s\rangle\langle J', \overline \Z_s\rangle d[\overline Z^i, \overline Z^j]_s^c
=\int_0^t\langle I', \overline \Z_s\rangle\langle J', \overline \Z_s\rangle 1_{\{i=j>0\}}ds
=1_{\{i=j>0\}}\int_0^t\langle I', \overline \Z_s\rangle\langle J', \overline \Z_s\rangle d\overline Z^0_s
$$
and, for each $i,j<0,$
\begin{align*}
\sum_{s\leq t}\langle I', \overline \Z_s\rangle\langle J', \overline \Z_s\rangle \Delta\overline Z^i_s\Delta \overline Z^j_s
&=
\sum_{s\leq t}\langle I', \overline \Z_s\rangle\langle J', \overline \Z_s\rangle (\Delta  N_s^{\Lcal_1(i)})^{\Lcal_2(i)+\Lcal_2(j)}1_{\{\Lcal_1(i)=\Lcal_1(j)\}}\\
&=
\int_0^t \langle I', \overline \Z_s\rangle\langle J', \overline \Z_s\rangle dN_s^{\Lcal_1(i),\Lcal_2(i)+\Lcal_2(j)}1_{\{\Lcal_1(i)=\Lcal_1(j)\}}\\
&=
\int_0^t \langle I', \overline \Z_s\rangle\langle J', \overline \Z_s\rangle d\overline Z_s^{i\circ j}1_{\{\Lcal_1(i)=\Lcal_1(j)\}},
\end{align*}
the claim follows.
\end{proof}

\begin{proof}[Proof of Lemma \ref{lem_momentsJ}]
For for each $i<0$, we have
$$\E_0\left[\int_0^t H_s d\overline Z_s^i\right]=\E_0\left[\int_0^t H_s \lambda_{\Lcal_1(i)}\int  \xi^{\Lcal_2(i)} \nu_{\Lcal_1(i)}(d\xi)ds\right]
=\E_0\left[\int_0^t H_sds\right]\lambda_{\Lcal_1(i)}\int \xi^{\Lcal_2(i)} \nu_{\Lcal_1(i)}(d\xi).$$ 
Iterating over $H_s$ yields the result.
\end{proof}

\begin{proof}[Proof of Lemma \ref{lem9new}]
For each $j\in \{-1,\ldots, -e\cdot m\}$, write $\overline \lambda_j=\lambda_{\Lcal_1(j)}$ and 
$$\xi^{\star} \overline \nu_j(d\xi)=\xi^{\Lcal_2(j)} \nu_{\Lcal_1(j)}(d\xi).$$
 Using $(a+b)^{2K}\leq 2^{2K-1}(a^{2K}+b^{2K})$, an application of It\^o's formula yields
\begin{align*}
\E_0\Big[\Big(\int_0^tH_rd\overline { Z}_r^j\Big)^{2K}\Big]
&=\int_0^t\E_0\Big[\overline \lambda_j\int\Big(\int_0^sH_rd\overline { Z}_r^j+H_s\xi^\star\Big)^{2K}-\Big(\int_0^sH_rd\overline { Z}_r^j\Big)^{2K}\Big] \overline \nu_j(d\xi) ds\\
&\leq C_{2K}\int_0^t\overline \lambda_j\int\E_0\Big[\Big(\int_0^sH_rd\overline {Z}_r^j\Big)^{2K}+(H_s\xi^\star)^{2K}\Big] \overline \nu_j(d\xi)ds\\
&=C_{2K}\overline \lambda_j\int_0^t\E_0\Big[\Big(\int_0^sH_rd\overline {Z}_r^j\Big)^{2K}\Big]ds+C_{2K}\int_0^t\E_0[H_s^{2K} ]ds\Big(\overline \lambda_j\int(\xi^\star)^{2K}\overline \nu_j(d\xi)\Big).
\end{align*}
By Gr\"onwall's inequality, for $t\leq 1,$ this implies
$$\E_0\Big[\Big(\int_0^tH_rd\overline { Z}_r^j\Big)^{2K}\Big]\leq C_{2K}\int_0^t\E_0[H_s^{2K} ]ds \exp(C_{2K}\overline \lambda_j t)\leq C_{2K}\int_0^t\E_0[H_s^{2K} ]ds,$$
for a potentially changing constant $C_{2K}$. Proceeding now as in the proof of Lemma~\ref{lemm9}, we may conclude that
$$\E_0\bigg[\bigg(\int_0^t\int_0^{t_{|I|}}\cdots\int_0^{t_{2}} H_{t_{1}} d\overline Z_{t_{1}}^{i_{1}}\cdots d\overline Z_{t_{|I|}}^{i_{|I|}}\bigg)^{2K}\bigg]\leq  \frac{C_{2K}^{|I|-I(0)}t^{KI(>0)+2KI(0)+I(<0)}}{|I|!}\sup_{t\in[0,\delta]}\E_0[H_{t}^{2K}].$$
\end{proof}

\begin{proof}[Proof of Lemma \ref{lem3new}]
By the definition of signature, we know that $\langle  I, \overline \Z_t \rangle$ is a semimartingale. Its drift component is given by $\int_0^t \langle I',\overline \Z_s\rangle1_{\{i_{|I|}=0\}}ds$ and its continuous quadratic co-variation with respect to a generic component $J$ is 
$[\langle  I, \overline \Z \rangle,\langle  J, \overline \Z \rangle]_t=\int_0^t\langle I'\ostar J',\overline \Z_s\rangle\rho_{i_{|I|},j_{|J|}}ds.$ We define the quantity $\gamma_j(\overline \Z_s,\xi)$ as 
$$\langle I, \gamma_j(\overline \Z_s,\xi)\rangle
=\langle I',\overline \Z_s\rangle \xi^{\Lcal_2(i_{|I|})}1_{\{\Lcal_1(i_{|I|})=j\}}.$$
By It\^o's formula's, the drift of the process $f\big(\langle c,\overline \Z_t\rangle\big)\langle d,\overline \Z_t\rangle$ can be written as follows:
\begin{align*}
&\int_0^t f'\big(\langle c,\overline \Z_s\rangle\big)\langle d,\overline \Z_s\rangle\sum_{I\in \Ical_n}c_I\langle I',\overline \Z_s\rangle1_{\{i_{|I|}=0\}}ds \\
&\qquad+\int_0^t f\big(\langle c,\overline \Z_s\rangle\big)\sum_{H\in \Ical_n}d_H\langle H',\overline \Z_s\rangle1_{\{h_{|H|}=0\}}ds\\
&\qquad+\frac 1 2 \int_0^t f''\big(\langle c,\overline \Z_s\rangle\big)\langle d,\overline \Z_s\rangle\sum_{I,J\in \Ical_n}c_Ic_J\langle I'\overline \star J',\overline \Z_s\rangle\overline \rho_{i_{|I|},j_{|J|}}ds\\
&\qquad+\int_0^tf'\big(\langle c,\overline \Z_s\rangle\big)\sum_{I,H\in \Ical_n}c_Id_H\langle I'\overline \star H',\overline \Z_s\rangle\overline \rho_{i_{|I|},h_{|H|}}ds\\
&\qquad+\sum_{j=1}^e \lambda_j\int_0^t \int f\big(\langle c,\overline \Z_s+\gamma_j(\overline \Z_s,\xi)\rangle\big)\langle d,\overline \Z_s+\gamma_j(\overline \Z_s,\xi)\rangle
-f\big(\langle c,\overline \Z_s\rangle\big)\langle d,\overline \Z_s\rangle
 \nu_j(d\xi) ds,
\end{align*}
where the last term is the jump compensation. In the statement of the theorem, the second piece of the compensation is the second piece of $\Gcal_{k,\xi}^d(c,d)$ with $k=0$. The claim now follows.
\end{proof}

 \begin{proof}[Proof of Lemma \ref{lem2J}]
For convenience, but without loss of generality, we set $c_1=1$. Define
$$A(u):=\lambda_1\int (\exp(iuc_{-1}\xi)-1) \nu_1(d\xi).$$ Denoting by $\mu_1$ the jump measure of $N^1,$ by It\^o's Lemma it holds that   
\begin{eqnarray}\label{exx3}
 &&d\exp(iuc_{-1}N_t^1-A(uc_{-1})t) \notag\\
 &=&\exp(iuc_{-1}N^1_t - A(uc_{-1})t)\int (\exp(iuc_{-1}\xi)-1)(\mu_1(dt,d\xi)-\lambda_1\nu_1(d\xi)dt),
\end{eqnarray}
which is a martingale. Similarly, using Eq.~(\ref{exx}), we know that 
\begin{eqnarray} \label{exx1}
d\exp(iuW_t^1+u^2t/2)=iu\exp(iuW_t^1+u^2t/2)dW_t^1,
\end{eqnarray}
which is, also, a martingale. Now, define the quantity
$$V_t:=\exp(iuW_t^1+u^2t/2+iuc_{-1} N_t^1-A(uc_{-1})t).$$ 
Given Eq.~(\ref{exx3}) and Eq.~(\ref{exx1}), It\^o's Lemma yields
\begin{equation}\label{exx4}
dV_t = \exp(iuW_t^1+u^2t/2+iuc_{-1} N_t^1-A(uc_{-1})t)\left(\int (\exp(iuc_{-1}\xi)-1)(\mu_1(dt,d\xi)-\lambda_1\nu_1(d\xi)dt)+ iudW_t^1\right).
\end{equation}
Therefore, if $j=1,$
\begin{eqnarray}\label{aa}
d\langle Jj,\Z_t\rangle V_t = V_t \langle J,\Z_t\rangle dW_t^{j}+\langle Jj,\Z_t\rangle dV_t +iuV_t\langle J,\Z_t\rangle dt,
\end{eqnarray}
and, if $j=0,$
\begin{eqnarray}\label{bb}
d\langle J0,\Z_t\rangle V_t=V_t\langle J,\Z_t\rangle dt +\langle J0,\Z_t\rangle dV_t.
\end{eqnarray}
Also, if $j=-1,$ we have
\begin{eqnarray}\label{cc}
d\langle J -1,\Z_t\rangle V_t&=&V_t\langle J,\Z_t\rangle dN^{1}_t +\langle J -1,\Z_t\rangle dV_t \notag\\
&+&V_t\langle J,\Z_t\rangle\left(\lambda_1\int \left(\exp(iuc_{-1}\xi)-1\right)\xi \nu_1(d\xi)\right)dt,
\end{eqnarray}
and, if $j=-q,$ for some $q \neq 1,$
\begin{eqnarray}\label{dd}
d\langle J -q,\Z_t\rangle V_t=V_t\langle J,\Z_t\rangle dN^{q}_t +\langle J -q,\Z_t\rangle dV_t.
\end{eqnarray}
Eqs. (\ref{aa}), (\ref{bb}), (\ref{cc}) and (\ref{dd}) imply that 
 \begin{align*}
(\langle Jj,\Z_t\rangle V_t)=
&\int_0^tV_s\langle J, \Z_s\rangle B_u(j)ds
+\text{martingale},
\end{align*}
 for $B_{u,c_{-1}}(j):=iu1_{\{j=1\}}+1_{\{j=0\}}+\lambda_1\int(\exp(iuc_{-1}\xi)\xi)\nu_1(d\xi)1_{\{j=-1\}}+\lambda_{-j}\int \xi\nu_{-j}(d\xi)1_{\{j<-1\}}$. This result, in particular, implies that
$$ \E_0[\langle Jj, \Z_t\rangle V_t]
=B_{u,c_{-1}}(j)\int_0^t \E_0\left[\langle J, \Z_s\rangle V_s 
\right]ds.$$
Because $\E_0[V_t]=1$, proceeding recursively over $I\in \{-e,\ldots,d\}^k,$ we obtain
\begin{align*}
\E_0[\langle I, \Z_t\rangle V_t]
&=
B_{u,c_{-1}}(i_1)\cdots B_{u,c_{-1}}(i_k)\int_0^t\int_0^{t_1}\cdots \int_0^{t_{k-1}}\E_0[V_{t_k}] d {t_k}\cdots d {t_1}\\
&=(iu)^{I(1)}\Big(\lambda_1\int(\exp(iuc_{-1}\xi)\xi)\nu_1(d\xi)\Big)^{I(-1)}\prod^e_{j=2}\left(\lambda_j\int \xi\nu_j(d\xi)\right)^{I(-j)}\frac {t^k}{k! },
\end{align*}
which proves the claim.
\end{proof}

\begin{proof}[Proof of Theorem~\ref{prop2J}]
Given an application of Taylor's theorem and Corollary~\ref{cor1new}, we have
$$\bigg|\E_0\Big[\exp\Big(iu\frac{X_t-c_0t}{c_1\sqrt t}\Big)\Big]
-\E_0\Big[\exp\Big(iu\frac{\langle  c,\Z_t\rangle-c_0t}{c_1\sqrt t}\Big)\Big]\bigg|\leq \frac {C}{\sqrt t}\E_0[|\e_3(t)|]=o(t).$$
We can, therefore, replace $X_t$ with its expansion $\langle  c,\Z_t\rangle$. Next, note that
$$\frac{\langle c,\Z_t\rangle -c_0t}{c_1\sqrt t }
=\frac{c_1W^1_t+ c_{-1}N^1_t+ c_{-2}N^2_t}{c_1\sqrt t}+\frac{\langle \overline c,\Z_t\rangle}{\sqrt t },$$
where $\overline c_1=\overline c_{-1}=\overline c_{-2}=\overline{c}_0=0$ ($c_0$ is also equal to zero - given Eq.~(\ref{eqnX}) - without loss of generality), which implies $\exp\Big(iu\frac{\langle c,\Z_t\rangle - c_0t}{\sqrt t}\Big) = \exp\Big(iu\frac{c_1W^1_t+ c_{-1}N^1_t+ c_{-2}N^2_t}{c_1\sqrt t}\Big)\exp\Big(iu\frac{\langle \overline c,\Z_t\rangle}{\sqrt t }\Big).$

Another application of Taylor's theorem allows us to expand the second term in the above expression:
$$\exp\Big(iu\frac{\langle \overline c,\Z_t\rangle}{\sqrt t}\Big)
=1+\frac{iu}{\sqrt t}\langle \overline c,\Z_t\rangle
+
\frac{(iu)^2}{ 2t}\langle \overline c, \Z_t\rangle^2+ R_t,
$$
where $|R_t|\leq \frac C {t^{3/2}}|\langle \overline c,\Z_t\rangle^3|$.
Since $\overline c_I\neq 0$ only if $2I(0)+I(>0)\geq 2$, by Corollary~\ref{cor1new} we obtain
$$\E_0[|\langle \overline c,\overline \Z_t\rangle|^3]\leq \E_0[|\langle \overline c,\overline \Z_t\rangle|^4]^{3/4}\leq (C t^{2\times 2})^{3/4}=O(t^3),$$
and, hence,
$$\E_0\left[\left|\exp\Big(iu\frac{c_1W_t^1+ c_{-1}N_t^1+c_{-2}N_t^2}{c_1 \sqrt t}\Big)\right|\left|R_t\right|\right]\leq \mathbb{E}_0[|R_t|] =o(t).$$
 The claim now follows from the definition of the operator $\overline \star$.
 \end{proof}

\begin{proof}[Proof of Corollary \ref{corr}]
Set $V_t= \exp(iu\frac{c_1W_t^1+ c_{-1}N_t^1+c_{-2}N_t^2}{c_1 \sqrt t}).$ Recall that 
$\overline c_1=\overline c_{-1}=\overline c_{-2}=c_0=0$ and $\overline c_I=c_I/c_1$ for each $I\notin\{-2,-1,0,1\}.$ Using Theorem~\ref{prop2J} write
    \begin{align}\label{ggg}
&\E_0\Big[\exp\Big(iu\frac{X_t-c_0t}{c_1\sqrt t}\Big)\Big] \\
&=
\E_0[V_t] \notag \\
&+\frac{iu}{ t^{1/2} }
\Big(\overline c_{11}\E_0[\langle (1,1),\Z_t\rangle V_t]
+\overline c_{10}\E_0[\langle (1,0),\Z_t\rangle V_t]
+\overline c_{01}\E_0[\langle (0,1),\Z_t\rangle V_t]
+\overline c_{111}\E_0[\langle (1,1,1),\Z_t\rangle V_t]\Big) \notag \\
&+ \frac{(iu)^2}{2 t }
\Big(\overline c_{11}^2\E_0[\langle (1,1)\overline \star(1,1),\overline\Z_t\rangle V_t] + \overline c_{21}^2\E_0[\langle (2,1)\overline \star(2,1),\overline\Z_t\rangle V_t]
\Big) \notag \\
 &+o(t) \notag.
 \end{align}
 We note that, in the second term on the right-hand side of Eq.~(\ref{ggg}), the terms with coefficients $\overline c_{0}$, $\overline c_{1}$, $\overline c_{-1}$ and $\overline c_{-2}$ do not appear since those coefficients are zero and the terms with coefficients $\overline c_{00}, \overline c_{000}, \overline c_{001}$ etc. are folded into the error term. Consider, e.g., the term associated with $\overline c_{001}.$ By Corollary \ref{corollario}, its associated Fourier transform is of order $t^{5/2}$. Once we standardize by $\frac{1}{t^{1/2}}$, the term is of order $t^2$ and, therefore, $o(t).$ 
 
A similar logic applies to the third term on the right-hand side of Eq.~(\ref{ggg}), which also derives from the properties of the $\overline \star$ operator. The first term, e.g., hinges on the fact that $11\overline \star11 = 6(1111) + 2(011)+2(101)+2(110)+00,$ given Lemma \ref{newprop}.

Setting $A(u):=\lambda_1\int (\exp(iu\frac{c_{-1}}{c_1}\xi)-1) \nu_1(d\xi)+\lambda_2\int (\exp(iu\frac{c_{-2}}{c_1}\xi)-1) \nu_2(d\xi)$ and recalling Lemma~\ref{lem2J2}, we now have
\begin{eqnarray*}
&&\E_0\Big[\exp\Big(iu\frac{X_t-c_0t}{c_1\sqrt t}\Big)\Big]\\
&&=\exp\Big(-\frac{u^2}2+A(u)t\Big)\left(1+\Big[- \frac{c_{11}}{c_1}\frac{iu^3}2\Big]{\sqrt t}\right.\\
&\qquad&\left.+\frac 1 2\Big [-\Big(\frac{ c_{01}}{c_1}+\frac{c_{10}}{c_1}\Big)u^2
+\Big(\frac{ c_{11}}{c_1}\Big)^2\Big(-\frac 1 2 u^2+u^4-\frac{1}{4}u^6\Big)
+\Big(\frac{ c_{21}}{c_1}\Big)^2\Big(\frac {1}{3}u^4-\frac {1}{2}u^2\Big)+\frac{ c_{111}}{c_1}\frac{u^4}{3} \Big]t \right)\\
&&+o(t).
\end{eqnarray*}
The claim follows from noticing that $\exp(A(u)t)=1+A(u)t+o(t)$.
\end{proof}

\end{small}

\pagebreak

\bibliographystyle{abbrvnat}

\begin{thebibliography}{44}
\providecommand{\natexlab}[1]{#1}
\providecommand{\url}[1]{\texttt{#1}}
\expandafter\ifx\csname urlstyle\endcsname\relax
  \providecommand{\doi}[1]{doi: #1}\else
  \providecommand{\doi}{doi: \begingroup \urlstyle{rm}\Url}\fi

\bibitem[Abi~Jaber and G{\'e}rard(2024)]{AG:24}
E.~Abi~Jaber and L.-A. G{\'e}rard.
\newblock {Signature volatility models: pricing and hedging with Fourier}.
\newblock \emph{ArXiv e-prints}, 2024.
\newblock URL \url{https://arxiv.org/abs/2402.01820}.

\bibitem[Akyildirim et~al.(2023)Akyildirim, Gambara, Teichmann, and
  Zhou]{AGTZ:23}
E.~Akyildirim, M.~Gambara, J.~Teichmann, and S.~Zhou.
\newblock Randomized signature methods in optimal portfolio selection.
\newblock \emph{Preprint arXiv:2312.16448}, 2023.

\bibitem[Azencott(1982)]{azencott2006formule}
R.~Azencott.
\newblock Formule de taylor stochastique et d{\'e}veloppement asymptotique
  d’int{\'e}grales de feynmann.
\newblock In \emph{S{\'e}minaire de Probabilit{\'e}s XVI}, pages 237--284.
  Springer, 1982.

\bibitem[Bandi and Ren{\`o}(2017)]{BR:17}
F.~M. Bandi and R.~Ren{\`o}.
\newblock Local edgeworth expansions.
\newblock \emph{Available at SSRN}, 2017.

\bibitem[Bandi et~al.(2023)Bandi, Fusari, and Ren{\`o}]{BFR}
F.~M. Bandi, N.~Fusari, and R.~Ren{\`o}.
\newblock 0dte option pricing.
\newblock \emph{Available at SSRN}, 2023.

\bibitem[Barndorff-Nielsen and Hubalek(2008)]{barndorff2008probability}
O.~E. Barndorff-Nielsen and F.~Hubalek.
\newblock Probability measures, l{\'e}vy measures and analyticity in time.
\newblock 2008.

\bibitem[Bayer et~al.(2023)Bayer, Hager, Riedel, and Schoenmakers]{BHRS:21}
C.~Bayer, P.~Hager, S.~Riedel, and J.~Schoenmakers.
\newblock {Optimal stopping with signatures}.
\newblock \emph{The Annals of Applied Probability}, 33\penalty0 (1):\penalty0
  238--273, 2023.

\bibitem[Ben~Arous(1989)]{arous1989flots}
G.~Ben~Arous.
\newblock Flots et s{\'e}ries de taylor stochastiques.
\newblock \emph{Probability Theory and Related Fields}, 81:\penalty0 29--77,
  1989.

\bibitem[Bentata and Cont(2012)]{bentata2012short}
A.~Bentata and R.~Cont.
\newblock Short-time asymptotics for marginal distributions of semimartingales.
\newblock \emph{arXiv preprint arXiv:1202.1302}, 2012.

\bibitem[Bismut(1981)]{bismut1981mechanique}
J.-M. Bismut.
\newblock Mechanique al{\'e}atoire lecture notes in mathematics 866, 1981.

\bibitem[Buehler et~al.(2020)Buehler, Horvath, Lyons, Perez~Arribas, and
  Wood]{buehler2020generating}
H.~Buehler, B.~Horvath, T.~Lyons, I.~Perez~Arribas, and B.~Wood.
\newblock Generating financial markets with signatures.
\newblock \emph{Available at SSRN 3657366}, 2020.

\bibitem[Chen(1957)]{C:57}
K.-T. Chen.
\newblock Integration of paths, geometric invariants and a generalized
  {B}aker-{H}ausdorff formula.
\newblock \emph{Annals of Mathematics}, page 163–178, 1957.

\bibitem[Chen(1977)]{C:77}
K.-T. Chen.
\newblock Iterated path integrals.
\newblock \emph{Bulletin of the American Mathematical Society}, page 831–879,
  1977.

\bibitem[Chong and Todorov(2022)]{chong2022short}
C.~H. Chong and V.~Todorov.
\newblock Short-time expansion of characteristic functions in a rough
  volatility setting with applications.
\newblock \emph{Bernoulli, forthcoming}, 2022.

\bibitem[Chong and Todorov(2024)]{chong2024volatility}
C.~H. Chong and V.~Todorov.
\newblock Volatility of volatility and leverage effect from options.
\newblock \emph{Journal of Econometrics}, 240\penalty0 (1):\penalty0 105669,
  2024.

\bibitem[Cuchiero and Möller(2023)]{CM:22}
C.~Cuchiero and J.~Möller.
\newblock Signature methods for stochastic portfolio theory.
\newblock \emph{ArXiv e-prints}, 2023.
\newblock URL \url{https://arxiv.org/abs/2310.02322}.

\bibitem[Cuchiero et~al.(2022)Cuchiero, Primavera, and Svaluto-Ferro]{CPS:22}
C.~Cuchiero, F.~Primavera, and S.~Svaluto-Ferro.
\newblock {Universal approximation theorems for continuous functions of
  c\`adl\`ag paths and L\'evy-type signature models}.
\newblock \emph{Finance and Stochastics, forthcoming}, 2022.

\bibitem[Cuchiero et~al.(2023)Cuchiero, Gazzani, and Svaluto-Ferro]{CGS:23}
C.~Cuchiero, G.~Gazzani, and S.~Svaluto-Ferro.
\newblock Signature-based models: Theory and calibration.
\newblock \emph{SIAM Journal on Financial Mathematics}, 14\penalty0
  (3):\penalty0 910--957, 2023.

\bibitem[Cuchiero et~al.(2024)Cuchiero, Gazzani, M\"oller, and
  Svaluto-Ferro]{CGMS:23}
C.~Cuchiero, G.~Gazzani, J.~M\"oller, and S.~Svaluto-Ferro.
\newblock Joint calibration to {SPX} and {VIX} options with signature-based
  models.
\newblock \emph{Mathematical Finance}, 2024.

\bibitem[Duffie et~al.(2003)Duffie, Filipovi{\'c}, and
  Schachermayer]{duffie2003affine}
D.~Duffie, D.~Filipovi{\'c}, and W.~Schachermayer.
\newblock Affine processes and applications in finance.
\newblock \emph{The Annals of Applied Probability}, 13\penalty0 (3):\penalty0
  984--1053, 2003.

\bibitem[Figueroa-L{\'o}pez and Houdr{\'e}(2009)]{figueroa2009small}
J.~E. Figueroa-L{\'o}pez and C.~Houdr{\'e}.
\newblock Small-time expansions for the transition distributions of l{\'e}vy
  processes.
\newblock \emph{Stochastic Processes and their applications}, 119\penalty0
  (11):\penalty0 3862--3889, 2009.

\bibitem[Figueroa-L{\'o}pez and Luo(2018)]{figueroa2018small}
J.~E. Figueroa-L{\'o}pez and Y.~Luo.
\newblock Small-time expansions for state-dependent local jump--diffusion
  models with infinite jump activity.
\newblock \emph{Stochastic Processes and their Applications}, 128\penalty0
  (12):\penalty0 4207--4245, 2018.

\bibitem[Figueroa-L{\'o}pez et~al.(2014)Figueroa-L{\'o}pez, Luo, and
  Ouyang]{figueroa2014small}
J.~E. Figueroa-L{\'o}pez, Y.~Luo, and C.~Ouyang.
\newblock Small-time expansions for local jump-diffusion models with infinite
  jump activity.
\newblock \emph{Bernoulli}, 20\penalty0 (3):\penalty0 1165--1209, 2014.

\bibitem[Friz and Victoir(2008)]{friz2008euler}
P.~Friz and N.~Victoir.
\newblock Euler estimates for rough differential equations.
\newblock \emph{Journal of Differential Equations}, 244\penalty0 (2):\penalty0
  388--412, 2008.

\bibitem[Gatheral et~al.(2018)Gatheral, Jaisson, and
  Rosenbaum]{gatheral2018volatility}
J.~Gatheral, T.~Jaisson, and M.~Rosenbaum.
\newblock Volatility is rough.
\newblock \emph{Quantitative finance}, 18\penalty0 (6):\penalty0 933--949,
  2018.

\bibitem[Ishikawa(1994)]{ishikawa1994asymptotic}
Y.~Ishikawa.
\newblock Asymptotic behavior of the transition density for jump type processes
  in small time.
\newblock \emph{Tohoku Mathematical Journal, Second Series}, 46\penalty0
  (4):\penalty0 443--456, 1994.

\bibitem[Ishikawa(2001)]{ishikawa2001density}
Y.~Ishikawa.
\newblock Density estimate in small time for jump processes with singular
  l{\'e}vy measures.
\newblock \emph{Tohoku Mathematical Journal, Second Series}, 53\penalty0
  (2):\penalty0 183--202, 2001.

\bibitem[Jacod(2007)]{jacod2007asymptotic}
J.~Jacod.
\newblock Asymptotic properties of power variations of l{\'e}vy processes.
\newblock \emph{ESAIM: Probability and Statistics}, 11:\penalty0 173--196,
  2007.

\bibitem[Jacod and Todorov(2014)]{jacod2014efficient}
J.~Jacod and V.~Todorov.
\newblock Efficient estimation of integrated volatility in presence of infinite
  variation jumps.
\newblock \emph{Annals of Statistics}, 42:\penalty0 1029--1069, 2014.

\bibitem[Kalsi et~al.(2020)Kalsi, Lyons, and Imanol]{KLP:20}
J.~Kalsi, T.~Lyons, and P.-A. Imanol.
\newblock Optimal execution with rough path signatures.
\newblock \emph{SIAM Journal on Financial Mathematics}, 11\penalty0
  (2):\penalty0 470--493, 2020.

\bibitem[L{\'e}andre(1987)]{leandre1987densite}
R.~L{\'e}andre.
\newblock Densit{\'e} en temps petit d'un processus de sauts.
\newblock In \emph{S{\'e}minaire de probabilit{\'e}s XXI}, pages 81--99.
  Springer, 1987.

\bibitem[Lemahieu et~al.(2023)Lemahieu, Boudt, and Wyns]{LBW:23}
E.~Lemahieu, K.~Boudt, and M.~Wyns.
\newblock {Generating drawdown-realistic financial price paths using path
  signatures}.
\newblock \emph{Preprint arXiv:2309.04507}, 2023.

\bibitem[Lyons(1998)]{L:98}
T.~Lyons.
\newblock Differential equations driven by rough signals.
\newblock \emph{Revista Matem{\'a}tica Iberoamericana}, 14\penalty0
  (2):\penalty0 215--310, 1998.

\bibitem[Lyons and Victoir(2004)]{lyons2004cubature}
T.~Lyons and N.~Victoir.
\newblock Cubature on wiener space.
\newblock \emph{Proceedings of the Royal Society of London. Series A:
  Mathematical, Physical and Engineering Sciences}, 460\penalty0
  (2041):\penalty0 169--198, 2004.

\bibitem[Lyons et~al.(2020)Lyons, Nejad, and Perez~Arribas]{LNP:20}
T.~Lyons, S.~Nejad, and I.~Perez~Arribas.
\newblock {Non-parametric Pricing and Hedging of Exotic Derivatives}.
\newblock \emph{{Applied Mathematical Finance}}, 27\penalty0 (6):\penalty0
  457--494, 2020.

\bibitem[Marchal(2009)]{marchal2009small}
P.~Marchal.
\newblock Small time expansions for transition probabilities of some l{\'e}vy
  processes.
\newblock \emph{Electronic Communications in Probability}, 14:\penalty0
  132--142, 2009.

\bibitem[Ning et~al.(2023)Ning, Chakraborty, and Lee]{NCL:23}
B.~Ning, P.~Chakraborty, and K.~Lee.
\newblock {Optimal Entry and Exit with Signature in Statistical Arbitrage}.
\newblock \emph{Preprint arXiv:2309.16008}, 2023.

\bibitem[Perez~Arribas et~al.(2020)Perez~Arribas, Salvi, and Szpruch]{PSS:20}
I.~Perez~Arribas, C.~Salvi, and L.~Szpruch.
\newblock {Sig-SDEs model for quantitative finance}.
\newblock In \emph{Proceedings of the First ACM International Conference on AI
  in Finance}, pages 1--8, 2020.

\bibitem[Picard(1996)]{picard1996existence}
J.~Picard.
\newblock On the existence of smooth densities for jump processes.
\newblock \emph{Probability Theory and Related Fields}, 105:\penalty0 481--511,
  1996.

\bibitem[Picard(1997)]{picard1997density}
J.~Picard.
\newblock Density in small time at accessible points for jump processes.
\newblock \emph{Stochastic processes and their applications}, 67\penalty0
  (2):\penalty0 251--279, 1997.

\bibitem[Platen(1982)]{platen1982generalized}
E.~Platen.
\newblock A generalized taylor formula for solutions of stochastic equations.
\newblock \emph{Sankhy{\=a}: The Indian Journal of Statistics, Series A}, pages
  163--172, 1982.

\bibitem[R{\"u}schendorf and Woerner(2002)]{ruschendorf2002expansion}
L.~R{\"u}schendorf and J.~H. Woerner.
\newblock Expansion of transition distributions of l{\'e}vy processes in small
  time.
\newblock \emph{Bernoulli}, 8\penalty0 (1):\penalty0 81--96, 2002.

\bibitem[Todorov(2019)]{todorov2019nonparametric}
V.~Todorov.
\newblock Nonparametric spot volatility from options.
\newblock \emph{The Annals of Applied Probability}, 29\penalty0 (6):\penalty0
  3590--3636, 2019.

\bibitem[Todorov(2021)]{To:21}
V.~Todorov.
\newblock {Higher-order small time asymptotic expansion of It{\^o}
  semimartingale characteristic function with application to estimation of
  leverage from options}.
\newblock \emph{Stochastic Processes and their Applications}, 142:\penalty0
  671--705, 2021.

\end{thebibliography}

\end{document}